\documentclass[conference]{IEEEtran}

\title{Axiomatizations and Computability of Weighted Monadic Second-Order Logic}

\author{\IEEEauthorblockN{Antonis Achilleos}
  
  \IEEEauthorblockA{Department of Computer Science\\
    Reykjavik University,
    Reykjavik, Iceland\\
    Email: antonios@ru.is}
  \and
  \IEEEauthorblockN{Mathias Ruggaard Pedersen}
  \IEEEauthorblockA{Department of Computer Science\\
    Reykjavik University,
    Reykjavik, Iceland\\
    Email: mathias.r.pedersen@gmail.com}
}

\usepackage[utf8]{inputenc}
\usepackage{amsmath}
\usepackage{amssymb}
\usepackage{amsthm}
\usepackage{stmaryrd} 
\usepackage{url}
\usepackage{complexity}
\usepackage{array}
\usepackage{relsize}
\usepackage{tabu}
\usepackage{xcolor}

\newtheorem{theorem}{Theorem}
\newtheorem{lemma}{Lemma}
\newtheorem{proposition}{Proposition}
\newtheorem{corollary}{Corollary}

\theoremstyle{remark}
\newtheorem{remark}{Remark}

\theoremstyle{definition}
\newtheorem{definition}{Definition}
\newtheorem{example}{Example}

\newcommand{\subparagraph}[1]{\paragraph*{#1}}

\newcommand{\true}{\top}
\newcommand{\false}{\bot}
\newcommand{\sat}[1]{\left\llbracket #1 \right\rrbracket}
\newcommand{\ra}{\rightarrow}
\newcommand{\nil}{\mathbf{0}}
\newcommand{\cond}{\mathbin{?}}

\newcommand{\multiset}[1]{\{\!| #1 |\!\}}
\newcommand{\multi}[1]{\mathbb{N}\multiset{#1}}
\newcommand{\munion}{\uplus}
\newcommand{\bigmunion}{\biguplus}
\newcommand{\aggr}{\mathtt{aggr}}

\newcommand{\swFO}{\textsf{step-wFO}}
\newcommand{\cwFO}{\textsf{core-wFO}}
\newcommand{\MSO}{\textsf{MSO}}
\newcommand{\swMSO}{\textsf{step-wMSO}}
\newcommand{\cwMSO}{\textsf{core-wMSO}}

\newcommand{\var}{\mathtt{var}}

\newcommand{\depth}{\mathtt{depth}}

\newcommand{\ncond}{\#?}

\newcommand{\eqeq}{\pmb{=}}
\newcommand{\defeq}{\stackrel{\textsf{def}}{=}}

\newcommand{\tend}{\ensuremath{{\triangleleft}}}
\newcommand{\mi}{\ensuremath{\texttt{m}}}
\newcommand{\cnt}{\ensuremath{\texttt{1}}}
\newcommand{\blank}{\_}

\newcommand{\sym}[1]{\ensuremath{\texttt{smbl}(#1)}}
\newcommand{\first}[1]{\ensuremath{\texttt{first}(#1)}}
\newcommand{\firstc}[1]{\ensuremath{\texttt{1st-cf}(#1)}}
\newcommand{\firstcx}[1]{\ensuremath{\texttt{1st-cf-x}(#1)}}
\newcommand{\last}[1]{\ensuremath{\texttt{last}(#1)}}
\newcommand{\nxt}[1]{\ensuremath{\texttt{nxt}(#1)}}
\newcommand{\nxtsymb}[1]{\ensuremath{\texttt{nxt-sm}(#1)}}
\newcommand{\valone}[1]{\ensuremath{v_1(#1)}}
\newcommand{\valv}[2]{\ensuremath{v_{#2}^{#1}}}
\newcommand{\valzer}{\ensuremath{\nil}}
\newcommand{\pos}[2]{\ensuremath{\texttt{ps}_{#2}(#1)}}
\newcommand{\tr}[1]{\ensuremath{\texttt{tr}(#1)}}
\newcommand{\trprime}[1]{\ensuremath{\texttt{tr'}(#1)}}

\newcommand{\eqform}{\chi}
\newcommand{\swform}{\Psi}
\newcommand{\cwform}{\Phi}

\newcommand{\wgt}{\texttt{wgt}}
\newcommand{\nat}{\mathbb{N}}

\newcommand{\sep}{,}

\begin{document}

\IEEEoverridecommandlockouts
\IEEEpubid{\makebox[\columnwidth]{978-1-6654-4895-6/21/\$31.00~
    \copyright2021 IEEE \hfill} \hspace{\columnsep}\makebox[\columnwidth]{ }}

\maketitle

\begin{abstract}
  Weighted monadic second-order logic is a weighted extension
  of monadic second-order logic 
  that
  captures exactly the behaviour of weighted automata. 
  Its semantics is parameterized with respect to a semiring on which the values that weighted formulas output are evaluated.
  Gastin and Monmege (2018) gave abstract semantics for a version of weighted monadic second-order logic to give a more general and modular proof of the equivalence of the logic with weighted automata.
  We focus on the abstract semantics of the logic and we 
  give a complete axiomatization both for the full logic
  and for a fragment without general sum,
  thus giving a more fine-grained understanding of the logic.
  We discuss how common decision problems for logical languages
  can be adapted to the weighted setting,
  and show that many of these are decidable,
  though they inherit bad complexity
  from the underlying first- and second-order logics.
  However, we show that a weighted adaptation of satisfiability is undecidable
  for the logic when one uses the abstract interpretation.
\end{abstract}

\begin{IEEEkeywords}
  weighted logic\sep~ monadic second-order logic\sep~ axiomatization\sep~ weighted automata\sep~ satisfiability
\end{IEEEkeywords}

\section{Introduction}

Weighted logics are a quantitative generalization of classical logics
that allows one to reason about quantities such as probabilities,
cost, production, or energy consumption in systems \cite{DG07,GM18,KR13}.
These kinds of logic are important,
since they allow us to describe not only that, 
for example, 
a certain task was completed,
but also that only a specific amount of resources
were consumed in order to complete the task.
One of the main results of the theory of weighted logics
is the correspondence between weighted monadic second-order logic \cite{DG07}
or quantitative monadic second-order logic \cite{KR13} and weighted automata,
thus generalizing the classical result of B\"{u}chi, Elgot, and Trakhtenbrot \cite{B60,E61,T61},
of the equivalence  
between
classical finite automata and monadic second-order logic (MSO). 
This is important because it shows that weighted or quantitative 
MSO
are well-suited to reason about weighted automata,
which themselves are a popular tool for modeling systems,
having found applications in areas such as image compression \cite{CK93}
and natural language processing \cite{KM09,MPR96}.
The correspondence between weighted 
MSO
and weighted automata has been adapted to many other computational models,
such as weighted Muller tree automata \cite{R07}
and weighted picture automata \cite{F11}.

Complete axiomatizations for weighted logics
as well as their decision problems have 
been well-studied
in the context of weighted extensions of modal logics
for weighted transition systems. 
Larsen and Mardare \cite{LM14} gave a complete axiomatization for weighted modal logic on weighted transition systems%
,
and they later extended this work to also consider concurrency \cite{LMX15a}.
Hansen et al. 
gave
a
complete axiomatization 
for a logic
to reason about bounds in weighted transitions systems in \cite{HLMP18},
where
they also show
the decidability of the satisfiability problem. 
Larsen  et al. proved in \cite{LMX18} that
the satisfiability problem 
is
decidable for a weighted logic
with recursion, in which recursive equations can describe
infinite behaviour.
Similarly, Larsen et al. gave in \cite{LMX15b} both a complete axiomatization and a decision procedure
for satisfiability 
for the alternation-free fragment of
a weighted extension of the $\mu$-calculus on weighted transition systems.

In contrast to these weighted extensions of modal logics, and to 
MSO
and first-order logic (FO),
which have been well-studied for decades,
there has not been a study of 
the 
weighted extensions 
of MSO and FO
from the axiomatic point of view.
In this paper we initiate this study by giving axiomatisations
of 
the equational theory of
fragments of weighted 
MSO, 
as formulated in \cite{GM18}, 
and considering
the decidability of some of its decision problems.
The weighted variation
\cwMSO\ of MSO and its abstract interpretation  was proposed by Gastin and Monmege in \cite{GM18} to prove a general result about the correspondence between weighted MSO and weighted automata.
The approach of Gastin and Monmege is modular, in that 
both the syntax and the semantics of \cwMSO\ is given in layers.
A formula or automaton is first given an abstract interpretation, by returning a multiset of strings of weights; and then an aggregator function maps that multiset into the desired structure of values --- typically a semiring. 
By proving the correspondence of formulas and automata, the result of Gastin and Monmege does not depend upon the specific structure of the weights.
Furthermore, 
\cwMSO\ has three syntactic layers, each with different characteristics.

\subsubsection*{Our contribution}
We give three complete axiomatizations: 
one for the full second syntactic layer
of weighted MSO, one for a fragment of 
the third and final syntactic layer,
and one for the full third layer.
Each of the three axiomatizations exhibits different 
characteristics and machinery, and therefore our presentation of the axioms allows for axiomatizations that are taylored to each fragment of \cwMSO.
Due to the modular nature of \cwMSO, these axiomatizations also apply to \cwFO, the first-order version of the logic.
We prove that the equivalence problem for weighted automata under the abstract interpretation, can be solved in polynomial time (Corollary \ref{cor:automataP}).
We also show that the model checking, satisfiability, and validity problems,
appropriately translated to the equational, weighted setting,
are decidable for the second layer of the logic,
although these inherit the 
\PSPACE-completeness and non-elementary complexity
from MSO
for model checking and, respectively,  
for satisfiability and validity.
However, for the third layer of the logic, things are more complicated.
The model checking and validity problems remain decidable,
but we show that the satisfiability problem is undecidable, even for the first-order fragment.

\subsubsection*{Related work}
Weighted MSO was introduced by Droste and Gastin in \cite{DG07}.
The version that we study in this paper was defined by Gastin and Monmege in \cite{GM18}, where they prove that it is equivalent to weighted automata.
Droste and Gastin defined a number of first-order restrictions of the logic in \cite{DG19}, where they prove correspondence with corresponding restrictions of weighted automata.
Naturally, more variations of weighted MSO have appeared, for example, extending the logic with multiple weights \cite{DP16}, or on infinite words \cite{DM12}.
To the best of our knowledge, the present paper is the first to give an axiomatization for \cwMSO.
One can find several results about the decidability and complexity of problems about weighted automata in the literature, and these tend to vary, depending on the structure of the weights.
Already from \cite{S61}, Sch\"{u}tzenberger proves that determining the equivalence of $(\mathbb{Q},+,\cdot)$-weighted automata can be done in polynomial time, and we show the same result for the abstract semantics using a different proof.
On the other hand, the same problem over the $(\mathbb{Q},\max,+)$ semiring is undecidable \cite{ABK20,Krob1994}.
Droste and Gastin in \cite{DG07} show that the equivalence problem for formulas --- which we call equational validity in this paper --- over computable, commutative, and locally finite semirings, is decidable.
This and other decidability results (for example, \cite{DP16} and \cite{DM12}) for weighted MSO result from the translation of formulas to automata.
This paper provides an alternative method to decide equational validity, by a proof system.
A problem similar to this paper's equational satisfiability is proven undecidable by Bollig and Gastin in \cite{BG09} for probabilistic logics on trees.
To the best of our knowledge, this paper is the first effort to tackle the decidability of equational satisfiability and validity of \cwMSO\ over the abstract semantics.

%
%


\section{Preliminaries}
Given a set $X$, we denote by $X^*$ the set of words over $X$,
and by $X^+$ the set of non-empty words over $X$.
Given a word $w$, we denote by $|w|$ the length of $w$.

Denote by $\multi{X}$ the collection of all finite multisets over $X$,
where a finite multiset is a function $f : X \ra \mathbb{N}$
such that $f(x) \neq 0$ for finitely many $x \in X$.
Intuitively, $f(x)$ tells us how many times the element $x$ occurs in the multiset $f$.
We will use $\multiset{\cdot}$ to denote a multiset,
so that e.g. $\multiset{1,1,2,3}$ is the multiset that contains two $1$'s, one $2$, and one $3$.
The union $\munion$ of two multisets $f$ and $g$ is defined pointwise as
\((f \munion g)(x) = f(x) + g(x).\)

A semiring is a tuple $(X, +, \times, 0, 1)$ such that
$(X, \times, 1)$ is a monoid ($\times$ is an associative binary operation on $X$, with $1$ as an identity element),
$(X, +, 0)$ is a commutative monoid (it is a monoid and $+$ is commutative),
$\times$ distributes over $+$,
and $0 \times x = x \times 0 = 0$ for all $x \in X$.
Some common examples of semirings are $(\mathbb{Z}, +, \times, 0, 1)$,
the integers with the usual sum and product,
and $(\{0,1\}, \lor, \land, 0, 1)$, the Boolean semiring with the usual
Boolean disjunction and conjunction.
For our purposes, another important example of semirings
is that of $(\multi{X^*}, \munion, \cdot, \emptyset, \multiset{\varepsilon})$,
the semiring over multisets of sequences over $X$, with multiset union as sum,
concatenation as product, the empty set as zero,
and the multiset containing only the empty string once as identity.

\section{Syntax and semantics}
Our presentation of weighted MSO
follows the style of \cite{GM18},
in which the logic is separated into three different layers.
The first layer is simply MSO.
The second layer is called \swMSO\ and
is built from single values and 
if-then-else statements with
MSO formulas as conditions.
The name of this comes from the fact that
its semantics describes step functions,
that is, functions that return values from a fixed finite set of weights.
The third and last layer is called \cwMSO\  and
allows one to combine the single values from the second layer into
more complex expressions using sums and products.

We use
a countably infinite set of first-order variables $\mathcal{V}_{FO}$,
a countably infinite set of second-order variables $\mathcal{V}_{SO}$,
a finite alphabet $\Sigma$, and an arbitrary set $R$ of weights.
The syntax of weighted MSO is given by the following grammar.

\noindent \MSO:
  \[\varphi ::= \true \mid P_a(x) \mid x \leq y \mid x \in X \mid \neg \varphi \mid \varphi_1 \land \varphi_2 \mid \forall x. \varphi \mid \forall X. \varphi\]
  \swMSO:
  \[\swform ::= r \mid \varphi \cond \swform_1 : \swform_2\]
  \cwMSO:
  \[\cwform ::= \nil \mid \textstyle{\prod_x} \swform \mid \varphi \cond \cwform_1 : \cwform_2 \mid \cwform_1 + \cwform_2 \mid \textstyle{\sum_x \cwform} \mid \textstyle{\sum_X \cwform}\]
where $a \in \Sigma$, $r \in R$, $x,y \in \mathcal{V}_{FO}$, and $X \in \mathcal{V}_{SO}$.
In the rest of the paper, we use $\varphi$ to denote \MSO\ formulas, $\swform$ to denote \swMSO\ formulas,  $\cwform$ to denote \cwMSO\ formulas, and $\eqform$ to denote \swMSO\ or \cwMSO\ formulas.

In a similar fashion, we obtain \swFO\ by only allowing conditioning on first-order formulas in \swMSO\
and \cwFO\ by only allowing first-order formulas and removing the construct $\sum_X \cwform$
which sums over a second-order variable.

We will use \valzer\ as the default (negative) value in conditionals,
and as such $\varphi \cond \cwform$ is used as shorthand for $\varphi \cond \cwform : \valzer$.
Furthermore, we assume that $:$ binds to the nearest $\cond$, and therefore, $\varphi_1 \cond \varphi_2 \cond \cwform_1 : \cwform_2$ means $\varphi_1 \cond \varphi_2 \cond \cwform_1 : \cwform_2 : \valzer$, which can be uniquely parsed as $\varphi_1 \cond (\varphi_2 \cond \cwform_1 : \cwform_2) : \valzer$.
For a \swMSO\ formula $\swform$, $R(\swform) = \{ r \in R \mid r $ appears in $\swform \}$; for brevity, we may write $r \in \swform$ instead of $r \in R(\swform)$.

We note here that in earlier work of Droste and Gastin \cite{DG07},
a different formulation of weighted MSO was given.
There, the syntax was essentially the same as the syntax for
classical monadic second-order logic,
and the semantics were given as a function from words and valuations to elements of $R$.
However, one can translate between the formulation presented here
and a restricted version of the formulation of \cite{DG07},
as was shown in \cite[Section 5]{GM18}.
We choose to follow the formulation of \cite{GM18}
because this gives a cleaner correspondence with weighted automata,
whereas the earlier formulation of weighted MSO required a (not fully syntactic)
restriction in order to obtain a correspondence with weighted automata,
and because the abstract semantics of this formulation allows us to
focus on the syntactic level, which is ideal for an axiomatization.

The formulas $\varphi$ of \MSO\ are interpreted over words $w \in \Sigma^+$
together with a valuation $\sigma$ of this word,
which assigns to each first-order variable a position in the word
and to each second-order variable a set of positions in the word.

When interpreted on a string, a formula outputs a value, which, concretely, may be a single weight, a sequence, or, say, a set (or multiset) of more elementary values.
To preserve the generality of the logic, the semantics are given in two steps.
The first is an abstract semantics, where the meaning of a formula
is given as a multiset of sequences of weights.
The second is a concrete semantics, where one can translate the abstract semantics
into a given semiring structure, by assuming an appropriate operator on the abstract values.

We denote by $\Sigma^+_{val}$ the set of pairs $(w,\sigma)$ where $w \in \Sigma^+$
and $\sigma$ is a valuation of $w$.
Let $x$ be a first-order (respectively, let $X$ be a second-order) variable and $i \in \{ 1, \dots, |w| \}$ (respectively, $I \subseteq \{ 1, \dots, |w| \}$).
By $\sigma[x \mapsto i]$ (respectively $\sigma[X \mapsto I]$) we denote the valuation that maps each variable $y$ and $Y$ to $\sigma(y)$ and $\sigma(Y)$, if $y \neq x$ (respectively, if $Y \neq X$), and $x$ to $i$ (respectively, $X$ to $I$).
The semantics of \MSO\ on finite words is standard and can be found in e.g. 
\cite{L04}.
%
In this paper, $(w,\sigma)$ will always be a pair from $\Sigma^+_{val}$.

We denote by $\sat{\varphi}$ the set of all pairs $(w,\sigma) \in \Sigma^+_{val}$
that satisfy $\varphi$.
Likewise, for a set $\Gamma$ of \MSO\ formulas,
we define
\[\sat{\Gamma} = \begin{cases} \Sigma^+_{val} & \text{if } \Gamma = \emptyset \\ \bigcap_{\varphi \in \Gamma} \sat{\varphi} & \text{otherwise}. \end{cases}\]
The semantics of formulas $\swform$ of \swMSO\ is given
by a function $\sat{\cdot} : \Sigma^+_{val} \ra R$,
defined by $\sat{r}(w,\sigma) = r$ and
\[
\sat{\varphi \cond \swform_1 : \swform_2}(w,\sigma) = \begin{cases} \sat{\swform_1}(w,\sigma) \text{ if } (w,\sigma) \models \varphi \\ \sat{\swform_2}(w,\sigma) \text{ otherwise.}\end{cases}
\]
The semantics of formulas $\cwform$ of \cwMSO\
is given by the function $\sat{\cdot} : \Sigma^+_{val} \ra \multi{R^*}$:
\begin{align*}
  \sat{\nil}(w,\sigma) &= \emptyset \\
  \sat{\textstyle{\prod_x} \swform}(w,\sigma) &= \multiset{r_1 r_2 \dots r_{|w|}}, r_i = \sat{\swform}(w,\sigma[x {\mapsto} i]) \\
  \sat{\varphi \cond \cwform_1 : \cwform_2}(w,\sigma) &= \begin{cases} \sat{\cwform_1}(w,\sigma), \text{ if } (w,\sigma) \models \varphi \\ \sat{\cwform_2}(w,\sigma), \text{ otherwise}\end{cases} \\
  \sat{\cwform_1 + \cwform_2}(w,\sigma) &= \sat{\cwform_1}(w,\sigma) \munion \sat{\cwform_2}(w,\sigma) \\
  \sat{\textstyle{\sum_x} \cwform}(w,\sigma) &= \bigmunion_{i \in \{1, \dots, |w|\}} \sat{\cwform}(w,\sigma[x \mapsto i]) \\
  \sat{\textstyle{\sum_X} \cwform}(w,\sigma) &= \bigmunion_{I \subseteq \{1, \dots, |w|\}} \sat{\cwform}(w, \sigma[X \mapsto I])
\end{align*}

Let $\Gamma$ be a set of \MSO\ formulas.
We say that two formulas $\eqform_1$ and $\eqform_2$
are semantically $\Gamma$-equivalent and write
$\eqform_1 \sim_\Gamma \eqform_2$
if $\sat{\eqform_1}(w,\sigma) = \sat{\eqform_2}(w,\sigma)$
for all $(w,\sigma) \in \sat{\Gamma}$.
If $\Gamma = \emptyset$, we simply write $\eqform_1 \sim \eqform_2$
and say that $\eqform_1$ and $\eqform_2$
are semantically equivalent.

\subparagraph{Concrete semantics}
To obtain the concrete semantics of a formula for a given semiring structure $(X, +, \times, 0 , 1)$,
we assume an \emph{aggregation function} $\aggr : \multi{R^*} \rightarrow X$.
Note that the set $X$ may be different from the set of weights $R$.

\begin{example}\label{ex:1}
  Let $\Sigma = \{a,b\}$, $R = \{0,1\}$ and consider the max-plus semiring
  $(\mathbb{N} \cup \{-\infty\}, \max, +, -\infty, 0)$.
  We wish to count the maximum number of consecutive $a$'s in a given string $w \in \Sigma^*$.
  We define the aggregation function as
  $
  \aggr(M) = \max_{r_1 \dots r_n \in M}( r_1 + \dots + r_n),
  $
  thus interpreting the sum and product of the multiset sequence semiring ($\munion$ and $\cdot$)
  as the corresponding sum and product ($\max$ and $+$) in the max-plus semiring.
  Now define the first-order formula $\varphi$ as
  $
  \varphi = x \leq y \land \forall z.((x \leq z \land z \leq y) \rightarrow P_a(z)),
  $
  and let
  $\swform = \varphi \cond 1 : 0$, 
  $\cwform' = \textstyle{\prod_y} \swform$, 
  	and 
	$\cwform = \textstyle{\sum_x} \cwform'$,
  so that $\cwform = \sum_x \prod_y \varphi \cond 1 : 0$.
  Consider the string $w = abaa$, which has a maximum number of two consecutive $a$'s.
  We find that
  \begin{align*}
    \sat{\cwform}(w,\sigma) &= \sat{\cwform'}(w,\sigma[x \mapsto 1]) \munion \sat{\cwform'}(w, \sigma[x \mapsto 2]) \\
    &\phantom{{}={}}\munion \sat{\cwform'}(w,\sigma[x \mapsto 3]) \munion \sat{\cwform'}(w, \sigma[x \mapsto 4]) \\
    &= \multiset{1000} \munion \multiset{0000} \munion \multiset{0011} \munion \multiset{0001} \\
    &= \multiset{1000, 0000, 0011, 0001}
  \end{align*}
  and hence the concrete semantics become
  \begin{align*}
    \aggr(\sat{\cwform}(w,\sigma)) &= \aggr(\multiset{1000, 0000, 0011, 0001}) \\
    &= \max\{1,0,2,1\} = 2,
  \end{align*}
  which is 
  the maximum number of 
  consecutive $a$'s in $w$.
\end{example}

Semiring semantics have some limitations in their expressive power,
and some natural quantities, such as discounted sum,
can not be computed using these semantics.
Alternative concrete semantics have therefore been proposed that
give more expressive power, such as valuation monoids \cite{DM12}
and valuation structures \cite{DP16},
which allows one to compute more complex quantities,
such as optimal discounted cost, average of ratios, and more.
In this paper, we work exclusively with abstract semantics.

\begin{remark}
	It is important to note that the abstract semantics that were defined in this section can be seen as a kind of concrete semantics, for the semiring structure $(\multi{R^*}, \munion, \cdot, \emptyset, \multiset{\varepsilon})$, 
	the semiring over multisets of sequences over the weights.
\end{remark}

\section{Decision problems}

The
three 
usual
decision problems 
that one considers for 
a logical language
are
model checking, satisfiability, and validity.
The model checking problem asks if a given model satisfies a given formula,
the satisfiability problem asks whether for a given formula
there exists a model that satisfies the formula,
and the validity problem asks if a given formula is satisfied by all models.
For \FO, \MSO, and many classical Boolean logics, the satisfiability and validity problems are equivalent,
since the satisfiability of a formula $\varphi$
is equivalent to the non-validity of its negation $\neg \varphi$.

In this section, we briefly discuss how to extend these fundamental notions to
our setting of a non-Boolean logic. 
We assume the set $R$ of weights has decidable equality,
i.e. it is decidable (with reasonable efficiency, when discussing complexity issues) whether $r_1 = r_2$ for two weights $r_1,r_2 \in R$.
First, observe that we can encode every \MSO\ formula as an equation (Lemma \ref{lem:embedMSOtoEq}), and vice-versa (Lemma \ref{lem:WtoMSO}).
\begin{lemma}\label{lem:embedMSOtoEq}
	Assume two distinct values, $0,1 \in R$, and let $\varphi \in \MSO$.
	Then, for every $(w,\sigma)$, the following are equivalent:
	\textbf{(1)} $(w,\sigma) \models \varphi$, 
	\textbf{(2)} $\sat{\varphi \cond 0 : 0}(w,\sigma) = \sat{\varphi \cond 0 : 1}(w,\sigma)$, and
	\textbf{(3)} $\sat{\varphi \cond \Pi_x 0 : \Pi_x 0}(w,\sigma) = \sat{\varphi \cond \Pi_x 0 : \Pi_x 1}(w,\sigma)$.
\end{lemma}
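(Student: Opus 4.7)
The plan is to prove the chain of equivalences by direct case analysis on whether $(w,\sigma)\models\varphi$, using nothing more than the defining clauses for the conditional construct in \swMSO\ and \cwMSO, together with the assumption $0\neq 1$ in $R$.

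First I would show (1)$\Rightarrow$(2). If $(w,\sigma)\models\varphi$, then by the semantics of $\cond$ in \swMSO\ both sides reduce to $\sat{0}(w,\sigma)=0$, which are trivially equal. For (2)$\Rightarrow$(1) I would argue by contrapositive: if $(w,\sigma)\not\models\varphi$, then the ``else'' branch is chosen on each side, giving $\sat{0}(w,\sigma)=0$ on the left and $\sat{1}(w,\sigma)=1$ on the right; since $0\neq 1$ in $R$, the equation fails, contradicting (2).

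The equivalence (1)$\Leftrightarrow$(3) is handled analogously, only now one has to observe what $\sat{\Pi_x 0}$ and $\sat{\Pi_x 1}$ look like. By the semantic clause for $\Pi_x$, $\sat{\Pi_x 0}(w,\sigma)=\multiset{0^{|w|}}$ and $\sat{\Pi_x 1}(w,\sigma)=\multiset{1^{|w|}}$. When $(w,\sigma)\models\varphi$, both conditionals select the ``then'' branch and yield $\multiset{0^{|w|}}$, so the equation holds. When $(w,\sigma)\not\models\varphi$, the left-hand side still gives $\multiset{0^{|w|}}$ while the right-hand side gives $\multiset{1^{|w|}}$; since $w\in\Sigma^+$ guarantees $|w|\geq 1$ and $0\neq 1$, these two singleton multisets are distinct, so (3) fails.

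There is no real obstacle here: the argument is a four-way case split driven entirely by the definitions, and the only non-trivial ingredients are the standing assumption that $R$ contains two distinct elements $0,1$ and the fact that words in $\Sigma^+$ are non-empty (so the product $\Pi_x$ is taken over a non-empty index set, ensuring the two resulting sequences really do differ). I would present the proof as two short paragraphs, one for (1)$\Leftrightarrow$(2) and one for (1)$\Leftrightarrow$(3), so that the symmetry between the \swMSO\ and \cwMSO\ encodings is transparent and ready to be reused in the proof of Lemma~\ref{lem:WtoMSO}.
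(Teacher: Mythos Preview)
Your argument is correct and is exactly the natural case split one would expect; the paper itself omits the proof of this lemma entirely, treating it as immediate from the semantics. Your observation that $w\in\Sigma^+$ ensures $|w|\geq 1$, so that $\multiset{0^{|w|}}\neq\multiset{1^{|w|}}$, is the one point worth making explicit, and you have done so.
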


\begin{definition}
	For $\Psi \in \swMSO$ and $r \in R$, we define $\varphi(\Psi,r)$ recursively:
	\begin{itemize}
		\item $\varphi(r,r) = \top$ and $\varphi(r',r) = \neg \top$, when $r \neq r'$; and
		\item 
		$\varphi(\varphi' \cond \Psi_1 : \Psi_2,r) = (\varphi' \land \varphi(\Psi_1,r)) \lor (\neg\varphi' \land \varphi(\Psi_2,r))$.
	\end{itemize}
\end{definition}

\begin{lemma}\label{lem:WtoMSO}
	$(w,\sigma) \in \sat{\varphi(\Psi,r)}$ iff $\sat{\Psi}(w,\sigma) = r$.
\end{lemma}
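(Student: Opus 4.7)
The plan is to prove the equivalence by structural induction on $\Psi$, following exactly the recursive definition of $\varphi(\Psi, r)$.

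For the base case, $\Psi = r'$ for some $r' \in R$. Here $\sat{\Psi}(w,\sigma) = r'$ by the semantics of \swMSO, so we must show $(w,\sigma) \in \sat{\varphi(r',r)}$ iff $r' = r$. This splits into the two subcases of the definition: if $r' = r$ then $\varphi(r',r) = \top$, which is satisfied by every $(w,\sigma)$, matching $r' = r$; if $r' \neq r$ then $\varphi(r',r) = \neg \top$, which is satisfied by no pair, matching $r' \neq r$.

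For the inductive case, $\Psi = \varphi' \cond \Psi_1 : \Psi_2$, and we assume the induction hypothesis for $\Psi_1$ and $\Psi_2$. I would split on whether $(w,\sigma) \models \varphi'$. If $(w,\sigma) \models \varphi'$, then by the semantics $\sat{\Psi}(w,\sigma) = \sat{\Psi_1}(w,\sigma)$; and the disjunct $\neg \varphi' \land \varphi(\Psi_2, r)$ of $\varphi(\Psi, r)$ is refuted at $(w,\sigma)$, so $(w,\sigma) \in \sat{\varphi(\Psi, r)}$ iff $(w,\sigma) \in \sat{\varphi(\Psi_1, r)}$, which by the induction hypothesis is equivalent to $\sat{\Psi_1}(w,\sigma) = r$, i.e., $\sat{\Psi}(w,\sigma) = r$. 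The case $(w,\sigma) \not\models \varphi'$ is completely symmetric, using the $\Psi_2$ branch of both the semantics and the disjunction.

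I do not anticipate a genuine obstacle: the recursive structure of $\varphi(\Psi, r)$ is crafted so that its propositional skeleton mirrors the case analysis in the \swMSO\ semantics one-to-one, and the only subtlety is to remember that $\varphi' \in \MSO$ is Boolean-valued (so the classical disjunction and negation interact cleanly with the \MSO\ satisfaction relation). The proof is therefore mainly a matter of bookkeeping between the \MSO\ satisfaction semantics of $\varphi(\Psi,r)$ and the equality $\sat{\Psi}(w,\sigma) = r$.
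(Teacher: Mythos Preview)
Your proof by structural induction on $\Psi$ is correct and is exactly the natural argument; the paper states this lemma without proof, treating it as immediate from the recursive definition of $\varphi(\Psi,r)$, and your write-up simply makes that immediacy explicit.
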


We consider weighted model checking, which has two versions. We recall that for \MSO\ and \FO, model checking is \PSPACE-complete \cite{S74,V82}.

\begin{definition}[The evaluation problem] Given $(w,\sigma)$ and a formula $\eqform$,
	compute $\sat{\eqform}(w,\sigma)$.
\end{definition}

\begin{definition}[Weighted model checking problem] Given $(w,\sigma)$, a formula $\eqform$,
and a weight or multiset $v$, do we have $\sat{\eqform}(w,\sigma) = v$?
\end{definition}

To evaluate a $\swMSO$ or $\cwMSO$ formula on $(w,\sigma)$, one can use the recursive procedure that is yielded by the semantics of $\swMSO$ and $\cwMSO$, using a
model checking algorithm for $\MSO$ to check which branch to take at each conditional.
It is not hard to see that for \swMSO, this can be done using polynomial space, as that fragment only uses conditionals on \MSO\ formulas and values.
%
Then, using Lemmata \ref{lem:embedMSOtoEq} and \ref{lem:WtoMSO}: 
\begin{theorem}
The weighted model checking problem is 
\PSPACE-complete for \swMSO.
\end{theorem}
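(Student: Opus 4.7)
My plan is to prove both bounds separately. For \PSPACE-membership, I would describe a direct recursive evaluation algorithm that computes $\sat{\Psi}(w,\sigma)$ and then compares to $v$. The algorithm walks the syntax tree of $\Psi$: at a leaf $r$ return $r$, and at a conditional $\varphi \cond \Psi_1 : \Psi_2$ invoke an \MSO\ model checker on $(w,\sigma,\varphi)$ and recurse into $\Psi_1$ or $\Psi_2$ accordingly. The recursion depth is bounded by the syntactic depth of $\Psi$, so it contributes polynomial stack space (we only need to remember a pointer into the formula at each frame). The \MSO\ subchecks are each in \PSPACE\ by \cite{S74,V82}, and the space they use can be reclaimed between calls. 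Since $R$ has decidable equality and the final comparison $\sat{\Psi}(w,\sigma) = v$ is one test on a weight appearing in $\Psi$ (the returned value is always some $r \in R(\Psi)$), the overall algorithm runs in polynomial space.

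For \PSPACE-hardness, I would reduce \MSO\ model checking to weighted model checking for \swMSO. Given an \MSO\ instance $(w,\sigma,\varphi)$, I pick two distinct weights $0,1 \in R$ (available by the lemmata's assumption) and build the \swMSO\ formula $\Psi = \varphi \cond 0 : 1$ in logarithmic space. By the semantics of \swMSO, $\sat{\Psi}(w,\sigma) = 0$ iff $(w,\sigma) \models \varphi$, so asking whether $\sat{\Psi}(w,\sigma) = 0$ is equivalent to the original model checking question. This is essentially the content of Lemma \ref{lem:embedMSOtoEq} specialised to a single-sided query and suffices because \MSO\ model checking is already \PSPACE-hard.

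The main obstacle is the upper bound, and specifically the worry that nested recursive calls to an \MSO\ oracle could blow space. I expect to handle this by emphasising that at each recursion frame we only need to store a pointer into the formula tree, so the frame size is $O(\log |\Psi|)$; the \MSO\ subroutine uses fresh polynomial space per call that is released before we descend further. An alternative, perhaps cleaner, framing is to observe that $\sat{\Psi}(w,\sigma) \in R(\Psi)$ and can be determined by a single deterministic traversal, so the problem reduces in polynomial time to $|R(\Psi)|$ many \MSO\ model checking queries of the form $\varphi(\Psi,r)$ from Lemma \ref{lem:WtoMSO}, which lies in $\PSPACE^{\PSPACE} = \PSPACE$. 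Either framing closes the upper bound, and together with the reduction above yields the theorem.
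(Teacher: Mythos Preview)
Your proposal is correct and follows essentially the same approach as the paper: the upper bound via recursive evaluation of the conditional structure with \MSO\ model checking at each branch (the paper sketches exactly this just before the theorem), and the lower bound by encoding an \MSO\ instance as a \swMSO\ conditional with two distinct weights, which is the content of Lemma~\ref{lem:embedMSOtoEq}. Your alternative upper-bound framing via Lemma~\ref{lem:WtoMSO} is also in the spirit of the paper's invocation of that lemma.
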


Next we consider several variations of the satisfiability problem in the weighted setting.

%

\begin{definition}[$r$-satisfiability] Given $\eqform$ and a weight or multiset $v$, is there $(w,\sigma)$ such that $\sat{\eqform}(w,\sigma) = v$?
\end{definition}

For $\swMSO$ formulas, this 
problem has the same complexity as \MSO\ satisfiability over finite strings, using Lemmata \ref{lem:WtoMSO} and \ref{lem:embedMSOtoEq}.
Therefore, the problem is decidable, but with a nonelementary complexity \cite{R02}.
For $\cwMSO$ formulas $\cwform$, 
the problem 
is similar to the following variation.

\begin{definition}[Equational satisfiability] Given $\eqform_1$ and $\eqform_2$,
does there exist $(w,\sigma)$ such that $\sat{\eqform_1}(w,\sigma) = \sat{\eqform_2}(w,\sigma)$?
\end{definition}

For $\swMSO$ formulas, this problem is decidable in the same way as $r$-satisfiability, by reducing to the satisfiability problem of \MSO: 
there exist $(w,\sigma)$ such that $\sat{\swform_1}(w,\sigma) = \sat{\swform_2}(w,\sigma)$ if and only if the following formula is satisfiable: 
\[\bigvee_{
	r\in R(\swform_1) \cap R(\swform_2)
	}\varphi(\swform_1,r) \land \varphi(\swform_2,r).\]
For $\cwMSO$, and even $\cwFO$ formulas,
we show that this problem is undecidable in Section \ref{sec:undec} (Theorem \ref{thm:sat-is-undec}).
Finally, we consider a version of 
validity in the weighted equational setting.

\begin{definition}[Equational validity] Given $\eqform_1$ and $\eqform_2$,
do we have $\sat{\eqform_1}(w,\sigma) = \sat{\eqform_2}(w,\sigma)$ for all $(w,\sigma)$?
\end{definition}

This problem is decidable. 
As we give in Section \ref{sec:completeness} (Theorem \ref{thm:completeness-full}) a recursive and complete axiomatization of the equational theory of \cwMSO, 
the problem is recursively enumerable (\RE). But the logic also has a recursive set of models and a decidable evaluation problem. Therefore, this version of validity is also \coRE, and therefore decidable.


\section{Axioms}\label{sec:completeness}
Just as the syntax of the logic was given in three layers,
we 
also present the 
axioms of the logic
in three layers,
one for each of the syntactic layers.
We note that the proofs of completeness do not rely on any properties
of \MSO\ itself, apart from it having a complete axiomatization,
and therefore the axiomatizations also apply to $\swFO$ and $\cwFO$.

For the $\swMSO$ and $\cwMSO$ layers, which are 
not
Boolean,
we give an axiomatization in terms of equational logic. 
For a set $\Gamma$ of $\MSO$ formulas,
we 
use the notation
$\Gamma \vdash \eqform_1 \approx \eqform_2$ 
to mean that
under the assumptions in $\Gamma$, $\eqform_1$ is equivalent to $\eqform_2$.
These equations
must satisfy the axioms of equational logic,
which are reflexivity, symmetry, transitivity, and congruence,
as reported in Table \ref{tab:eq-axioms}.
Note that the congruence rule for sum, \textsf{cong+},
only applies to the $\cwMSO$ layer, since $\swMSO$ has no sum operator.
Furthermore, the congruence rule for the conditional operator, \textsf{cong?},
is not strictly necessary to include,
since it can be derived from the 
axioms that we introduce later.
However, to follow standard presentations of equational logic,
we include it as part of the axioms here.

\begin{table}
  \centering
  \begin{tabular}{r l}
    \hline
    (\textsf{ref}): & $\Gamma \vdash \eqform \approx \eqform$ \\[1ex]
    (\textsf{sym}): & $\Gamma \vdash \eqform_1 \approx \eqform_2$ implies $\Gamma \vdash \eqform_2 \approx \eqform_1$ \\[1ex]
    (\textsf{trans}): & $\begin{aligned}&\Gamma \vdash \eqform_1 \approx \eqform_2 \text{ and } \Gamma \vdash \eqform_2 \approx \eqform_3 \\[-1ex]
      &\text{implies } \Gamma \vdash \eqform_1 \approx \eqform_3 \end{aligned}$ \\[2ex]
    (\textsf{cong?}): & $\begin{aligned}&\Gamma \vdash \eqform_1 \approx \eqform_1' \text{ and } \Gamma \vdash \eqform_2 \approx \eqform_2' \\[-1ex]
      &\text{implies } \Gamma \vdash \varphi \cond \eqform_1 : \eqform_2 \approx \varphi \cond \eqform_1' : \eqform_2' \end{aligned}$ \\[2ex]
    (\textsf{cong+}): & $\begin{aligned}&\Gamma \vdash \eqform_1 \approx \eqform_1' \text{ and } \Gamma \vdash \eqform_2 \approx \eqform_2' \\[-1ex]
      &\text{implies } \Gamma \vdash \eqform_1 + \eqform_2 \approx \eqform_1' + \eqform_2' \end{aligned}$ \\
    \hline \\
  \end{tabular}
  \caption{Axioms for equational logic.}
  \label{tab:eq-axioms}
\end{table}


\subsection{\MSO}
\MSO\ over finite strings is equivalent to finite automata  \cite{B60,E61,T61}, and therefore it also has a decidable validity problem (albeit with a nonelementary complexity). This means that the theory of \MSO\ over finite strings has a recursive and complete axiomatization. One such axiomatization is given in \cite{GC12}, and therefore for a set $\Gamma \cup \{ \varphi \}$ of \MSO\ formulas, 
$\Gamma \vdash \varphi$ means that $\varphi$ is derivable from these axioms and $\Gamma$ ($\Gamma$ may be omitted when empty).
Since \FO\ over finite strings also has a decidable validity problem,
it likewise has a recursive and complete axiomatization.
For the purpose of this paper, we fix one such axiomatization,
and we can thus also write $\Gamma \vdash \varphi$ when $\Gamma \cup \{\varphi\}$
is a set of \FO\ formulas.

\begin{theorem}[Completeness of \MSO\ \cite{GC12}]\label{thm:completeness-mso}
  For every \MSO\ formula $\varphi$,
  $\models \varphi$ if and only if $\vdash \varphi$.
\end{theorem}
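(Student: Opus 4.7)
The plan, given that this is quoted from \cite{GC12}, is to outline what such a completeness proof requires. Soundness is routine: one checks that each axiom is valid under the standard Tarskian semantics of \MSO\ over finite strings, and that each inference rule preserves validity; the argument then proceeds by induction on the length of a derivation. The interesting direction is completeness, for which the natural strategy exploits the B\"uchi--Elgot--Trakhtenbrot equivalence between \MSO\ over finite strings and finite automata.

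First, I would show that every \MSO\ formula $\varphi$ is provably equivalent, using the axioms, to a canonical form $\varphi^{*}$ that mirrors the acceptance condition of a deterministic finite automaton $A_\varphi$ recognizing the same language. This requires that the axiomatization be rich enough to internalize the basic automata-theoretic constructions (determinization via the subset construction, product, complementation), so that equivalences such as $\varphi \leftrightarrow \varphi^{*}$ become theorems rather than merely semantic facts. Next, for a valid formula $\varphi$, validity of $\varphi^{*}$ amounts to the universality of $A_\varphi$, which is a finite combinatorial statement about the transition graph. One would then translate this finite combinatorial witness into an actual proof: using the axioms for Boolean connectives and quantifiers, together with an induction principle on positions, one derives that for every string and every position, the simulated run of $A_\varphi$ remains in an accepting-reachable configuration, concluding with the acceptance of the last position.

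The key obstacle is the encoding of induction on string length at the axiomatic level. In \MSO\ over strings, well-founded induction is expressible via second-order quantification (the set of positions is the least set containing the first position and closed under successor), so the axiomatization of \cite{GC12} must package this property as a derivable induction schema allowing properties of runs to be lifted from base case and successor step to all positions. Once induction and the automata translation are internalized, completeness reduces essentially to classical completeness of first-order reasoning over the finite, decidable structure of $A_\varphi$, and the theorem follows. I would expect the reference \cite{GC12} to carry out exactly this program, with the bulk of the technical work devoted to deriving the induction schema and to formalizing the automata constructions within the proof system.
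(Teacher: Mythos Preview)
The paper does not prove this theorem at all: it is stated as an imported result from \cite{GC12}, with no accompanying argument. The surrounding text merely observes that \MSO\ over finite strings is equivalent to finite automata, hence has decidable validity, hence admits a recursive complete axiomatization, and then cites \cite{GC12} as one such axiomatization. So there is no ``paper's own proof'' to compare your proposal against.

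Your outline is a plausible high-level sketch of how a completeness proof for \MSO\ over finite strings could proceed, leaning on the B\"uchi--Elgot--Trakhtenbrot correspondence and an internalized induction schema. Whether this matches what \cite{GC12} actually does is not something this paper tells you; you would have to consult that reference directly. For the purposes of the present paper, the theorem is used as a black box (via Corollary~\ref{cor:completeness-mso}) to lift semantic facts about \MSO\ into derivations, and no details of its proof are needed or reproduced.
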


\begin{corollary}\label{cor:completeness-mso}
  For every finite $\Gamma$,
  $\Gamma \models \varphi$ if and only if $\Gamma \vdash \varphi$.
\end{corollary}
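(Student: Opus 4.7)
The plan is to reduce the finite-assumption case to the empty-assumption case already established by Theorem~\ref{thm:completeness-mso}. Since $\Gamma$ is finite, enumerate $\Gamma = \{\varphi_1, \dots, \varphi_n\}$ and set $\psi = \varphi_1 \land \cdots \land \varphi_n$ (taking $\psi = \true$ if $\Gamma$ is empty). The whole argument then rests on the chain of equivalences
\[
\Gamma \models \varphi \;\Longleftrightarrow\; \models \psi \ra \varphi \;\Longleftrightarrow\; \vdash \psi \ra \varphi \;\Longleftrightarrow\; \Gamma \vdash \varphi,
\]
where the first equivalence is immediate from the semantics of conjunction and implication over pairs $(w,\sigma) \in \Sigma^+_{val}$, and the middle equivalence is exactly Theorem~\ref{thm:completeness-mso} applied to the single closed formula $\psi \ra \varphi$.

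The remaining work is to establish the outer syntactic equivalence $\vdash \psi \ra \varphi \iff \Gamma \vdash \varphi$. The forward direction is straightforward: from $\vdash \psi \ra \varphi$, one can introduce each $\varphi_i \in \Gamma$ as a hypothesis, build $\psi$ by $n-1$ applications of conjunction introduction (which is derivable in any standard axiomatization of \MSO, including the one in \cite{GC12}), and then apply modus ponens to obtain $\Gamma \vdash \varphi$. The backward direction is the \emph{deduction theorem} for the chosen axiomatization: if $\Gamma \vdash \varphi$, then $\vdash \psi \ra \varphi$. This is a well-known metatheorem for standard Hilbert-style or natural-deduction presentations of classical logic, and can be proved by a routine induction on the length of the derivation of $\Gamma \vdash \varphi$, discharging the hypotheses in $\Gamma$ one at a time.

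The main obstacle, such as it is, lies in ensuring that the particular axiomatization cited from \cite{GC12} supports the deduction theorem in the required form; if not, one instead performs the induction directly, handling the first-order and second-order quantifier rules carefully (the hypotheses in $\Gamma$ are closed formulas, so no clash with quantifier introductions arises). Given that this is a standard fact about classical predicate logic and that the corollary follows by a textbook argument from Theorem~\ref{thm:completeness-mso}, the proof is essentially bookkeeping, which is why the authors present it as a corollary rather than a fresh theorem.
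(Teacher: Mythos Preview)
Your proposal is correct and is the standard textbook reduction via finite conjunction and the deduction theorem. The paper itself gives no proof of this corollary, treating it as immediate from Theorem~\ref{thm:completeness-mso}; your argument is precisely the expected justification.
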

%
%

\subsection{\swMSO}
The equational axioms for $\swMSO$ are given in Table \ref{tab:swmso-axioms}.
Axiom $(S1)$ allows one to add additional assumptions to $\Gamma$,
and $(S2)$ shows how negation affects the conditional operator
by switching the order of the results.
Axiom $(S3)$ shows that if the formula $\varphi$ that is being conditioned on
can be derived from $\Gamma$ itself,
then the first choice of the conditional will always be taken.
Finally, $(S4)$ gives a way to remove assumptions
and put them into a conditional statement instead:
If the first choice of the conditional is equivalent to $\swform$
under the assumption that $\varphi$ is true,
and the second choice of the conditional is equivalent to $\swform$
under the assumption that $\varphi$ is false,
then the conditional is equivalent to $\swform$.

\begin{table}
  \centering
  \begin{tabular}{r l}
    \hline
    ($S1$): & $\Gamma \vdash \swform_1 \approx \swform_2$ implies $\Gamma \cup \{\varphi\} \vdash \swform_1 \approx \swform_2$ \\[1ex]
    ($S2$): & $\Gamma \vdash \neg \varphi \cond \swform_1 : \swform_2 \approx \varphi \cond \swform_2 : \swform_1$ \\[1ex]
    ($S3$): & $\Gamma \vdash \varphi$ implies $\Gamma \vdash \varphi \cond \swform_1 : \swform_2 \approx \swform_1$ \\[1ex]
    ($S4$): & $\begin{aligned}&\Gamma \cup \{\varphi\} \vdash \swform_1 \approx \swform \text{ and } \Gamma \cup \{\neg \varphi\} \vdash \swform_2 \approx \swform \\[-1ex]
    &\text{implies } \Gamma \vdash \varphi \cond \swform_1 : \swform_2 \approx \swform \end{aligned}$ \\
    \hline \\
  \end{tabular}
  \caption{Axioms for \swMSO.}
  \label{tab:swmso-axioms}
\end{table}

Before proving that the axioms given in Table \ref{tab:swmso-axioms} are complete,
we first give some examples of theorems that can be derived from the axioms,
some of which will be used in the proof of completeness.
The first two of these are 
particularly
interesting,
since they give properties that are common in many logical systems,
namely the principle of explosion and the cut elimination rule.
The remaining theorems show that the conditional operator
behaves 
as expected,
and that all of these behaviours can be inferred
from the four axioms of Table \ref{tab:swmso-axioms}.

\begin{proposition}\label{prop:swmso-theorems}\label{prop:cwmso-theorems}
  The following theorems can be derived in \swMSO.
  \begin{enumerate}
    \item $\Gamma \vdash \swform_1 \approx \swform_2$ for any
      $\swform_1$ and $\swform_2$ if $\Gamma$ is inconsistent.
    \item $\Gamma \vdash \swform_1 \approx \swform_2$ if $\Gamma \vdash \varphi$ and $\Gamma \cup \{\varphi\} \vdash \swform_1 \approx \swform_2$.
    \item $\Gamma \vdash \varphi \cond \swform : \swform \approx \swform$.
    \item If $\Gamma \cup \{\varphi_1,\varphi_2\} \vdash \swform_1 \approx \swform_1'$,
      $\Gamma \cup \{\varphi_1, \neg \varphi_2\} \vdash \swform_1 \approx \swform_2'$,
      $\Gamma \cup \{\neg \varphi_1, \varphi_2\} \vdash \swform_2 \approx \swform_1'$, and
      $\Gamma \cup \{\neg \varphi_1, \neg \varphi_2\} \vdash \swform_2 \approx \swform_2'$, then
      $\Gamma \vdash \varphi_1 \cond \swform_1 : \swform_2 \approx \varphi_2 \cond \swform_1' : \swform_2'$.
    \item $\Gamma \vdash \varphi_1 \cond \swform_1 : \swform_2 \approx \varphi_2 \cond \swform_1 : \swform_2$
      if $\Gamma \vdash \varphi_1 \leftrightarrow \varphi_2$.
    \item $\Gamma \vdash \varphi \cond \swform_1 : \swform_2 \approx \swform_2$
      if $\Gamma \vdash \neg \varphi$.
    \item If $\Gamma \cup \{\varphi\} \vdash \swform_1 \approx \swform_1'$ and
      $\Gamma \cup \{\neg \varphi\} \vdash \swform_2 \approx \swform_2'$ then
      $\Gamma \vdash \varphi \cond \swform_1 : \swform_2 \approx \varphi \cond \swform_1' : \swform_2'$.
    \item If $\Gamma \cup \{\varphi\} \vdash \swform_1 \approx \swform_2$ and
      $\Gamma \cup \{\neg \varphi\} \vdash \swform_1 \approx \swform_2$
      then $\Gamma \vdash \swform_1 \approx \swform_2$.
    \item $\Gamma \cup \{\varphi\} \vdash \varphi \cond \swform_1 : \swform_2 \approx \swform_1$.
  \end{enumerate}
\end{proposition}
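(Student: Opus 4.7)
The plan is to work bottom-up, starting with items that follow in one or two axiom steps, and exploiting $(S2)$ to flip conditional polarity and $(S3)$ to collapse a conditional to a branch whenever the guard is derivable. Item~9 is immediate from $(S3)$ in the context $\Gamma \cup \{\varphi\}$, which entails $\varphi$ by the MSO axiomatization. Item~3 is $(S4)$ with both branches set to the target $\swform$, both premises supplied by (\textsf{ref}). Item~1 (explosion) will be the essential tool for absorbing inconsistent premise sets later: if $\Gamma$ is inconsistent, Corollary~\ref{cor:completeness-mso} yields some $\varphi$ with $\Gamma \vdash \varphi$ and $\Gamma \vdash \neg\varphi$; then $(S3)$ directly gives $\Gamma \vdash \varphi \cond \swform_1 : \swform_2 \approx \swform_1$, while $(S2)$ followed by $(S3)$ applied to $\neg\varphi$ gives $\Gamma \vdash \varphi \cond \swform_1 : \swform_2 \approx \neg\varphi \cond \swform_2 : \swform_1 \approx \swform_2$, and chaining via (\textsf{sym}) and (\textsf{trans}) yields $\Gamma \vdash \swform_1 \approx \swform_2$. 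Item~2 is then $(S4)$ with target $\swform_2$ -- premises being the given hypothesis and (\textsf{ref}) -- combined via (\textsf{sym}) and (\textsf{trans}) with the $(S3)$ instance $\Gamma \vdash \varphi \cond \swform_1 : \swform_2 \approx \swform_1$. Item~6 is $(S2)$ followed by $(S3)$ on $\neg\varphi$.

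The main technical obstacle is item~4, which I would prove by nesting $(S4)$ twice. Let $\swform \defeq \varphi_2 \cond \swform_1' : \swform_2'$ denote the right-hand side of the target equation. An outer application of $(S4)$ on $\varphi_1$ with target $\swform$ reduces the goal to showing $\Gamma \cup \{\varphi_1\} \vdash \swform_1 \approx \swform$ and $\Gamma \cup \{\neg\varphi_1\} \vdash \swform_2 \approx \swform$. For each of these, I would reorient with (\textsf{sym}) and then apply an inner $(S4)$ on $\varphi_2$: in the first case with target $\swform_1$, so that its premises $\Gamma \cup \{\varphi_1, \varphi_2\} \vdash \swform_1' \approx \swform_1$ and $\Gamma \cup \{\varphi_1, \neg\varphi_2\} \vdash \swform_2' \approx \swform_1$ are the first two hypotheses of item~4 after (\textsf{sym}); the second case is symmetric with target $\swform_2$, using the remaining two hypotheses. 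The only genuine difficulty here is clerical bookkeeping -- tracking orientations so that the given hypotheses match the form required by $(S4)$.

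With item~4 in hand, items~5, 7, and 8 fall out as easy corollaries. Item~5 follows from item~4 with $\swform_i' = \swform_i$: two of its four premises are (\textsf{ref}), and the other two concern the assumption sets $\Gamma \cup \{\varphi_1, \neg\varphi_2\}$ and $\Gamma \cup \{\neg\varphi_1, \varphi_2\}$, both of which are inconsistent since $\Gamma \vdash \varphi_1 \leftrightarrow \varphi_2$, and item~1 discharges them. Item~7 is the same pattern with $\varphi_1 = \varphi_2 = \varphi$ and inconsistent sets $\Gamma \cup \{\varphi, \neg\varphi\}$. For item~8, I would apply $(S4)$ twice: once with target $\swform_1$, using (\textsf{ref}) and (\textsf{sym}) of the given to obtain $\Gamma \vdash \varphi \cond \swform_1 : \swform_2 \approx \swform_1$; and once with target $\swform_2$, using the given and (\textsf{ref}) to obtain $\Gamma \vdash \varphi \cond \swform_1 : \swform_2 \approx \swform_2$. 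Chaining through (\textsf{sym}) and (\textsf{trans}) yields $\Gamma \vdash \swform_1 \approx \swform_2$.
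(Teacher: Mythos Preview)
Your proposal is correct and follows essentially the same route as the paper's proof: the same axiom applications for items 1--7 and 9, including the nested use of $(S4)$ for item~4 and its reuse for items~5 and~7 via explosion. The only minor variation is item~8, where the paper applies $(S4)$ once to derive $\Gamma \vdash \varphi \cond \swform_1 : \swform_1 \approx \swform_2$ and then invokes item~3, whereas you apply $(S4)$ twice with targets $\swform_1$ and $\swform_2$; both are equally direct.
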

\begin{proof}
We  only prove some of these claims for illustration.
  \begin{enumerate}
    \item[1) ] Let $\swform_1$ and $\swform_2$ be arbitrary \swMSO\ formulas
    and assume that $\Gamma$ is inconsistent.
    Then $\Gamma \vdash \varphi$
    and $\Gamma \vdash \neg \varphi$.
    Then axiom ($S3$) gives
    $\Gamma \vdash \varphi \cond \swform_1 : \swform_2 \approx \swform_1$ and $\Gamma \vdash \neg \varphi \cond \swform_2 : \swform_1 \approx \swform_2$.
    Since $\Gamma \vdash \varphi \cond \swform_1 : \swform_2 \approx \neg \varphi \cond \swform_2 : \swform_1$ by axiom ($S2$), this implies $\Gamma \vdash \swform_1 \approx \swform_2$.
    \item[4)] Using ($S4$), $\Gamma \cup \{\varphi_1, \varphi_2\} \vdash \swform_1 \approx \swform_1'$
    and $\Gamma \cup \{\varphi_1, \neg \varphi_2\} \vdash \swform_1 \approx \swform_2'$ gives
    $\Gamma \cup \{\varphi_1\} \vdash \varphi_2 \cond \swform_1' : \swform_2' \approx \swform_1$.
    Likewise, using the other two assumptions, we get
    $\Gamma \cup \{\neg \varphi_1\} \vdash \varphi_2 \cond \swform_1' : \swform_2' \approx \swform_2$,
    and a final application of ($S4$) then gives
    $\Gamma \vdash \varphi_1 \cond \swform_1 : \swform_2 \approx \varphi_2 \cond \swform_1' : \swform_2'$.
    \item[5)] Assume that $\Gamma \vdash \varphi_1 \leftrightarrow \varphi_2$.
    Then, because of reflexivity and since $\{\varphi_1, \neg \varphi_2\}$ and $\{\neg \varphi_1, \varphi_2\}$ are inconsistent under $\Gamma$, the fourth item of this proposition gives that $\Gamma \vdash \varphi_1 \cond \swform_1 : \swform_2 \approx \varphi_2 \cond \swform_1 : \swform_2$.
    \qedhere
  \end{enumerate}
\end{proof}

The proof of completeness is
by a case analysis and induction on
the structure of the two formulas $\swform_1$ and $\swform_2$.
Lemma \ref{lem:completeness-swmso}
covers
the case where
both 
sides of the equation
are conditional statements.

\begin{lemma}\label{lem:completeness-swmso}
  If $\varphi_1 \cond \swform_1' : \swform_1'' \sim_\Gamma \varphi_2 \cond \swform_2' : \swform_2''$, then
  \[\begin{array}{l l} \swform_1' \sim_{\Gamma \cup \{\varphi_1, \varphi_2\}} \swform_2', & \swform_1' \sim_{\Gamma \cup \{\varphi_1, \neg \varphi_2\}} \swform_2'', \\ \swform_1'' \sim_{\Gamma \cup \{\neg \varphi_1, \varphi_2\}} \swform_2', \text{ and} & \swform_1'' \sim_{\Gamma \cup \{\neg \varphi_1, \neg \varphi_2\}} \swform_2''.\end{array}\]
\end{lemma}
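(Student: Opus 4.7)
The plan is to prove each of the four equivalences by a direct semantic unfolding, using only the definition of $\sim_\Gamma$ and the semantic clause for the conditional operator of $\swMSO$. No induction on formula structure is needed, since the hypothesis already gives us equality of values at every $(w,\sigma) \in \sat{\Gamma}$, and the four assumption sets in the conclusion differ from $\Gamma$ only by fixing the truth values of $\varphi_1$ and $\varphi_2$.

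To illustrate, consider the first claim $\swform_1' \sim_{\Gamma \cup \{\varphi_1,\varphi_2\}} \swform_2'$. Fix an arbitrary $(w,\sigma) \in \sat{\Gamma \cup \{\varphi_1,\varphi_2\}}$. Then $(w,\sigma) \in \sat{\Gamma}$ and in addition $(w,\sigma) \models \varphi_1$ and $(w,\sigma) \models \varphi_2$. By the semantic clause for conditionals,
\[
\sat{\varphi_1 \cond \swform_1' : \swform_1''}(w,\sigma) = \sat{\swform_1'}(w,\sigma),
\]
\[
\sat{\varphi_2 \cond \swform_2' : \swform_2''}(w,\sigma) = \sat{\swform_2'}(w,\sigma).
\]
The hypothesis $\varphi_1 \cond \swform_1' : \swform_1'' \sim_\Gamma \varphi_2 \cond \swform_2' : \swform_2''$ gives that these two left-hand sides are equal, hence $\sat{\swform_1'}(w,\sigma) = \sat{\swform_2'}(w,\sigma)$ as required.

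The remaining three cases are handled identically, only adjusting which branch of each conditional is selected. For $\swform_1' \sim_{\Gamma \cup \{\varphi_1,\neg\varphi_2\}} \swform_2''$ we pick $(w,\sigma)$ satisfying $\varphi_1$ but not $\varphi_2$, making the left conditional evaluate to $\sat{\swform_1'}(w,\sigma)$ and the right one to $\sat{\swform_2''}(w,\sigma)$. For $\swform_1'' \sim_{\Gamma \cup \{\neg\varphi_1,\varphi_2\}} \swform_2'$ and $\swform_1'' \sim_{\Gamma \cup \{\neg\varphi_1,\neg\varphi_2\}} \swform_2''$ the choices are symmetric. In each case the hypothesis immediately supplies the desired equality of values.

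There is essentially no obstacle: the lemma is a straightforward semantic case analysis on the four combinations of truth values of $\varphi_1$ and $\varphi_2$, and it does not require any of the axioms from Table \ref{tab:swmso-axioms}. Its role in the completeness proof is to reduce an equivalence between two conditional formulas to four simpler equivalences under strengthened assumption sets, which can then be closed off by induction together with the proof-theoretic version of this case split provided by item 4 of Proposition \ref{prop:swmso-theorems}.
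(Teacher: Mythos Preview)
Your proof is correct and essentially identical to the paper's own argument: both fix an arbitrary $(w,\sigma)$ in the extended assumption set, observe that it lies in $\sat{\Gamma}$, unfold the conditional semantics to pick out the relevant branches, and apply the hypothesis; the paper also proves only the first case in detail and declares the others similar.
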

\begin{proof}
  We show why the first equivalence is true; the remaining cases are similar.
  Let $\swform_1 = \varphi_1 \cond \swform_1' : \swform_1''$ and
  $\swform_2 = \varphi_2 \cond \swform_2' : \swform_2''$.
  If $(w,\sigma) \in \sat{\Gamma \cup \{\varphi_1, \varphi_2\}}$,
  then also $(w,\sigma) \in \sat{\Gamma}$, so
  \begin{align*}
  \sat{\swform_1'}(w,\sigma) &= \sat{\swform_1}(w,\sigma) = \sat{\swform_2}(w,\sigma) = \sat{\swform_2'}(w,\sigma).
  \qedhere
  \end{align*}
\end{proof}


\begin{theorem}\label{thm:completeness-swmso}
  For finite $\Gamma$ we have
  $\swform_1 \sim_\Gamma \swform_2$ if and only if $\Gamma \vdash \swform_1 \approx \swform_2$.
\end{theorem}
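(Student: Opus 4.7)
The plan is to prove both directions separately. For soundness, I would argue by induction on the derivation of $\Gamma \vdash \swform_1 \approx \swform_2$, verifying that each of the equational axioms from Table~\ref{tab:eq-axioms} and $(S1)$--$(S4)$ preserves semantic $\Gamma$-equivalence. The only nontrivial cases are $(S3)$ and $(S4)$. For $(S3)$, I would invoke the soundness direction of Corollary~\ref{cor:completeness-mso} to conclude that $\Gamma \models \varphi$, so that on any $(w,\sigma) \in \sat{\Gamma}$ the true branch of the conditional is always taken. For $(S4)$, I would do a case split on whether $(w,\sigma) \models \varphi$ or $(w,\sigma) \models \neg \varphi$ and apply the premises on $\Gamma \cup \{\varphi\}$ and $\Gamma \cup \{\neg\varphi\}$ respectively.

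For completeness, I would proceed by strong induction on the total size $|\swform_1| + |\swform_2|$, measured by the number of conditional subformulas. Before the case analysis I would dispose of the situation where $\sat{\Gamma} = \emptyset$: by the completeness direction of Corollary~\ref{cor:completeness-mso}, this forces $\Gamma$ to be syntactically inconsistent, at which point Proposition~\ref{prop:swmso-theorems}(1) immediately gives $\Gamma \vdash \swform_1 \approx \swform_2$. Hence I may assume $\sat{\Gamma} \neq \emptyset$.

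For the base case both $\swform_1$ and $\swform_2$ are constants $r_1, r_2 \in R$; picking any $(w,\sigma) \in \sat{\Gamma}$, the hypothesis forces $r_1 = r_2$, and reflexivity closes the case. For the case where exactly one side, say $\swform_1 = \varphi \cond \swform_1':\swform_1''$, is a conditional and $\swform_2$ is a constant, I would note that for $(w,\sigma) \in \sat{\Gamma \cup \{\varphi\}}$ we have $\sat{\swform_1'}(w,\sigma) = \sat{\swform_1}(w,\sigma) = \sat{\swform_2}(w,\sigma)$, so $\swform_1' \sim_{\Gamma \cup \{\varphi\}} \swform_2$, and symmetrically $\swform_1'' \sim_{\Gamma \cup \{\neg\varphi\}} \swform_2$. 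These equivalences have strictly smaller total size, so the induction hypothesis plus axiom $(S4)$ (composed with symmetry) yields $\Gamma \vdash \swform_1 \approx \swform_2$. The symmetric subcase is identical.

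For the main case where both sides are conditionals, I would apply Lemma~\ref{lem:completeness-swmso} to produce four semantic equivalences between the subformulas over augmented assumption sets, invoke the induction hypothesis on each one (its total size has strictly decreased, even though the assumption set has grown), and then combine the four resulting derivations using Proposition~\ref{prop:swmso-theorems}(4) to recover $\Gamma \vdash \varphi_1 \cond \swform_1':\swform_1'' \approx \varphi_2 \cond \swform_2':\swform_2''$. The chief obstacle I anticipate is bookkeeping around the possibility that one of the four augmented sets $\Gamma \cup \{\pm\varphi_1, \pm\varphi_2\}$ becomes inconsistent; Lemma~\ref{lem:completeness-swmso} is then vacuously true, but the induction still needs the trivial derivability supplied by Proposition~\ref{prop:swmso-theorems}(1), and combining with Proposition~\ref{prop:swmso-theorems}(4) requires checking that the premises of $(S4)$ are available even in these degenerate branches. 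Handling this uniformly is what makes the prior derived theorems --- particularly items (1), (2), and (4) of Proposition~\ref{prop:swmso-theorems} --- indispensable rather than decorative.
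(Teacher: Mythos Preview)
Your proposal is correct, and the soundness half matches the paper's treatment essentially verbatim. The completeness half also works, but you do more than the paper does.

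In the inductive step the paper does \emph{not} split on whether one or both sides are conditionals. It simply takes, without loss of generality, $\swform_1 = \varphi \cond \swform_1' : \swform_1''$, observes from the semantics that $\swform_1' \sim_{\Gamma\cup\{\varphi\}} \swform_2$ and $\swform_1'' \sim_{\Gamma\cup\{\neg\varphi\}} \swform_2$ (leaving $\swform_2$ completely arbitrary), applies the induction hypothesis to these two smaller instances, and finishes with a single application of $(S4)$. Your ``exactly one conditional'' subcase is precisely this argument; the point is that it already covers the ``both conditionals'' subcase as well, since nothing in it requires $\swform_2$ to be a constant. Consequently Lemma~\ref{lem:completeness-swmso} and Proposition~\ref{prop:swmso-theorems}(4) are not actually needed for this theorem --- the paper states Lemma~\ref{lem:completeness-swmso} but the proof of Theorem~\ref{thm:completeness-swmso} never invokes it. Your route is sound but carries extra bookkeeping (the four augmented contexts, the degenerate-inconsistency worry you flag) that the paper's uniform treatment sidesteps entirely.
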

\begin{proof}
  Soundness can be proved by simply checking the 
  validity
  of each axiom.
  For completeness, note that if $\Gamma$ is inconsistent,
  then immediately $\Gamma \vdash \swform_1 \approx \swform_2$
  by Proposition~\ref{prop:swmso-theorems}(1).
  In the rest of the proof we may therefore assume that $\Gamma$ is consistent.
  
  
  The proof now proceeds by induction on 
  $|\swform_1|_?+|\swform_2|_?$, where $|\swform|_?$
  is 
  defined as follows.
  \[|\swform|_? = \begin{cases} 0, & \text{if } \swform = r \\ 1 + |\swform'|_? + |\swform''|_? ,& \text{if } \swform = \varphi \cond \swform' : \swform'' \end{cases}\]
  
  Case $|\swform_1|_? + |\swform_2|_? = 0$:
  In this case, $\swform_1 = r_1$ and $\swform_2 = r_2$ for some $r_1,r_2 \in R$.
  Since $r_1 = \sat{\swform_1}(w,\sigma) = \sat{\swform_2}(w,\sigma) = r_2$
  by assumption, we get $\Gamma \vdash \swform_1 \approx \swform_2$ by reflexivity.
  
  Case $|\swform_1|_? + |\swform_2|_? > 0$:
  In this case, without loss of generality, $\swform_1 = \varphi \cond \swform_1' : \swform_1''$.
  From the semantics, we have that $\swform_1 \sim_{\Gamma \cup \{\varphi \}} \swform_1'$ and $\swform_1 \sim_{\Gamma \cup \{\neg \varphi \}} \swform_1''$, so $\swform_2 \sim_{\Gamma \cup \{\varphi \}} \swform_1'$ and $\swform_2 \sim_{\Gamma \cup \{\neg \varphi \}} \swform_1''$.
  From the inductive hypothesis, we have that
  \(
  \Gamma \cup \{\varphi\} \vdash \swform_1' \approx \swform_2, \text{ and }
  \Gamma \cup \{\neg \varphi\} \vdash \swform_1'' \approx \swform_2
  .
\)
  From axiom (S4), we conclude that 
 \( \Gamma \vdash \swform_1 \approx \swform_2.\)
\end{proof}

\subsection{\cwMSO\ Without Sums}
We 
present
a complete axiomatization of a fragment of $\cwMSO$
in which $+$
is the only allowed sum operator.
Let $\cwMSO(?, +)$ be the fragment 
given by
\[\cwform ::= \nil \mid \textstyle{\prod_x} \swform \mid \varphi \cond \cwform_1 : \cwform_2 \mid \cwform_1 + \cwform_2,\]
where $\swform$ is a \swMSO\ formula and $\varphi$ a \MSO\ formula.
The corresponding first-order fragment \cwFO(?,+) is
obtained from the same grammar but letting $\swform$ be a \swFO\ formula
and $\varphi$ a \FO\ formula.
Droste and Gastin studied the first-order fragment \cwFO(?,+) 
in \cite{DG19},
where they showed that it is expressively equivalent to
aperiodic finitely ambiguous weighted automata.
This result contrasts the situation for
the full first-order \cwFO,
which they show to be expressively equivalent
to aperiodic polynomially ambiguous weighted automata.
Here \emph{aperiodic} means that there exists an integer $m \geq 1$
such that for any word $w$, $w$ concatenated with itself $m$ times is accepted
if and only if $w$ concatenated with itself $m+1$ times is accepted,
\emph{polynomially ambiguous} means that there is a polynomial $p$
such that each word $w$ has at most $p(|w|)$ successful runs,
and \emph{finitely ambiguous} means that the polynomial is constant.
We are not aware of a similar characterization
of the second-order fragment \cwMSO(?,+).
In \cite{GM18} it is shown that \emph{adding} various additional operators
to the logic does not increase its expressivity,
but the question of the expressive power
of various fragments of the logic is not addressed.
In the following we give some examples of the expressivity of the fragment \cwMSO(?,+).

\begin{example}
  Consider again Example \ref{ex:1}.
  The formula in that example does not belong to 
  \cwMSO(?,+),
  because it uses the general sum $\sum_x$.
  Instead we can count the number of $a$'s that appear before any $b$'s in a word.
  To do this, consider the formula
  $\varphi = P_a(x) \land \forall y.(P_b(y) \rightarrow x \leq y)$,
  and let $\Psi = \varphi \cond 1 : 0$ and $\Phi = \prod_x \Psi$.
  For the word $w = abaa$ we then get
  \begin{align*}
    \sat{\Phi}(w,\sigma) &= \sat{\Psi}(w,\sigma[x \mapsto 1])\sat{\Psi}(w,\sigma[x \mapsto 2]) \\
    &= \sat{\Psi}(w,\sigma[x \mapsto 3])\sat{\Psi}(w,\sigma[x \mapsto 4]) = \{1000\},
  \end{align*}
  which correctly tells us that there is one $a$ before any $b$'s.
  It is a simple matter to adapt this to also count
  the collective number of different things,
  such as the total number of $a$'s and $c$'s before any $b$'s.
  However, we can also, in some sense, count individually different things.
  If for example we want to count separately the number of $a$'s
  and the number of $b$'s in a word, we can let
  \begin{align*}
    \varphi_1 = P_a(x), \quad \Psi_1 = \varphi_1 \cond 1 : 0, \quad \Phi_1 = \prod_x \Psi_1, \\
    \varphi_2 = P_b(x), \quad \Psi_2 = \varphi_2 \cond 2 : 0, \quad \Phi_2 = \prod_x \Psi_2,
  \end{align*}
  and finally $\Phi = \Phi_1 + \Phi_2$.
  If we again take $w = abaa$, then
  \[\sat{\Phi}(w,\sigma) = \sat{\Phi_1}(w,\sigma) \munion \sat{\Phi_2}(w,\sigma) = \multiset{1011, 0200},\]
  and by counting off the number of $1$'s in this multiset,
  we obtain the number of $a$'s,
  and likewise the number of $2$'s gives number of $b$'s.
\end{example}

For a formula $\cwform$,
let $\var(\cwform)$ be the set of variables used in $\cwform$,
and let $\cwform[y/x]$ be the formula resulting from
replacing the variable $x$ with the variable $y$.
The axioms for the fragment $\cwMSO(?,+)$ are then given in Table \ref{tab:cwfo-axioms}.
Axioms ($C1$)-($C3$) give standard properties of sum,
whereas ($C4$) and ($C5$) take care of the product.
Axioms ($C6$)-($C9$) are similar to the axioms for $\swMSO$,
and finally, axiom ($C10$) shows how sum distributes over the conditional operator.

\begin{table}
  \centering
  \begin{tabular}{l l}
    \hline
    ($C1$): & $\Gamma \vdash \cwform + \nil \approx \cwform$ \\[1ex]
    ($C2$): & $\Gamma \vdash \cwform_1 + \cwform_2 \approx \cwform_2 + \cwform_1$ \\[1ex]
    ($C3$): & $\Gamma \vdash (\cwform_1 + \cwform_2) + \cwform_3 \approx \cwform_1 + (\cwform_2 + \cwform_3)$ \\[1ex]
    ($C4$): & $\begin{aligned}&\Gamma \vdash \swform_1 \approx \swform_2 \text{ implies } \Gamma \vdash \textstyle \prod_x \swform_1 \approx \prod_x \swform_2 \\[-1ex]
    &\text{if } x \text{ is not free in } \Gamma\end{aligned}$ \\[2ex]
    ($C5$): & $\Gamma \vdash \prod_x \swform \approx \prod_y \swform[y / x]$ if $y \notin \var(\swform)$ \\[1ex]
    ($C6$): & $\Gamma \vdash \cwform_1 \approx \cwform_2$ implies $\Gamma \cup \{\varphi\} \vdash \cwform_1 \approx \cwform_2$ \\[1ex]
    ($C7$): & $\Gamma \vdash \neg \varphi \cond \cwform_1 : \cwform_2 \approx \varphi \cond \cwform_2 : \cwform_1$ \\[1ex]
    ($C8$): & if $\Gamma \vdash \varphi$ then $\Gamma \vdash \varphi \cond \cwform_1 : \cwform_2 \approx \cwform_1$ \\[1ex]
    ($C9$): & $\begin{aligned}&\Gamma \cup \{\varphi\} \vdash \cwform_1 \approx \cwform \text{ and } \Gamma \cup \{\neg \varphi\} \vdash \cwform_2 \approx \cwform \\[-1ex]
    &\text{implies } \Gamma \vdash \varphi \cond \cwform_1 : \cwform_2 \approx \cwform \end{aligned}$ \\[2ex]
    ($C10$): & $\Gamma \vdash (\varphi \cond \cwform' : \cwform'') + \cwform \approx \varphi \cond (\cwform' + \cwform) : (\cwform'' + \cwform)$ \\
    \hline \\
  \end{tabular}
  \caption{Axioms for $\cwMSO(?,+)$.}
  \label{tab:cwfo-axioms}
\end{table}

Since all of the axioms for $\swMSO$ are also included
in the axiomatization for $\cwMSO$ (because both include the conditional operator),
we get that the theorems we derived in Proposition \ref{prop:swmso-theorems}
are also derivable for $\cwMSO(?,+)$.
Likewise, Lemma \ref{lem:completeness-swmso} also carries over to $\cwMSO(?,+)$

Our first lemma shows the connection between the product operator $\prod_x$
and the first-order 
universal
quantifier $\forall x$,
which implies that
axiom ($C4$) is sound.

\begin{lemma}\label{lem:forall}
  If $x$ does not appear as a free variable in $\Gamma,$ then 
  $\prod_x \swform_1 \sim_\Gamma \prod_x \swform_2$ if and only if $\swform_1 \sim_{\Gamma} \swform_2$.
\end{lemma}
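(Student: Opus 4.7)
The plan is a direct semantic argument, using the fact that $\sat{\prod_x \swform}(w,\sigma)$ is always a \emph{singleton} multiset containing one sequence of length $|w|$, so multiset equality of two such singletons reduces to componentwise equality of weights.

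For the easier direction ($\Leftarrow$), assume $\swform_1 \sim_\Gamma \swform_2$ and fix $(w,\sigma) \in \sat{\Gamma}$. Because $x$ is not free in $\Gamma$, the valuation $\sigma[x \mapsto i]$ still satisfies $\Gamma$ for every $i \in \{1,\dots,|w|\}$, so the hypothesis yields $\sat{\swform_1}(w,\sigma[x \mapsto i]) = \sat{\swform_2}(w,\sigma[x \mapsto i])$ for each $i$. The two sequences defining the singletons $\sat{\prod_x \swform_1}(w,\sigma)$ and $\sat{\prod_x \swform_2}(w,\sigma)$ are therefore equal, and so are the multisets.

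For the harder direction ($\Rightarrow$), assume $\prod_x \swform_1 \sim_\Gamma \prod_x \swform_2$ and fix $(w,\sigma) \in \sat{\Gamma}$; I want $\sat{\swform_1}(w,\sigma) = \sat{\swform_2}(w,\sigma)$. Let $i_0 = \sigma(x)$. By the same observation as above, $(w,\sigma[x \mapsto i]) \in \sat{\Gamma}$ for every $i$, so the hypothesis applies at $(w,\sigma)$ (and indeed at $(w,\sigma[x \mapsto i_0])$, which equals $(w,\sigma)$). Thus
\[\multiset{r_1 r_2 \dots r_{|w|}} = \multiset{r'_1 r'_2 \dots r'_{|w|}},\]
where $r_i = \sat{\swform_1}(w,\sigma[x \mapsto i])$ and $r'_i = \sat{\swform_2}(w,\sigma[x \mapsto i])$. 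Each side is a singleton multiset, so the two sequences coincide, and since they have the same length $|w|$, we get $r_i = r'_i$ for every $i$. Taking $i = i_0$ gives $\sat{\swform_1}(w,\sigma) = \sat{\swform_2}(w,\sigma)$, as required.

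The only subtlety is making sure that evaluating $\swform_1$ and $\swform_2$ at $\sigma$ is the same as evaluating at $\sigma[x \mapsto i_0]$; this is immediate from $\sigma(x)=i_0$. No obstacle of substance remains — the proof rests entirely on the fact that products abstractly produce singleton multisets, so the semantics of $\prod_x$ does not ``mix'' the contributions of different positions, allowing us to recover pointwise information.
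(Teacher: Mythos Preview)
Your proof is correct and follows essentially the same approach as the paper's: both directions hinge on the fact that $\sat{\prod_x \swform}(w,\sigma)$ is a singleton multiset, so equality of products reduces to componentwise equality of the sequences, together with the observation that $x$ not being free in $\Gamma$ means $(w,\sigma[x\mapsto i])\models\Gamma$ iff $(w,\sigma)\models\Gamma$. Your write-up is slightly more explicit about extracting the component at $i_0=\sigma(x)$, but the argument is identical to the paper's.
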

\begin{proof}
  (${\implies}$) $\prod_x \swform_1 \sim_\Gamma \prod_x \swform_2$ implies that
  $\sat{\swform_1}(w, \sigma[x \mapsto i]) = \sat{\swform_2}(w, \sigma[x \mapsto i])$
  for all $i$ and $(w,\sigma)$ such that $(w, \sigma) \models \Gamma$.
  In particular, 
  $\sat{\swform_1}(w,\sigma) = \sat{\swform_2}(w,\sigma)$
  for all $(w,\sigma) \models \Gamma$,
  so $\swform_1 \sim_{\Gamma} \swform_2$.
  
  ($\impliedby$) $\swform_1 \sim_{\Gamma} \swform_2$
  means that $\sat{\swform_1}(w,\sigma) = \sat{\swform_2}(w,\sigma)$
  for all $(w,\sigma) \models \Gamma$.
  This implies that $\sat{\swform_1}(w,\sigma[x \mapsto i]) = \sat{\swform_2}(w,\sigma[x \mapsto i])$
  for all $i$ and $(w,\sigma[x \mapsto i]) \models \Gamma$.
  But since $x$ does not appear free in $\Gamma$, 
  $(w,\sigma[x \mapsto i]) \models \Gamma$ if and only if $(w,\sigma) \models \Gamma$, and therefore
  $\sat{\swform_1}(w,\sigma[x \mapsto i]) = \sat{\swform_2}(w,\sigma[x \mapsto i])$
  for all $i$ and $(w,\sigma) \models \Gamma$.
  This in turn implies 
  $\sat{\prod_x \swform_1}(w,\sigma) = \sat{\prod_x \swform_2}(w,\sigma)$
  for all $(w,\sigma) \models \Gamma$, so $\prod_x \swform_1 \sim_\Gamma \prod_x \swform_2$.
\end{proof}

%

A key part of the proof of completeness is to put formulas
into the following notion of normal form,
where occurrences of the conditional operator are grouped together
and all come before any sum or product is applied.

\begin{definition}
  A $\cwMSO(?, +)$ formula $\cwform$ is in \emph{normal form} if 
  $\cwform$ is generated by the following grammar:
\[ N ::= \varphi \cond N_1 : N_2 \mid M \mid \nil \quad \text{and} \quad 
 M ::= {\textstyle\prod_x} \swform \mid M_1 + M_2.\]
\end{definition}

Every
$\cwMSO(?,+)$ has an equivalent normal form,
which will allow us to only reason about formulas in normal form in the proof.
In order to show this, we make use of the following technical lemma,
which takes care of the case of the sum operator.

\begin{lemma}\label{lem:conditional}
  If $\cwform_1$ and $\cwform_2$ are in normal form,
  then there exists a formula $\cwform$, also in normal form,
  such that $\Gamma \vdash \cwform \approx \cwform_1 + \cwform_2$.
\end{lemma}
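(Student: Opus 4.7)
The plan is to induct on the total number of conditional operators $?$ appearing at the top level of $\cwform_1$ and $\cwform_2$, pushing the sum inside conditionals via axiom ($C10$) until no conditionals remain outside the $+$.

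For the base case, assume neither $\cwform_1$ nor $\cwform_2$ has a conditional at the top of its syntactic structure, so each is generated by the $M$-grammar or is $\nil$. If both are $\nil$, then by ($C1$) (and symmetry) we get $\Gamma\vdash \nil + \nil \approx \nil$, which is in normal form. If exactly one of them is $\nil$, then $(C1)$ together with $(C2)$ yields that $\cwform_1 + \cwform_2$ is equivalent to the other, which is already an $M$-formula, hence in normal form. If both are $M$-formulas, then by the definition of the $M$-grammar, $\cwform_1 + \cwform_2$ itself is an $M$-formula, which is in normal form.

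For the inductive step, some top-level conditional appears in one of the operands. Suppose first that $\cwform_1 = \varphi \cond \cwform_1' : \cwform_1''$, where $\cwform_1'$ and $\cwform_1''$ are in normal form by the definition of $N$. Apply axiom ($C10$) to obtain
\[\Gamma \vdash \cwform_1 + \cwform_2 \approx \varphi \cond (\cwform_1' + \cwform_2) : (\cwform_1'' + \cwform_2).\]
Each of $\cwform_1' + \cwform_2$ and $\cwform_1'' + \cwform_2$ strictly decreases the induction measure, so the induction hypothesis produces normal-form formulas $\cwform'$ and $\cwform''$ with $\Gamma \vdash \cwform_1' + \cwform_2 \approx \cwform'$ and $\Gamma \vdash \cwform_1'' + \cwform_2 \approx \cwform''$. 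Applying the congruence rule $(\textsf{cong?})$ gives $\Gamma\vdash \cwform_1 + \cwform_2 \approx \varphi \cond \cwform' : \cwform''$, which is in normal form by the $N$-grammar. The remaining case, where $\cwform_1$ has no top-level conditional but $\cwform_2 = \varphi \cond \cwform_2' : \cwform_2''$, is handled by first using ($C2$) to rewrite $\cwform_1 + \cwform_2 \approx \cwform_2 + \cwform_1$ and then proceeding as above.

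The main obstacle is choosing a termination measure that actually decreases under ($C10$); naively counting conditionals in $\cwform_1 + \cwform_2$ works because ($C10$) strips one conditional from the top of $\cwform_1$ and duplicates only the already-smaller subformulas $\cwform_1'$, $\cwform_1''$ (which are in normal form by assumption) into the new summands. Once this is set up, the rest is an almost mechanical case analysis combined with the equational-logic closure rules and the already-established theorems of Proposition~\ref{prop:cwmso-theorems}.
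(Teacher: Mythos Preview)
Your proof is correct and follows essentially the same strategy as the paper: induct on a conditional-count measure and use ($C10$) to push $+$ under the outermost conditional. The only cosmetic difference is that the paper inducts on $\max\{\ncond(\cwform_1),\ncond(\cwform_2)\}$ and, in the step, distributes $+$ through the conditionals on \emph{both} sides at once via three applications of ($C10$), whereas you induct on the total number of conditionals and peel off one side at a time; your version is arguably cleaner and avoids the paper's three-way case split.
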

\begin{proof}
  The proof is by induction on the maximum number of nested occurrences
  of the conditional operator within $\cwform_1$ and $\cwform_2$.
  Note that since these are in normal form,
  occurrences of the conditional operator will always appear
  consecutively as the outermost operators.
  Formally, we define, on formulas $\cwform$ in normal form,
  the following function which counts the number of nested occurrences of the conditional operator:
  \[\ncond(\cwform) = \begin{cases} 1 + \max\{\ncond(\cwform'),\ncond(\cwform'')\} & \text{if } \cwform = \varphi \cond \cwform' : \cwform'' \\ 0 & \text{otherwise.}\end{cases}\]
  Let $k = \max\{\ncond(\cwform_1),\ncond(\cwform_2)\}$.
  
  $k = 0$:
  This case follows essentially from ($C1$).
  
  $k > 0$:
  We have three cases to consider:
  (1) $\ncond(\cwform_1) = \ncond(\cwform_2)$,
  (2) $\ncond(\cwform_1) < \ncond(\cwform_2)$, or
  (3) $\ncond(\cwform_1) > \ncond(\cwform_2)$.

  (1) Consider $\cwform_1 = \varphi_1 \cond \cwform_1' : \cwform_1''$
  and $\cwform_2 = \varphi_2 \cond \cwform_2' : \cwform_2''$.
  Now, by three applications of axiom ($C10$),
  we get
  \begin{align*}
    \Gamma &\vdash \varphi_1 \cond \cwform_1' : \cwform_1'' + \varphi_2 \cond \cwform_2' : \cwform_2'' \\
    &\phantom{{}\vdash{}}\approx \varphi_1 \cond (\varphi_2 \cond \cwform_1' + \cwform_2' : \cwform_1' + \cwform_2'') \\
    &\phantom{{}\approx{} \varphi_1}: (\varphi_2 \cond \cwform_1'' + \cwform_2' : \cwform_1'' + \cwform_2'').
  \end{align*}
  Since $\ncond(\cwform_1' + \cwform_2') < k$, $\ncond(\cwform_1' + \cwform_2'') < k$,
  $\ncond(\cwform_1'' + \cwform_2') < k$, and $\ncond(\cwform_1'' + \cwform_2'') < k$,
  the induction hypothesis gives formulas $\cwform'$, $\cwform''$, $\cwform'''$, and $\cwform''''$,
  all in normal form, such that
  $\Gamma \vdash \cwform' \approx \cwform_1' + \cwform_2'$,
  $\Gamma \vdash \cwform'' \approx \cwform_1' + \cwform_2''$,
  $\Gamma \vdash \cwform''' \approx \cwform_1'' + \cwform_2'$, and
  $\Gamma \vdash \cwform'''' \approx \cwform_1'' + \cwform_2''$.
  Thus
  \[\cwform = \varphi_1 \cond (\varphi_2 \cond \cwform' : \cwform'') : (\varphi_2 \cond \cwform''' : \cwform'''')\]
  is in normal form and satisfies $\Gamma \vdash \cwform \approx \cwform_1 + \cwform_2$.
  
  (2), (3) These cases are simpler versions of case (1).
\end{proof}

\begin{lemma}\label{lem:normal-form}
  For each $\Gamma$ and $\cwMSO(?, +)$ formula $\cwform$,
  there is a formula $\cwform'$ in normal form such that
  $\Gamma \vdash \cwform \approx \cwform'$.
\end{lemma}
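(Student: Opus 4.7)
The plan is to prove the statement by structural induction on the $\cwMSO(?,+)$ formula $\cwform$, using Lemma~\ref{lem:conditional} to handle the difficult case and relying on the congruence rules (\textsf{cong?}) and (\textsf{cong+}) together with axioms (C4) and (C5) to lift inductive hypotheses through the remaining constructors.

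For the base cases, observe that $\nil$ is a normal-form $N$, and $\prod_x \swform$ is a normal-form $M$, hence a normal-form $N$; in both cases we take $\cwform' = \cwform$ and conclude by reflexivity.

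For the inductive step there are three cases to consider, corresponding to the three non-base productions of the grammar. If $\cwform = \varphi \cond \cwform_1 : \cwform_2$, the induction hypothesis provides normal forms $\cwform_1'$ and $\cwform_2'$ with $\Gamma \vdash \cwform_1 \approx \cwform_1'$ and $\Gamma \vdash \cwform_2 \approx \cwform_2'$, and then (\textsf{cong?}) yields $\Gamma \vdash \cwform \approx \varphi \cond \cwform_1' : \cwform_2'$, where the right-hand side is again a normal-form $N$ by the $\varphi \cond N_1 : N_2$ production. The case $\cwform = \cwform_1 + \cwform_2$ proceeds similarly: apply the induction hypothesis to both summands to obtain normal forms $\cwform_1'$ and $\cwform_2'$, use (\textsf{cong+}) to get $\Gamma \vdash \cwform \approx \cwform_1' + \cwform_2'$, and then invoke Lemma~\ref{lem:conditional} on the two normal-form summands to produce a normal-form $\cwform''$ with $\Gamma \vdash \cwform'' \approx \cwform_1' + \cwform_2'$; transitivity yields $\Gamma \vdash \cwform \approx \cwform''$.

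The main obstacle is precisely the sum case, but essentially all the work for it has been pushed into Lemma~\ref{lem:conditional}, whose proof distributes $+$ over nested conditionals using axiom (C10) until the conditionals are exposed at the top. Once Lemma~\ref{lem:conditional} is in hand, the present lemma is a straightforward structural induction requiring only the congruence rules and transitivity. Note that because the grammar of $\cwMSO(?,+)$ does not include $\sum_x$ or $\sum_X$, we never need axioms (C4) or (C5) here, so no side conditions on free variables arise; we can treat $\prod_x \swform$ as an opaque $M$-leaf without opening it up.
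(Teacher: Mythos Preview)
Your proposal is correct and follows essentially the same structural induction as the paper's proof, with the same use of Lemma~\ref{lem:conditional} for the sum case and the congruence rules for the conditional case. The only slip is the phrase ``three cases'' for the inductive step---there are only two non-base productions (conditional and sum) in the $\cwMSO(?,+)$ grammar---but you then handle exactly those two, so the argument goes through.
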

\begin{proof}
  The proof is by induction on the structure of $\cwform$.
  
  $\cwform = \nil$ or $\cwform = \prod_x \swform$:
  Then, 
  $\cwform$ is already in normal form.
  
  $\cwform = \cwform_1 + \cwform_2$:
  By induction hypothesis, there exist formulas $\cwform_1'$ and $\cwform_2'$,
  both in normal form, such that $\Gamma \vdash \cwform_1 \approx \cwform_1'$
  and $\Gamma \vdash \cwform_2 \approx \cwform_2'$.
  By Lemma \ref{lem:conditional},
  there exists a formula $\cwform'$ in normal form such that
  $\Gamma \vdash \cwform' \approx \cwform_1' + \cwform_2'$.
  By congruence we get $\Gamma \vdash \cwform_1 + \cwform_2 \approx \cwform_1' + \cwform_2'$,
  so $\Gamma \vdash \cwform \approx \cwform'$.
  
  $\cwform = \varphi \cond \cwform_1 : \cwform_2$:
  By induction hypothesis there exist $\cwform_1'$ and $\cwform_2'$
  in normal form such that $\Gamma \vdash \cwform_1 \approx \cwform_1'$
  and $\Gamma \vdash \cwform_2 \approx \cwform_2'$.
  Then $\cwform' = \varphi \cond \cwform_1' : \cwform_2'$ is in normal form
  and, by congruence, $\Gamma \vdash \cwform \approx \cwform'$.
\end{proof}

Notice that for formulas in normal form,
if it is not the case that $\cwform = \varphi \cond \cwform_1 : \cwform_2$,
then $\cwform$ can not contain any conditional statements at all,
and hence $\cwform$ must be of the form $\cwform = \sum_{i = 1}^k \prod_x \cwform_i$ for some $k$ (axioms (C2) and (C3) allow us to use this finite sum notation).
The following series of lemmas shows that for formulas of this form, it is enough
to consider each of the summands pairwise.

\begin{lemma}\label{lem:prod_formula}
  Given two formulas $\swform_1$ and $\swform_2$,
  there exists a formula $\varphi_{\swform_1,\swform_2}$ such that
  $(w,\sigma) \models \forall x.\varphi_{\swform_1,\swform_2}$ if and only if
  $\sat{\prod_x \swform_1}(w,\sigma) = \sat{\prod_x \swform_2}(w,\sigma)$.
  In particular,
  \[\prod_x \swform_1 \sim_{\Gamma \cup \{\forall x.\varphi_{\swform_1,\swform_2}\}} \prod_x \swform_2.\]
\end{lemma}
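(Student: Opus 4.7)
My plan is to build $\varphi_{\swform_1,\swform_2}$ as the \MSO\ formula that characterises pointwise equality of the two step functions at the position bound to $x$, and then observe that universal quantification over $x$ captures componentwise equality of the word produced by each $\prod_x$.

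First, I would invoke Lemma~\ref{lem:WtoMSO}, which for every $\swform$ and every $r$ yields an \MSO\ formula $\varphi(\swform,r)$ such that $(w,\sigma)\in\sat{\varphi(\swform,r)}$ iff $\sat{\swform}(w,\sigma)=r$. Since the semantics of a \swMSO\ formula $\swform$ on any $(w,\sigma)$ is necessarily a weight that syntactically occurs in $\swform$, i.e., an element of $R(\swform)$, the two step functions can agree at a point only on a weight lying in $R(\swform_1)\cap R(\swform_2)$. I would therefore define
\[\varphi_{\swform_1,\swform_2} \;\defeq\; \bigvee_{r\in R(\swform_1)\cap R(\swform_2)}\bigl(\varphi(\swform_1,r)\land\varphi(\swform_2,r)\bigr),\]
which is a finite disjunction because $R(\swform_i)$ is finite, and hence a legitimate \MSO\ formula with $x$ as a free variable (since $x$ is free in the $\varphi(\swform_i,r)$'s via the occurrences of $x$ in $\swform_i$).

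The next step is to verify that, for every position $i\in\{1,\dots,|w|\}$, $(w,\sigma[x\mapsto i])\models\varphi_{\swform_1,\swform_2}$ iff $\sat{\swform_1}(w,\sigma[x\mapsto i])=\sat{\swform_2}(w,\sigma[x\mapsto i])$; this is a direct application of Lemma~\ref{lem:WtoMSO} together with the observation above that any common value must lie in $R(\swform_1)\cap R(\swform_2)$. Universally quantifying over $x$ then gives that $(w,\sigma)\models\forall x.\varphi_{\swform_1,\swform_2}$ iff $\sat{\swform_1}(w,\sigma[x\mapsto i])=\sat{\swform_2}(w,\sigma[x\mapsto i])$ for every $i\in\{1,\dots,|w|\}$. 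By the semantic clause for $\prod_x$, the latter is exactly the condition that the two singleton multisets
\(\sat{\prod_x\swform_1}(w,\sigma)=\multiset{r_1\cdots r_{|w|}}\) and \(\sat{\prod_x\swform_2}(w,\sigma)=\multiset{r_1'\cdots r_{|w|}'}\)
coincide, since multisets of singletons are equal iff their unique words are.

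The ``in particular'' clause then follows immediately: for any $(w,\sigma)$ satisfying $\Gamma\cup\{\forall x.\varphi_{\swform_1,\swform_2}\}$ we have $(w,\sigma)\models\forall x.\varphi_{\swform_1,\swform_2}$, so by the equivalence just established $\sat{\prod_x\swform_1}(w,\sigma)=\sat{\prod_x\swform_2}(w,\sigma)$, giving $\prod_x\swform_1\sim_{\Gamma\cup\{\forall x.\varphi_{\swform_1,\swform_2}\}}\prod_x\swform_2$. I do not expect a genuine obstacle here; the only subtle point is the restriction of the disjunction to $R(\swform_1)\cap R(\swform_2)$, which must be justified by the fact that $\sat{\swform_i}$ ranges only over $R(\swform_i)$, so a disjunct indexed by a weight outside the intersection could never simultaneously satisfy both $\varphi(\swform_1,r)$ and $\varphi(\swform_2,r)$ and is therefore safely omitted.
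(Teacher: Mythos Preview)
Your proposal is correct and follows essentially the same approach as the paper: both define $\varphi_{\swform_1,\swform_2}$ as the finite disjunction $\bigvee_{r\in R(\swform_1)\cap R(\swform_2)}\varphi(\swform_1,r)\land\varphi(\swform_2,r)$ via Lemma~\ref{lem:WtoMSO}, and then pass to $\forall x$ to capture componentwise equality of the $\prod_x$ values. The only cosmetic difference is that the paper treats the case $R(\swform_1)\cap R(\swform_2)=\emptyset$ separately by setting $\varphi_{\swform_1,\swform_2}=\false$, whereas your uniform definition yields the empty disjunction, which is $\false$ anyway.
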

\begin{proof}
  Consider the sets $R_1$ and $R_2$ of values that appear in $\swform_1$ and $\swform_2$, respectively.
  If these sets are disjoint,
  then $\sat{\prod_x \swform_1}(w,\sigma) \neq \sat{\prod_x \swform_2}(w,\sigma)$
  for all $(w,\sigma)$, so we can take $\varphi_{\swform_1,\swform_2} = \false$.
  
  If they are not disjoint, consider any $r \in R_1 \cap R_2$.
  From Lemma \ref{lem:WtoMSO}, 
  $(w,\sigma) \models \varphi(\swform_1,r)$ if and only if
  $\sat{\swform_1}(w,\sigma) = r$; and 
  $(w,\sigma) \models \varphi(\swform_2,r)$
  if and only if $\sat{\swform_2}(w,\sigma) = r$.
  Now we take $\varphi^r_{\swform_1,\swform_2} = \varphi(\swform_1,r) \land \varphi(\swform_2,r)$ and 
  \[\varphi_{\swform_1,\swform_2} = \bigvee_{r \in R_1 \cap R_2} \varphi^r_{\swform_1,\swform_2}.\]
  We now have a formula $\varphi_{\swform_1,\swform_2}$ such that, for all $(w,\sigma)$,
  $(w,\sigma) \models \varphi_{\swform_1,\swform_2}$ if and only if
  $\sat{\swform_1}(w,\sigma) = \sat{\swform_2}(w,\sigma)$.
  This is equivalent to
  \begin{align*}
  \forall (w,\sigma).\forall i &\in \{1,\dots,|w|\}.(w,\sigma[x \mapsto i]) \models \varphi_{\swform_1,\swform_2} \\ \text{ iff } &\sat{\swform_1}(w,\sigma[x \mapsto i]) = \sat{\swform_2}(w,\sigma[x \mapsto i])
  \end{align*}
  which implies that
  $(w,\sigma) \models \forall x. \varphi_{\swform_1,\swform_2}$  iff $\sat{{\textstyle\prod_x} \swform_1}(w,\sigma) = \sat{{\textstyle \prod_x} \swform_2}(w,\sigma)$ for all $(w,\sigma)$.
\end{proof}

\begin{lemma}\label{lem:deduction}
  If $\Gamma \vdash \bigvee_{m=1}^n \varphi_m$ and for every $m$,
  it holds that $\Gamma \cup \{\varphi_m\} \vdash \cwform_1 \approx \cwform_2$,
  then $\Gamma \vdash \cwform_1 \approx \cwform_2$.
\end{lemma}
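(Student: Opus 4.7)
My plan is to prove Lemma~\ref{lem:deduction} by induction on $n$, using axiom (C9) and the $\cwMSO$-analogues of the derived theorems from Proposition~\ref{prop:swmso-theorems}, which are available since (C6)--(C9) mirror (S1)--(S4) exactly.

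For the base case $n=1$, the hypotheses are $\Gamma \vdash \varphi_1$ and $\Gamma \cup \{\varphi_1\} \vdash \cwform_1 \approx \cwform_2$. The $\cwMSO$ version of Proposition~\ref{prop:swmso-theorems}(2) (cut elimination) immediately yields $\Gamma \vdash \cwform_1 \approx \cwform_2$.

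For the inductive step, assume the claim for $n$ and consider $n+1$ disjuncts. Let $\psi = \varphi_{n+1}$. Since $\Gamma \vdash \bigvee_{m=1}^{n+1}\varphi_m$, Corollary~\ref{cor:completeness-mso} (plus weakening on the \MSO\ side) gives $\Gamma \cup \{\neg\psi\} \vdash \bigvee_{m=1}^{n} \varphi_m$. By (C6), each of the original assumptions $\Gamma \cup \{\varphi_m\} \vdash \cwform_1 \approx \cwform_2$ weakens to $\Gamma \cup \{\neg\psi,\varphi_m\} \vdash \cwform_1 \approx \cwform_2$ for $m \le n$. The induction hypothesis, applied with context $\Gamma \cup \{\neg\psi\}$, then delivers $\Gamma \cup \{\neg\psi\} \vdash \cwform_1 \approx \cwform_2$. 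Together with the hypothesis $\Gamma \cup \{\psi\} \vdash \cwform_1 \approx \cwform_2$, I can invoke the $\cwMSO$-analogue of Proposition~\ref{prop:swmso-theorems}(8) (case analysis on $\psi$) to conclude $\Gamma \vdash \cwform_1 \approx \cwform_2$.

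The case-analysis step is worth unpacking, since it is the only nontrivial piece: from the two branches $\Gamma \cup \{\psi\} \vdash \cwform_1 \approx \cwform_2$ and $\Gamma \cup \{\neg\psi\} \vdash \cwform_1 \approx \cwform_2$, axiom (C9) (with both branches of the conditional set to $\cwform_1$ and the target set to $\cwform_2$) gives $\Gamma \vdash \psi \cond \cwform_1 : \cwform_1 \approx \cwform_2$; combining with the $\cwMSO$-analogue of Proposition~\ref{prop:swmso-theorems}(3), which states $\Gamma \vdash \psi \cond \cwform_1 : \cwform_1 \approx \cwform_1$, and using transitivity and symmetry from Table~\ref{tab:eq-axioms} yields $\Gamma \vdash \cwform_1 \approx \cwform_2$.

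The main (mild) obstacle is the bookkeeping required to pass from the \MSO-side disjunction to an induction on disjuncts without implicitly appealing to a deduction theorem that is already built into the axioms; I address this by routing through the $\Gamma \cup \{\neg\psi\}$ context and relying on (C6) to weaken the remaining hypotheses, so that the induction hypothesis fires cleanly at arity $n$.
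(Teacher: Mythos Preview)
Your proof is correct and follows essentially the same approach as the paper: induction on $n$, with the base case handled by Proposition~\ref{prop:cwmso-theorems}(2), and the inductive step by deriving $\Gamma \cup \{\neg\varphi_{n+1}\} \vdash \bigvee_{m=1}^{n}\varphi_m$, applying the induction hypothesis in the extended context, and closing with Proposition~\ref{prop:cwmso-theorems}(8). Your additional unpacking of the case-analysis step via (C9) and Proposition~\ref{prop:cwmso-theorems}(3) is fine extra detail but not a departure from the paper's argument.
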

\begin{proof}
  The proof is by induction on $n$.
  The case of $n = 1$ is trivial: 
  we have assumed that $\Gamma \cup \{\varphi_1\} \vdash \cwform_1 \approx \cwform_2$,
  so $\Gamma \vdash \cwform_1 \approx \cwform_2$ from Proposition \ref{prop:cwmso-theorems}(2).
  Now, let $n = k+1$.
  We have $\Gamma \cup \{ \varphi_n \} \vdash \cwform_1 \approx \cwform_2$ and 
  $\Gamma \cup \{ \neg \varphi_n \} \vdash 
  \bigvee_{m=1}^k \varphi_m$, so by the inductive hypothesis, 
  $\Gamma \cup \{ \neg \varphi_n \} \vdash 
  \cwform_1 \approx \cwform_2$.
  Hence, Proposition \ref{prop:cwmso-theorems}(8) gives 
  $\Gamma \vdash \cwform_1 \approx \cwform_2$.
\end{proof}

\begin{lemma}\label{lem:completeness-sum}
  Let $\Gamma$ be finite.
  Assume $\cwform_1 = \sum_{i=1}^k \prod_x \swform_i$
  and $\cwform_2 = \sum_{j=1}^k \prod_x \swform_j'$
  with $\cwform_1 \sim_\Gamma \cwform_2$,
  and assume that for all $i$ and $j$
  $\prod_x \swform_i \sim_\Gamma \prod_x \swform_j'$ implies
  $\Gamma \vdash \prod_x \swform_i \approx \prod_x \swform_j'$.
  Then $\Gamma \vdash \cwform_1 \approx \cwform_2$.
\end{lemma}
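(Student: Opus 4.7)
The plan is to case-split on which bijection pairs up the summands of $\cwform_1$ and $\cwform_2$, and then combine the cases via Lemma \ref{lem:deduction}. For each pair $(i,j)$, Lemma \ref{lem:prod_formula} supplies an \MSO\ formula $\psi_{i,j} = \forall x.\varphi_{\swform_i,\swform_j'}$ such that $(w,\sigma) \models \psi_{i,j}$ iff $\sat{\prod_x \swform_i}(w,\sigma) = \sat{\prod_x \swform_j'}(w,\sigma)$. For each permutation $\pi$ of $\{1,\dots,k\}$ I would set $\varphi_\pi = \bigwedge_{i=1}^k \psi_{i,\pi(i)}$. Since multiset equality of two $k$-element multisets is equivalent to the existence of a bijection realising the pairing, the assumption $\cwform_1 \sim_\Gamma \cwform_2$ translates into $\Gamma \models \bigvee_\pi \varphi_\pi$, whence $\Gamma \vdash \bigvee_\pi \varphi_\pi$ by Corollary \ref{cor:completeness-mso}.

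Next, for each permutation $\pi$ I would derive $\Gamma \cup \{\varphi_\pi\} \vdash \cwform_1 \approx \cwform_2$. By construction, $\prod_x \swform_i \sim_{\Gamma \cup \{\varphi_\pi\}} \prod_x \swform_{\pi(i)}'$ for every $i$. Applying the hypothesis of the lemma in this extended context yields $\Gamma \cup \{\varphi_\pi\} \vdash \prod_x \swform_i \approx \prod_x \swform_{\pi(i)}'$, and axioms $(C2)$, $(C3)$, and $(\textsf{cong+})$ combine these $k$ equivalences into $\Gamma \cup \{\varphi_\pi\} \vdash \cwform_1 \approx \cwform_2$. With $\Gamma \vdash \bigvee_\pi \varphi_\pi$ already in hand, Lemma \ref{lem:deduction} then delivers the goal $\Gamma \vdash \cwform_1 \approx \cwform_2$.

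The main obstacle is the syntactic derivation $\Gamma \cup \{\varphi_\pi\} \vdash \prod_x \swform_i \approx \prod_x \swform_{\pi(i)}'$ from its semantic counterpart, since the hypothesis of the lemma is phrased only for $\Gamma$ and not for extensions. Should it not transfer automatically to $\Gamma \cup \{\varphi_\pi\}$, I would derive each pairwise equivalence directly: after $\alpha$-renaming via $(C5)$ so that $x$ is not free in $\Gamma$ (and noting that $x$ is bound inside each $\psi_{i,\pi(i)}$), Lemma \ref{lem:forall} reduces the product equivalence to $\swform_i \sim_{\Gamma \cup \{\varphi_\pi\}} \swform_{\pi(i)}'$, Theorem \ref{thm:completeness-swmso} supplies the corresponding $\approx$-derivation, and axiom $(C4)$ lifts it back to the products.
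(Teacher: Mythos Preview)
Your proposal is correct and follows essentially the same route as the paper: build the \MSO\ formulas $\psi_{i,j}$ from Lemma~\ref{lem:prod_formula}, form $\varphi_\pi$ as their conjunction over a permutation, use Corollary~\ref{cor:completeness-mso} to obtain $\Gamma \vdash \bigvee_\pi \varphi_\pi$, derive $\Gamma \cup \{\varphi_\pi\} \vdash \cwform_1 \approx \cwform_2$ from the pairwise product equivalences via $(C2)$, $(C3)$, $(\textsf{cong+})$, and conclude by Lemma~\ref{lem:deduction}.

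The concern you raise in your third paragraph is apt, and in fact the paper's own proof makes exactly the move you flag: it invokes the hypothesis ``by assumption'' with the enlarged context $\Gamma \cup \{\forall x.\varphi_{i,j_i}\}$ rather than with $\Gamma$ itself. In the paper this is harmless because the lemma is only ever applied inside the induction of Lemma~\ref{lem:nf-completeness}, whose induction hypothesis is implicitly quantified over all finite contexts. Your fallback route---rename via $(C5)$, reduce via Lemma~\ref{lem:forall}, appeal to Theorem~\ref{thm:completeness-swmso}, and lift with $(C4)$---is precisely what the paper does in the depth-$0$ base case of Lemma~\ref{lem:nf-completeness}, and it cleanly closes the gap without relying on the stated hypothesis at all.
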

\begin{proof}
  By definition, $\cwform_1 \sim_\Gamma \cwform_2$ means that
  for all $(w,\sigma) \in \sat{\Gamma}$
  there exists a permutation $(j_1, \dots, j_k)$ of $\{1,2,\ldots,k\}$, such that 
  for all $i$,
  \begin{equation}\label{eq:proof1}
  \sat{\textstyle{\prod_x} \swform_i}(w,\sigma) = \sat{\textstyle{\prod_x} \swform_{j_i}'}(w,\sigma)
  .
  \end{equation}
  
  By Lemma \ref{lem:prod_formula},
  for each such permutation $P = (j_1, \dots, j_k)$ there exist
  formulas $\varphi_{1,j_1}, \dots, \varphi_{k, j_k}$ such that
  \[\prod_x \swform_i \sim_{\Gamma \cup \{\forall x. \varphi_{i,j_i}\}} \prod_x \swform_{j_i}',\]
  and by assumption, this gives
  \begin{equation}\label{eq:proof2}
  \Gamma \cup \{\forall x. \varphi_{i, j_i}\} \vdash \prod_x \swform_i \approx \prod_x \swform_{j_i}'.
  \end{equation}
  
  For each permutation $P = \{j_1, \dots, j_k\}$, let
  \[\varphi_P = (\forall x.\varphi_{1,j_1}) \land \dots \land (\forall x. \varphi_{k, j_k}).\]
  By Equation \eqref{eq:proof1} and Lemma \ref{lem:prod_formula},
  for every $(w,\sigma) \in \sat{\Gamma}$ there exists a permutation
  $P = (j_1, \dots, j_k)$ such that we have $(w,\sigma) \models \varphi_P$.
  This means that for all $(w,\sigma) \in \sat{\Gamma}$
  we have $(w,\sigma) \models \bigvee_P \varphi_P$.
  By Corollary \ref{cor:completeness-mso},
  this means that $\Gamma \vdash \bigvee_P \varphi_P$.
  
  Now, from Equation \eqref{eq:proof2},
  we can use ($C6$) to get
  \[\Gamma \cup \{\varphi_P\} \cup \{\forall x.\varphi_{i,j_i}\} \vdash \prod_x \swform_i \approx \prod_x \swform'_{j_i},\]
  and together with $\Gamma \cup \{\varphi_P\} \vdash \forall x.\varphi_{i,j_i}$,
  this gives
  $\Gamma \cup \{\varphi_P\} \vdash \prod_x \swform_i \approx \prod_x \swform_{j_i}'$
  by Proposition \ref{prop:cwmso-theorems}(2).
  We can then use congruence to get
  \begin{equation}\label{eq:proof3}
  \Gamma \cup \{\varphi_P\} \vdash \sum_{i = 1}^k \prod_x \swform_i \approx \sum_{j = 1}^k \prod_x \swform_{j_i}'.
  \end{equation}
  Since $\sum_{j = 1}^k {\textstyle\prod_x} \swform_{j_i}'$
  is a permutation of $\Phi_2$, we get by axioms ($C2$) and ($C3$) that
  $\Gamma \cup \{\varphi_P\} \vdash \Phi_2 \approx \sum_{j = 1}^k {\textstyle\prod_x} \swform_{j_i}'$, so
  \begin{equation}\label{eq:proof4}
  \Gamma \cup \{\varphi_P\} \vdash \Phi_1 \approx \Phi_2
  \end{equation}
  by Equation \eqref{eq:proof3}.
  By Lemma \ref{lem:deduction}, Equation \eqref{eq:proof4}
  together with the fact that $\Gamma \vdash \bigvee_P \varphi_P$ gives
  $
  \Gamma \vdash \cwform_1 \approx \cwform_2. 
  $
\end{proof}

We can now prove completeness for formulas in normal form,
and by Lemma \ref{lem:normal-form},
this extends to all formulas.

\begin{lemma}\label{lem:nf-completeness}
  If $\cwform_1$ and $\cwform_2$ are in normal form and $\Gamma$ is finite,
  then $\cwform_1 \sim_\Gamma \cwform_2$ implies $\Gamma \vdash \cwform_1 \approx \cwform_2$.
\end{lemma}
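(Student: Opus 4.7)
The plan is to mirror the inductive scheme used in the completeness proof of \swMSO\ (Theorem~\ref{thm:completeness-swmso}), except that the leaves of the induction are now pure sums of products rather than atomic weights. Concretely, I would induct on $\ncond(\cwform_1) + \ncond(\cwform_2)$, where $\ncond$ is the measure introduced in Lemma~\ref{lem:conditional}. Because both formulas are in normal form, decreasing this measure amounts to peeling off an outermost conditional, and the immediate sub-formulas remain in normal form.

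For the inductive step, assume without loss of generality that $\cwform_1 = \varphi \cond \cwform_1' : \cwform_1''$. The semantics of $\cond$ imply that $\cwform_2 \sim_{\Gamma \cup \{\varphi\}} \cwform_1'$ and $\cwform_2 \sim_{\Gamma \cup \{\neg \varphi\}} \cwform_1''$. Since the measure strictly decreases on both sides, the inductive hypothesis gives $\Gamma \cup \{\varphi\} \vdash \cwform_1' \approx \cwform_2$ and $\Gamma \cup \{\neg \varphi\} \vdash \cwform_1'' \approx \cwform_2$, and axiom ($C9$) concludes with $\Gamma \vdash \cwform_1 \approx \cwform_2$.

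The base case is where neither formula contains a conditional, so each is either $\nil$ or a finite sum of products. If $\Gamma$ is inconsistent, the $\cwMSO(?,+)$ analogue of Proposition~\ref{prop:cwmso-theorems}(1) closes this case immediately. Otherwise, since every $\prod_x \swform$ evaluates to a singleton multiset containing a non-empty string, consistency of $\Gamma$ together with $\cwform_1 \sim_\Gamma \cwform_2$ forces the two sides to have the same number of summands (in particular ruling out the case where only one is $\nil$). Using ($C2$) and ($C3$) to rearrange, and ($C5$) to rename all bound variables to a single fresh variable $y$ that is not free in $\Gamma$ nor in any of the involved $\swform$'s, both formulas take the shape $\sum_{i=1}^k \prod_y \swform_i$ required by Lemma~\ref{lem:completeness-sum}.

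The main obstacle is verifying the hypothesis of Lemma~\ref{lem:completeness-sum}: whenever $\prod_y \swform_i \sim_\Gamma \prod_y \swform_j'$, I must produce a derivation $\Gamma \vdash \prod_y \swform_i \approx \prod_y \swform_j'$. Because $y$ is not free in $\Gamma$, Lemma~\ref{lem:forall} converts the semantic product-equivalence into $\swform_i \sim_\Gamma \swform_j'$; completeness of \swMSO\ (Theorem~\ref{thm:completeness-swmso}) then yields $\Gamma \vdash \swform_i \approx \swform_j'$; and axiom ($C4$) lifts this derivation under the product. With the hypothesis established, Lemma~\ref{lem:completeness-sum} delivers $\Gamma \vdash \cwform_1 \approx \cwform_2$, finishing the base case. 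The only delicate points are the bookkeeping around the fresh-variable renaming and the matching of summand counts via the singleton-multiset observation, rather than any genuinely new argument.
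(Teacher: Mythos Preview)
Your proposal is correct and follows essentially the same route as the paper's proof: peel off outermost conditionals via ($C9$), then in the conditional-free case match summand counts and invoke Lemma~\ref{lem:completeness-sum}, discharging its hypothesis through Lemma~\ref{lem:forall}, \swMSO\ completeness, and ($C4$)/($C5$). The only organisational difference is that you induct on $\ncond$ and handle the sum-of-products case directly in the base case, whereas the paper inducts on a combined $\depth$ measure (counting both $\cond$ and $+$) and obtains the single-product hypothesis of Lemma~\ref{lem:completeness-sum} from the induction hypothesis at depth~$0$; the underlying argument is the same.
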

\begin{proof}
  By Proposition \ref{prop:swmso-theorems}(1), we may assume that $\Gamma$ is consistent.
  We note that for a formula $\cwform$ in normal form, if $\cwform$ is not a conditional and 
  $\cwform \neq \nil$, then for every $(w,\sigma)$, $|\sat{\cwform}(w,\sigma)|>0$, and therefore $\cwform \not \sim_\Gamma \nil$.
   The proof now proceeds by induction on $d = 
   \depth(\cwform_1)
   +
   \depth(\cwform_2)$,
  where
  $\depth(\cwform) = 0$ if $\cwform = \nil$ or $\cwform = \prod_x \swform$ and
  $\depth(\cwform) = 1 + \max\{\depth(\cwform'),\depth(\cwform'')\}$ if $\cwform = \varphi \cond \cwform' : \cwform''$ or $\cwform = \cwform' + \cwform''$.
  
  \emph{Case $d=0$.}
  We distinguish the following two subcases. 
  
  At least one of $\cwform_1$ and $\cwform_2$ is $\nil$: Then from the observation above, $\cwform_1 = \nil = \cwform_2$, and therefore $\Gamma \vdash \nil \approx \nil$ by reflexivity.
  
  $\cwform_1 = \prod_{x_1} \swform_1$ and $\cwform_2 = \prod_{x_2} \swform_2$:
 	In this case, we can find some $x \notin \var(\swform_1) \cup \var(\swform_2)$ that does not appear in $\Gamma$.
 	Then, $\prod_x \swform_1[x / x_1] \sim_\Gamma \prod_x \swform_2[x / x_2]$.
 	By Lemma \ref{lem:forall} we get $\swform_1[x / x_1] \sim_{\Gamma} \swform_2[x / x_2]$,
 	and by completeness of $\swMSO$, this implies
 	$\Gamma \vdash \swform_1[x / x_1] \approx \swform_2[x / x_2]$.
 	We can then use axiom ($C4$) to obtain
 	$\Gamma \vdash \prod_x \swform_1[x / x_1] \approx \prod_x \swform_2[x / x_2]$,
 	and finally use axiom ($C5$) to obtain
 	$\Gamma \vdash \prod_{x_1} \swform_1 \approx \prod_{x_2} \swform_2$.
  
  \emph{Case $d>0$.}
  We distinguish the following two subcases. 
  
  Without loss of generality, $\cwform_1 = \varphi \cond \cwform_1' : \cwform_1''$:
 	Then, from $\cwform_1 \sim_\Gamma \cwform_2$ we get 
 	$\cwform_1' \sim_{\Gamma \cup \{\varphi\}} \cwform_2$ and 
 	$\cwform_1'' \sim_{\Gamma \cup \{\neg \varphi\}} \cwform_2$, and by the inductive hypothesis this yields ${\Gamma \cup \{\varphi\}} \vdash \cwform_1' \approx \cwform_2$ and ${\Gamma \cup \{\neg \varphi\}} \vdash \cwform_1'' \approx \cwform_2$.
 	Axiom ($C9$) then gives us that 
 	$\Gamma \vdash \cwform_1 \approx \cwform_2$.
   
  $\cwform_1 = \sum_{i=1}^k \prod_x \swform_i$
  and $\cwform_2 = \sum_{j=1}^{k'} \prod_x \swform_j'$: 
 	Then, we must have $k = k'$
 	since otherwise $|\sat{\cwform_1}(w,\sigma)| \neq |\sat{\cwform_2}(w,\sigma)|$,
 	contradicting $\cwform_1 \sim_\Gamma \cwform_2$.
 	By the induction hypothesis, $\prod_x \swform_i \sim_\Gamma \prod_x \swform_j'$
 	implies $\Gamma \vdash \prod_x \swform_i \approx \prod_x \swform_j'$,
 	so Lemma~\ref{lem:completeness-sum} yields
 	$\Gamma \vdash \cwform_1 \approx \cwform_2$.
\end{proof}

\begin{theorem}[Completeness for $\cwMSO(\cond,+)$]\label{thm:completeness-cwmso}
  For every finite $\Gamma$ and $\cwMSO(\cond,+)$ formulas $\cwform_1$ and $\cwform_1$, we have
  $\cwform_1 \sim_\Gamma \cwform_2 \text{ if and only if } \Gamma \vdash \cwform_1 \approx \cwform_2$.
\end{theorem}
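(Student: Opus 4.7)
The theorem is essentially a bridging result that stitches together the two main pieces of machinery already developed in this subsection: the normal form reduction (Lemma \ref{lem:normal-form}) and the completeness statement for formulas already in normal form (Lemma \ref{lem:nf-completeness}). My plan is therefore to split the proof cleanly into soundness and completeness, and handle completeness by a short reduction to the normal-form case.

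For soundness, the plan is to argue that every axiom in Table \ref{tab:cwfo-axioms} together with the equational rules of Table \ref{tab:eq-axioms} preserves semantic $\Gamma$-equivalence. Most of these are immediate from the definition of $\sat{\cdot}$: axioms (C1)--(C3) follow from the monoid structure of $(\multi{R^*}, \munion, \emptyset)$; (C4) is exactly the content of Lemma~\ref{lem:forall}; (C5) is bound-variable renaming for the product operator; (C6)--(C9) are the analogues of the soundness arguments already made for \swMSO\ in Theorem \ref{thm:completeness-swmso}, with $\sat{\cdot}$ interpreted into $\multi{R^*}$ rather than $R$; and (C10) follows by distributing the conditional over $\munion$ in the semantics. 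The congruence and equivalence-relation rules are routine.

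For completeness, the idea is a three-step reduction. Suppose $\cwform_1 \sim_\Gamma \cwform_2$. First, apply Lemma \ref{lem:normal-form} to obtain formulas $\cwform_1'$ and $\cwform_2'$ in normal form with $\Gamma \vdash \cwform_1 \approx \cwform_1'$ and $\Gamma \vdash \cwform_2 \approx \cwform_2'$. Second, by the already-proved soundness direction, these derivable equalities are also semantic equivalences, so $\cwform_1 \sim_\Gamma \cwform_1'$ and $\cwform_2 \sim_\Gamma \cwform_2'$; combined with the hypothesis and transitivity of $\sim_\Gamma$, this yields $\cwform_1' \sim_\Gamma \cwform_2'$. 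Third, apply Lemma~\ref{lem:nf-completeness} to the normal-form pair to get $\Gamma \vdash \cwform_1' \approx \cwform_2'$. Finally, chain the three derivations using (\textsf{sym}) and (\textsf{trans}) from Table \ref{tab:eq-axioms} to conclude $\Gamma \vdash \cwform_1 \approx \cwform_2$.

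I do not expect any serious obstacle here, since the hard work has already been done in Lemmas \ref{lem:conditional}, \ref{lem:normal-form}, \ref{lem:prod_formula}, \ref{lem:deduction}, \ref{lem:completeness-sum}, and \ref{lem:nf-completeness}. The only point that requires a small amount of care is ensuring that soundness is in place before invoking it in the completeness argument (to transfer derivable equality to semantic equivalence in the second step), so in the write-up I would prove soundness first and explicitly cite it when passing from $\Gamma \vdash \cwform_i \approx \cwform_i'$ to $\cwform_i \sim_\Gamma \cwform_i'$.
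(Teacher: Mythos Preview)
Your proposal is correct and matches the paper's own proof essentially step for step: soundness is checked axiom by axiom (with (C4) appealing to Lemma~\ref{lem:forall} and (C6)--(C9) mirroring the \swMSO\ case), and completeness is obtained by passing to normal forms via Lemma~\ref{lem:normal-form}, transferring derivable equality to semantic equivalence by soundness, invoking Lemma~\ref{lem:nf-completeness}, and then chaining with (\textsf{sym}) and (\textsf{trans}). There is nothing to add or correct.
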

\begin{proof}
	We prove only completeness. 
  Assume $\cwform_1 \sim_\Gamma \cwform_2$.
  By Lemma \ref{lem:normal-form}, there exist formulas $\cwform_1'$ and $\cwform_2'$,
  both in normal form, such that $\Gamma \vdash \cwform_1 \approx \cwform_1'$
  and $\Gamma \vdash \cwform_2 \approx \cwform_2'$.
  By soundness,
  this implies $\cwform_1 \sim_\Gamma \cwform_1'$ and $\cwform_2 \sim_\Gamma \cwform_2'$,
  so $\cwform_1' \sim_\Gamma \cwform_2'$.
  Since these are in normal form, Lemma \ref{lem:nf-completeness}
  gives $\Gamma \vdash \cwform_1' \approx \cwform_2'$,
  and we conclude $\Gamma \vdash \cwform_1 \approx \cwform_2$.
\end{proof}

\section{An Axiomatization for Full \cwMSO}


In this section, we give a complete axiomatization for the full \cwMSO.
First, we prove a result about weighted automata that will help with the completeness proof and that will help explain one of the axioms.
We follow the definition of weighted automata, using their abstract semantics, from \cite{GM18}.

\subsection{Weighted Automata}

An $R$-weighted automaton over $\Sigma$ is a quintuple $A = (Q,\Delta,\wgt,I,F)$, where $Q$ is a nonempty and finite set of states, $I, F \subseteq Q$ are, respectively, the initial and final states of the automaton, $\Delta \subseteq Q \times \Sigma \times Q$ is the transition relation, and $\wgt : \Delta \to R$ assigns a weight from $R$ to each transition of the automaton.
A run of $A$ on a word $w \in \Sigma^*$ of length $n$ is a sequence $\delta_1
\delta_2\cdots \delta_n \in \Delta^n$, where for every $i \leq n$, $\delta_i = (q_i,a_i,q_{i+1})$, and $w = a_1 a_2 \cdots a_n$.
It is an accepting run if $q_1 \in I$ and $q_{n+1} \in F$.
We can extend the weight function $\wgt$ on runs, such that $\wgt(\delta_1
\delta_2\cdots \delta_n) = \wgt(\delta_1)\wgt(\delta_2)\cdots \wgt(\delta_n)$.
We denote as $\rho(A,w)$ the set of runs of $A$ on $w$.

The semantics of $A = (Q,\Delta,\wgt,I,F)$ is given by a function $\sat{\cdot}:\Sigma^+ \to \multi{R^*}$ in the following way:
\begin{align*}
	\sat{A}(w) = \multiset{\wgt(\rho) \mid \rho \text{ is an accepting run of $A$ on $w$}}.
\end{align*}

\begin{theorem}[\cite{GM18}]\label{thm:form-to-A}
	For every closed $\cwMSO$ formula $\cwform$, there is an $R$-weighted automaton over $\Sigma$, $A$, such that for every $w \in+\Sigma^*$, 
	$\sat{\cwform}(w) = \sat{A}(w)$.
\end{theorem}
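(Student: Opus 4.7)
The proof will follow the classical Büchi–Elgot–Trakhtenbrot pattern, but lifted to the weighted setting. First I would strengthen the statement so that it supports induction over arbitrary subformulas: for every \cwMSO\ formula $\cwform$ whose free first- and second-order variables lie in some finite set $V$, there is an $R$-weighted automaton $A_\cwform$ over the extended alphabet $\Sigma_V = \Sigma \times \{0,1\}^V$ (encoding at each position which variables are interpreted by that position) such that $\sat{A_\cwform}(\enc{w,\sigma}) = \sat{\cwform}(w,\sigma)$ for every valid encoding $\enc{w,\sigma}$. The theorem is the closed case $V = \emptyset$. The induction is on the structure of $\cwform$.

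For the base cases: $\sat{\nil}$ is realised by any automaton with no accepting runs. For $\cwform = \prod_x \swform$, I would use Lemma~\ref{lem:WtoMSO} to obtain, for each $r \in R(\swform)$, an \MSO\ formula $\varphi(\swform,r)$ over $\Sigma_{V \cup \{x\}}$ whose models are exactly those $(w, \sigma[x\mapsto i])$ for which $\sat{\swform}(w,\sigma[x\mapsto i]) = r$. Applying the classical MSO-to-DFA translation and taking a product construction, I obtain a finite automaton over $\Sigma_V \times R(\swform)$ that recognises exactly those labellings $(\enc{w,\sigma},\, r_1\cdots r_{|w|})$ with $r_i = \sat{\swform}(w,\sigma[x\mapsto i])$ for every $i$. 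Reinterpreting this automaton as an $R$-weighted automaton by taking the $R(\swform)$-component of each transition as its weight gives the desired $A_\cwform$: since for each $(w,\sigma)$ there is a unique such labelling, $A_\cwform$ has a unique accepting run on $\enc{w,\sigma}$ whose weight sequence is precisely $r_1\cdots r_{|w|}$, matching the singleton multiset prescribed by the semantics of $\prod_x \swform$.

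For the inductive cases I would use the standard closure constructions of weighted automata, all of which preserve the abstract (multiset-valued) semantics. The case $\cwform_1 + \cwform_2$ is the disjoint union of $A_{\cwform_1}$ and $A_{\cwform_2}$, whose set of accepting runs is the disjoint union of the two, so the multiset of weight sequences is their $\munion$. For $\varphi \cond \cwform_1 : \cwform_2$ I would take the Büchi automaton for $\varphi$ over $\Sigma_V$ and combine it with $A_{\cwform_1}$ and $A_{\cwform_2}$ so that the former's runs are enabled exactly on encodings satisfying $\varphi$ and the latter's exactly on those satisfying $\neg \varphi$. For $\sum_x \cwform$ the induction gives an automaton $A_\cwform$ over $\Sigma_{V \cup \{x\}}$; projecting away the $x$-coordinate (i.e., nondeterministically guessing at which position $x$ is marked) yields an automaton over $\Sigma_V$ whose accepting runs on $\enc{w,\sigma}$ are in bijection with pairs $(i,\rho)$ where $\rho$ is an accepting run of $A_\cwform$ on $\enc{w,\sigma[x\mapsto i]}$, so the weight multisets match $\bigmunion_{i} \sat{\cwform}(w,\sigma[x\mapsto i])$. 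The case $\sum_X \cwform$ is identical, projecting the $X$-coordinate instead.

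\textbf{Main obstacle.} The subtle step is the base case $\prod_x \swform$. The singleton-multiset semantics demands an automaton with exactly one accepting run on each input, whose sequence of transition weights is $r_1\cdots r_{|w|}$, while each $r_i$ is in general determined by the whole word rather than just by its prefix, so no left-to-right deterministic translation can pick $r_i$ on the fly. The construction above circumvents this by making the labelling itself part of the input (an added $R(\swform)$-component) and using the MSO formulas $\varphi(\swform,r)$ from Lemma~\ref{lem:WtoMSO} together with the Büchi–Elgot–Trakhtenbrot theorem to certify that the labelling is correct; uniqueness of the accepting run then follows because for each position exactly one $r \in R(\swform)$ satisfies $\varphi(\swform,r)$. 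Once this base case is set up correctly, the inductive constructions for $+$, the conditional, and the two quantifiers are routine.
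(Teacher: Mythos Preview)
The paper does not give its own proof of this theorem: it is stated with the citation \cite{GM18} and imported as a known result from Gastin and Monmege, so there is nothing in the present paper to compare your proposal against. Your sketch is the expected inductive construction over the syntax of \cwMSO\ using the extended alphabet $\Sigma_V$, and it is essentially how \cite{GM18} proceeds; in particular your treatment of the delicate base case $\prod_x \swform$ --- certifying the position-by-position labelling via the \MSO\ formulas $\varphi(\swform,r)$ and a deterministic automaton so as to force a unique accepting run --- is correct and is the crux of the argument. One small point worth making explicit in a full write-up: for $\sum_x$ you must intersect with the (unambiguous) automaton enforcing that the projected $x$-track marks exactly one position, and for the conditional you must use an unambiguous or deterministic automaton for $\varphi$ so as not to inflate multiplicities; you allude to both but do not spell them out.
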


\begin{remark}
	Theorem \ref{thm:form-to-A} applies only to closed formulas, yet we mainly work with possibly open formulas. 
	But this is not really a limitation, as every 
	formula $\cwform$ with a set $V$ of free variables can be thought of as a closed formula over the extended alphabet $\Sigma \cup V$.
\end{remark}

We  extend the semantic equivalence $\sim$ of formulas to weighted automata as expected, but we also introduce a bounded version of this equivalence. Specifically, for every $n \geq 0$, and for every pair $A_1,A_2$ of automata, $A_1 \sim_n A_1$, if for every $w \in \Sigma^+$ of length at most $n$, $\sat{A_1}(w) = \sat{A_2}(w)$.

\begin{theorem}\label{thm:finite-equivalence-A}
	Let $A_1$ and $A_2$ be two $R$-weighted automata over $\Sigma$, such that $A_1$ has $n_1$ states and $A_2$ has $n_2$ states.
	Then, $A_1 \sim_{n_1 + n_2 - 1} A_2$ if and only if $A_1 \sim A_2$.
\end{theorem}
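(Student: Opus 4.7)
The plan is to reduce equivalence under the abstract semantics to the classical equivalence problem for $\mathbb{N}$-weighted automata, where a bound of exactly $n_1+n_2-1$ is already known from the work of Sch\"{u}tzenberger and Tzeng. The backward direction will be immediate: if $A_1 \sim A_2$, the two automata agree on every word, so in particular they agree on all words of length at most $n_1+n_2-1$.

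For the forward direction, I would first unpack the abstract semantics: for every $w \in \Sigma^+$ and every sequence $s \in R^+$ with $|s|=|w|$, the multiplicity of $s$ in $\sat{A_i}(w)$ equals the number of accepting runs $\rho$ of $A_i$ on $w$ with $\wgt(\rho)=s$. Consequently, $A_1 \sim A_2$ (respectively $A_1 \sim_n A_2$) is equivalent to the statement that, for every such pair $(w,s)$ (respectively with $|w|\leq n$), the two automata admit the same number of accepting runs whose trace is $(w,s)$.

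Next, let $R' \subseteq R$ be the finite set of weights occurring in $A_1$ or $A_2$, and for each $i$ build an $\mathbb{N}$-weighted automaton $B_i$ over the extended alphabet $\Sigma \times R'$ with the same state set, initial set, and final set as $A_i$, and with a transition of weight $1$ from $q$ to $q'$ labeled $(a,r)$ whenever $(q,a,q') \in \Delta_i$ and $\wgt_i(q,a,q')=r$. By construction, for $v=(a_1,r_1)\cdots(a_k,r_k)$ the value $\sat{B_i}(v)$ computed in $(\mathbb{N},+,\cdot,0,1)$ counts exactly the accepting runs of $A_i$ on $a_1\cdots a_k$ whose weight sequence is $r_1\cdots r_k$. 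Hence $A_1 \sim A_2$ iff $\sat{B_1}=\sat{B_2}$, and likewise $A_1 \sim_n A_2$ iff $B_1$ and $B_2$ agree on all $(\Sigma\times R')^+$-words of length at most $n$.

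The final step is to invoke the classical Sch\"{u}tzenberger--Tzeng theorem: two $\mathbb{Q}$-weighted automata with $n_1$ and $n_2$ states are equivalent iff they agree on all words of length at most $n_1+n_2-1$, via a linear-algebra BFS argument on the span of reachable state vectors in $\mathbb{Q}^{n_1+n_2}$. Embedding $\mathbb{N}$ into $\mathbb{Q}$ and noting that each $B_i$ has exactly $n_i$ states, we conclude $B_1 \sim B_2$ and therefore $A_1 \sim A_2$. The hardest part will be aligning the two notions of equivalence cleanly across the reduction---in particular, checking that ``multiset multiplicities under the abstract semantics'' correspond faithfully to ``$\mathbb{N}$-weighted values'' on the extended alphabet---but once the reduction is in place, the bound transfers without loss because the construction preserves state counts.
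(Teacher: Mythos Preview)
Your reduction is correct, and it takes a genuinely different route from the paper. The paper gives a direct, self-contained linear-algebra argument: it defines, for each automaton $A_i$, multisets $Q_i(S,w,\gamma)$ of states reachable from $S$ along $w$ with weight trace $\gamma$, treats the disjoint union $Q(I,w,\gamma)$ as a vector in $\mathbb{Q}^{n_1+n_2}$, observes that the $n_1+n_2+1$ prefix vectors of any longer word must be linearly dependent, and propagates the dependence forward to obtain the claim by induction on $|w|$. In effect the paper re-derives the Sch\"{u}tzenberger--Tzeng argument in situ, with the weight trace $\gamma$ playing the role your extended alphabet $\Sigma\times R'$ plays.

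Your approach factors the argument: the alphabet extension turns ``multiplicity of $s$ in $\sat{A_i}(w)$'' into a plain $\mathbb{N}$-weighted value on the word $(w,s)$, the construction preserves state counts, and then the classical bound applies as a black box. This is shorter and more modular, and it makes explicit a point the paper somewhat obscures---in a remark after the theorem, the authors say that the techniques of \cite{S61} ``do not directly apply \ldots\ due to the nature of abstract semantics, which maintain the information of all the runs,'' whereas your encoding shows that a one-line reduction suffices. What the paper's direct argument buys in return is self-containment (no external citation needed for the bound) and an immediate hand-off to the polynomial-time decision procedure in the subsequent corollary, though your reduction yields that too via Tzeng's algorithm.
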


\begin{proof}
	The ``if'' direction of the theorem is trivial, and therefore we prove the ``only if'' direction.
	Let  $n = n_1 + n_2$, and let $A_1 = (Q_1,\Delta_1,\wgt,I_1,F_1)$ and $A_2 = (Q_2,\Delta_2,\wgt,I_2,F_2)$ --- the weight function is considered the same for the two automata, for convenience.
	For every word $wa \in \Sigma^+$, $\gamma r \in R^{|wa|}$, $i = 1,2$, and $S$ a set or multiset of states from $Q_i$, we define 
	$Q_i(S,\varepsilon,\varepsilon)= S$, and
	\begin{align*}
		Q_i(S,wa,\gamma r) =  
	\multiset{ &q \in Q_i \mid  \exists q' {\in} Q_i(S,w,\gamma) \text{ such that } \\  & (q',a,q) {\in} \Delta_i 
		\text{ and }  \wgt((q',a,q)) {=} r  }.
	\end{align*}
Let $Q(S,w,\gamma) = Q_1(S\cap Q_1,w,\gamma) \munion Q_2(S\cap Q_2,w,\gamma)$ and let $I = I_1 \cup I_2$.
	
	We assume
	that  $A_1 \sim_{n - 1} A_2$
	and we use strong induction on $|w|$ to prove that for every string $w$, 
	$\sat{A_1}(w) = \sat{A_2}(w)$.
	The cases for $|w| < n$ are immediate from our assumptions.
	We now consider the case where
	$w$ is of length $m > n-1$, and for every word $w'$ of length less than $m$, $\sat{A_1}(w') = \sat{A_2}(w')$. 
	Let $\rho$ be a sequence of transitions from $A_1$ or $A_2$ of length $m$.
	We prove that $\wgt(\rho)$ appears the same number of times in $\sat{A_1}(w)$ and in $\sat{A_2}(w)$, which suffices to complete the inductive proof.
	
	Since $m \geq n$, $w$ and $\rho$ have at least $n+1$ prefixes each, say $w_i$ and $\rho_i$ of length $i$, where $0 \leq i \leq n$.
	We can fix an ordering of the states of the two automata, and therefore,
	for each $i$, we can think of $Q(I,w_i,\wgt(\rho_i))$ as a vector of nonnegative integers of dimension $n$.
	These are at least $n + 1$ vectors of dimension at most $n$, so they must be linearly dependent.
	Therefore, there is some $0 < i_0 \leq n$, such that $Q(I,w_{i_0},\wgt(\rho_{i_0}))$ is a linear combination of $\{ Q(I,w_i,\wgt(\rho_i)) \mid 0 \leq i < i_0 \}$ (with rational coefficients), which we denote as $$Q(I,w_{i_0},\wgt(\rho_{i_0})) = \lambda((Q(I,w_i,\wgt(\rho_i)))_{i=0}^{i_0 - 1}).$$
	Let $w = w_{i_0} w'$ and $\rho = \rho_{i_0} \rho'$.
	By a direct inductive argument, 
	\begin{align}
	Q(I,w,\wgt(\rho)) = \lambda((Q(I,w_iw',\wgt(\rho_i\rho')))_{i=0}^{i_0 - 1}). \label{eq:linear-comb}
	\end{align}
	We observe that the number of times that
	$\wgt(\rho)$ appears in $\sat{A_1}(w)$ and in $\sat{A_2}(w)$ is the cardinality of $Q(I,w,\wgt(\rho)) \cap F_1$ and of $Q(I,w,\wgt(\rho)) \cap F_2$, respectively.
	Therefore,  for  $ k = 1,2 $,
	\begin{align*}
		\sat{A_k}(w)(\wgt(\rho)) =~& |Q(I,w,\wgt(\rho)) \cap F_k| \\
		=~& 
		| \lambda((Q(I,w_iw',\wgt(\rho_i\rho')))_{i=0}^{i_0 - 1}) \cap F_k |
		\tag*{from \eqref{eq:linear-comb}}
		\\
		=~& 
		 \lambda((|Q(I,w_iw',\wgt(\rho_i\rho')) \cap F_k|)_{i=0}^{i_0 - 1}) \\
		=~&
		 \lambda(( |\sat{A_k}(w_iw')(\wgt(\rho_i\rho'))| )_{i=0}^{i_0 - 1}), 
	\end{align*}
but, from the inductive hypothesis, for $i {=} 0$ to $i_0{-}1$, 
$
|\sat{A_1}(w_iw')(\wgt(\rho_i\rho'))| 
=
|\sat{A_2}(w_iw')(\wgt(\rho_i\rho'))| 
$, 
and therefore
			$
				\sat{A_1}(w)(\wgt(\rho))
		=
		\sat{A_2}(w)(\wgt(\rho)), 
		$
	which completes the proof.
%
\end{proof}

\begin{corollary}\label{cor:automataP}
	The equivalence problem for weighted automata is in 
	\P.
\end{corollary}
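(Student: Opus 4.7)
My plan is to extract a polynomial-time algorithm from the linear-algebraic framework introduced in the proof of Theorem~\ref{thm:finite-equivalence-A}, mirroring the classical Sch\"utzenberger--Tzeng approach to equivalence of field-weighted automata. The key observation is that, for each word $w \in \Sigma^+$ and weight sequence $\gamma \in R^{|w|}$, the multiplicity with which $\gamma$ appears in $\sat{A_k}(w)$ equals $|Q(I,w,\gamma) \cap F_k|$, and this is a fixed \emph{linear} functional of the count vector $Q(I,w,\gamma) \in \mathbb{Z}^n$, where $n = n_1 + n_2$. Hence $A_1 \sim A_2$ iff every reachable count vector $v$ satisfies $v \cdot \mathbf{1}_{F_1}^\top = v \cdot \mathbf{1}_{F_2}^\top$, a single homogeneous linear condition.

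Let $R_0$ be the finite set of weights occurring in $A_1 \cup A_2$ (of size at most $|\Delta_1| + |\Delta_2|$), and for every pair $(a,r) \in \Sigma \times R_0$ define the $n \times n$ integer matrix $M_{a,r}$ with $(M_{a,r})_{q,q'} = 1$ if $(q,a,q') \in \Delta_1 \cup \Delta_2$ has weight $r$, and $0$ otherwise. These matrices are block-diagonal with respect to the decomposition $Q_1 \uplus Q_2$, and a direct induction shows $Q(I,w,\gamma) = \mathbf{1}_I \cdot M_{a_1,r_1} \cdots M_{a_m,r_m}$ whenever $w = a_1 \cdots a_m$ and $\gamma = r_1 \cdots r_m$. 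The set of all such vectors therefore spans some subspace $V \subseteq \mathbb{Q}^n$ of dimension at most $n$.

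The algorithm performs a worklist computation of a basis $B$ of $V$: initialize $B = \{\mathbf{1}_I\}$, and as long as some $v \in B$ and $(a,r) \in \Sigma \times R_0$ yield a $v \cdot M_{a,r}$ that is not in $\operatorname{span}(B)$, append it to $B$. Since $|B| \leq n$ throughout, at most $n$ such extensions happen, and the algorithm carries out $O(n \cdot |\Sigma| \cdot |R_0|)$ matrix-vector products and Gaussian-elimination independence checks, all polynomial in the input. Finally, verify $v \cdot \mathbf{1}_{F_1}^\top = v \cdot \mathbf{1}_{F_2}^\top$ for each $v \in B$; by linearity, this holds on all of $V$ iff $A_1 \sim A_2$.

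The one subtlety requiring care is correctness of the basis-closure termination --- namely, that once $v \cdot M_{a,r}$ is linearly dependent on $B$, no subsequent extension can leave $\operatorname{span}(B)$. This is exactly the linear-dependence invariance established in the proof of Theorem~\ref{thm:finite-equivalence-A} (see Equation~\eqref{eq:linear-comb}), so no new combinatorial argument is needed. Note also that Theorem~\ref{thm:finite-equivalence-A} is not used as a blackbox complexity bound here; rather, its underlying linear structure is re-used as the engine of the algorithm, which is what makes the procedure polynomial rather than exponential in $n$.
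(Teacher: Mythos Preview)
Your proposal is correct and follows essentially the same approach as the paper: both arguments exploit that the count vectors $Q(I,w,\gamma)$ live in $\mathbb{Q}^n$, that equivalence amounts to a single linear functional vanishing on all reachable vectors, and that a basis of the reachable span can be computed by iteratively extending via the one-step transition maps while keeping at most $n$ linearly independent vectors. Your matrix formulation and worklist presentation are slightly more explicit than the paper's level-by-level iteration, but the underlying algorithm and its polynomial bound are the same.
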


\begin{proof}
	We observe from the proof of Theorem \ref{thm:finite-equivalence-A} that for every automaton $A$, word $w$, and $\gamma \in R^{|w|}$, that $Q(I,w,\gamma)$ can be computed iteratively in polynomial time, with respect to $|A|$ and $|w|$.
	Furthermore, we observe that two weighted automata $A_1 = (Q_1,\Delta_1,\wgt,I_1,F_1)$ and $A_2 = (Q_2,\Delta_2,\wgt,I_2,F_2)$  are not equivalent, if, and only if, there is a string $w$ of length at most $|Q_1|+|Q_2|$, and a $\gamma \in R^{|w|}$, such that
	\[
	|Q(I,w,\gamma) \cap F_1| ~~\neq~~ |Q(I,w,\gamma) \cap F_2|.
	\]
	We now show how
	to try all possible $w$ and $\gamma$ of length at most $|Q_1|+|Q_2|$, 
	in polynomial time. 
	Let $R_A \subseteq R$ be the set of weights that appear in $A_1$ or $A_2$. 
	Starting from $\Lambda := \{Q(I,\varepsilon,\varepsilon)\}$, repeat the following  $|Q_1|+|Q_2|$ times:
	\begin{itemize}
		\item compute $\Lambda := \{ Q(I,w a,\gamma r) \mid Q(I,w,\gamma) \in \Lambda, \ r \in R_A, a \in \Sigma\}$;  and 
		\item  replace $\Lambda$ by a maximal subset of linearly independent elements of $\Lambda$. 
	\end{itemize}
	If at any step, for some element $e \in \Lambda$, $	|e \cap F_1| ~~\neq~~ |e \cap F_2|$, then we reject; otherwise we accept.
	We maintain at most $|Q_1|+|Q_2|$ values in $\Lambda$ at every step, and 
	both steps can be done in polynomial time.
	Therefore, this is a polynomial-time algorithm for the equivalence problem.
\end{proof}

\begin{remark}
	There are similarities between the proof of our complexity bound (Theorem \ref{thm:sat-is-undec} and Corollary \ref{cor:automataP}) and \cite{S61} and \cite{cortes2007lp}. However, the techniques in these papers do not directly apply in our case, due to the nature of abstract semantics,  which maintain the information of all the runs of the automata.
	Furthermore, we observe that, using the techniques from \cite{KieferMQWW2013}, one can possibly further improve on the complexity bound of Corollary \ref{cor:automataP}.
\end{remark}

\begin{corollary}\label{cor:finite-sat-Form}
	There is a computable function $\ell : \nat \to \nat$, such that for every pair $\cwform_1$ and $\cwform_2$ of $\cwMSO$ formulas, and environment $\Gamma$, $\cwform_1 {\sim_\Gamma} \cwform_2$ if and only if 
	for every $(w,\sigma) {\in} \sat{\Gamma}$ of length at most $\ell(|\cwform_1|{+}|\cwform_2| {+} |\Gamma|)$, $\sat{\cwform_1}(w,\sigma) = \sat{\cwform_2}(w,\sigma)$.
\end{corollary}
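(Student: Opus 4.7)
The plan is to reduce $\cwform_1 \sim_\Gamma \cwform_2$ to the semantic equivalence of two closed $\cwMSO$ formulas, then to the equivalence of weighted automata via Theorem \ref{thm:form-to-A}, and finally to invoke the length bound from Theorem \ref{thm:finite-equivalence-A}.

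First, I would absorb $\Gamma$ and the valuations of free variables into the formulas themselves. Let $V$ be the finite set of variables appearing free in $\cwform_1$, $\cwform_2$, or any formula of $\Gamma$. As in the Remark following Theorem \ref{thm:form-to-A}, $\cwform_1$, $\cwform_2$, and each $\varphi \in \Gamma$ may be regarded as closed formulas over an extended alphabet $\Sigma'$ whose symbols encode the valuations of variables in $V$. Let $\varphi_V$ be an $\MSO$ sentence asserting that the encoding is well formed (each first-order variable in $V$ is marked exactly once), and set $\varphi_\Gamma = \varphi_V \land \bigwedge_{\varphi \in \Gamma} \varphi$. Define the guarded formulas $\cwform_i^\star = \varphi_\Gamma \cond \cwform_i : \nil$ for $i = 1, 2$. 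On words encoding some $(w,\sigma) \in \sat{\Gamma}$, $\sat{\cwform_i^\star}$ agrees with $\sat{\cwform_i}$; on all other inputs, $\sat{\cwform_i^\star} = \emptyset$. Therefore $\cwform_1 \sim_\Gamma \cwform_2$ iff $\cwform_1^\star \sim \cwform_2^\star$ as closed formulas over $\Sigma'$, and moreover $|\cwform_i^\star|$ is bounded by a computable function of $n := |\cwform_1| + |\cwform_2| + |\Gamma|$.

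Second, by Theorem \ref{thm:form-to-A}, I would obtain $R$-weighted automata $A_1$ and $A_2$ over $\Sigma'$ with $\sat{A_i} = \sat{\cwform_i^\star}$. Inspection of the construction in \cite{GM18} yields a computable function $s : \nat \to \nat$ such that $|A_i| \leq s(|\cwform_i^\star|)$, and composing with the previous step gives a computable function $s'$ with $|A_i| \leq s'(n)$. Theorem \ref{thm:finite-equivalence-A} then tells us that $A_1 \sim A_2$ iff $A_1 \sim_{|A_1|+|A_2|-1} A_2$. Since the encoding of $(w,\sigma)$ over $\Sigma'$ has the same length as $w$, taking $\ell(n) = 2 s'(n) - 1$ yields a computable function with the required property.

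The main obstacle is verifying that the construction underlying Theorem \ref{thm:form-to-A} does not merely produce an equivalent automaton but also admits a computable bound on its state count in terms of the size of the input formula. This is established by a routine induction on the structure of closed $\cwMSO$ formulas along the lines of \cite{GM18}. Once that bound is in hand, the guard trick combined with Theorem \ref{thm:finite-equivalence-A} closes the argument.
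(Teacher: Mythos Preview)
Your proposal is correct and follows essentially the same approach as the paper: reduce $\cwform_1 \sim_\Gamma \cwform_2$ to an unrelativized equivalence via the guard $\bigwedge\Gamma \cond \cwform_i : \nil$, pass to weighted automata by Theorem~\ref{thm:form-to-A} (using that the construction in \cite{GM18} is effective, hence gives a computable state bound), and then apply Theorem~\ref{thm:finite-equivalence-A}. Your treatment is in fact a bit more careful than the paper's sketch, since you make explicit the passage to closed formulas over the extended alphabet and the well-formedness guard $\varphi_V$, which the paper handles only via the remark following Theorem~\ref{thm:form-to-A}.
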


\begin{proof}
	The corollary results from Theorems \ref{thm:form-to-A} and \ref{thm:finite-equivalence-A}, and the observations that $\cwform_1 \sim_\Gamma \cwform_2$ 
	iff 
	$\bigwedge \Gamma \cond \cwform_1 : \nil \sim \bigwedge \Gamma \cond \cwform_2 : \nil$, and that the proof of Theorem \ref{thm:form-to-A} in \cite{GM18} is constructive.
\end{proof}

\begin{corollary}
	The equational validity problem for \cwMSO\ is decidable.
\end{corollary}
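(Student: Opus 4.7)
The plan is to derive the corollary directly from Corollary \ref{cor:finite-sat-Form}, by reducing the equational validity problem for $\cwMSO$ to the evaluation of both formulas on a finite and computable collection of pairs $(w,\sigma)$. The equational validity problem asks, given $\cwform_1$ and $\cwform_2$, whether $\cwform_1 \sim \cwform_2$, i.e.\ whether $\cwform_1 \sim_\Gamma \cwform_2$ with $\Gamma = \emptyset$. Applying Corollary \ref{cor:finite-sat-Form} with this empty $\Gamma$, we obtain a computable bound $N = \ell(|\cwform_1|+|\cwform_2|)$ such that $\cwform_1 \sim \cwform_2$ if and only if $\sat{\cwform_1}(w,\sigma) = \sat{\cwform_2}(w,\sigma)$ for every pair $(w,\sigma) \in \Sigma^+_{val}$ with $|w| \leq N$.

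Because $\Sigma$ is finite and only finitely many variables appear in $\cwform_1$ and $\cwform_2$, the set of pairs $(w,\sigma) \in \Sigma^+_{val}$ with $|w|\leq N$ is finite and effectively enumerable. The decision procedure then proceeds as follows: first compute $N$; next enumerate each such pair; for each pair, compute $\sat{\cwform_1}(w,\sigma)$ and $\sat{\cwform_2}(w,\sigma)$ by the recursive evaluation procedure yielded directly by the abstract semantics of $\cwMSO$, invoking a decision procedure for $\MSO$ model checking on finite words at each conditional; and compare the two resulting finite multisets of finite sequences over $R$. The algorithm accepts if and only if equality holds at every enumerated pair.

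Correctness is immediate from Corollary \ref{cor:finite-sat-Form}, and termination follows from the computability of $\ell$, the finiteness of the enumeration, the decidability of $\MSO$ model checking on finite words, and our standing assumption that $R$ has decidable equality, which lifts to decidable equality of finite multisets of finite words over $R$. In effect, the substantive work has already been done: the bounded-equivalence result for weighted automata (Theorem \ref{thm:finite-equivalence-A}), together with the effective translation of closed $\cwMSO$ formulas to automata underlying Theorem \ref{thm:form-to-A}, combine in Corollary \ref{cor:finite-sat-Form} to cut the semantic universal quantifier down to a finite, computable search, after which the argument is routine. There is no substantial obstacle left; the only minor subtlety is the passage from possibly open formulas to the closed setting of Theorem \ref{thm:form-to-A}, and this is already absorbed into Corollary \ref{cor:finite-sat-Form} by treating each free variable as an additional alphabet symbol.
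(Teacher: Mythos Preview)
Your proposal is correct and takes essentially the same approach the paper intends: the corollary is placed immediately after Corollary~\ref{cor:finite-sat-Form} precisely because it follows by bounding the word length, enumerating the finitely many relevant $(w,\sigma)$, and evaluating both formulas, exactly as you describe. The paper also sketches an alternative argument earlier (Section on decision problems) via $\RE$ from the complete axiomatization plus $\coRE$ from decidable evaluation, but your direct finite-search argument is the one the corollary's placement invites.
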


\subsection{An Axiomatization of full \cwMSO}

We now present the full axiomatization for \cwMSO.
For brevity, we only use the second-order version of the sum operator
and elide the first-order versions of these axioms.
Specifically, in the following, axioms (C11) to (C16) have straightforward first-order versions that are omitted, and (C17) and the upcoming formula 
$\cwform_1 \leq_l \cwform_2$ can be rewritten to accommodate mixed sequences of both first- and second-order variables; the soundness and completeness proofs then go through in a straightforward way.
We use the notation $\vec{X}$ for a sequence of variables, $X_1,X_2,\ldots,X_k$, and $|\vec{X}| = k$.
This notation can be extended to the sum operator, such that $\sum_{\vec{X}}$ denotes $\sum_{X_1}\cdots \sum_{X_k}$.
For $|\vec{X}| = |\vec{Y}|$, we use $\vec{X} \neq \vec{Y}$ for $$\exists x. \bigvee_{i=1}^k (X_i(x) \land \neg Y_i(x)) \lor (\neg X_i(x) \land Y_i(x)).$$
Let $\cwform_1= \sum_{\vec{X}} \varphi_1 \cond \prod_x \swform_1 : \nil$ and 
$\cwform_2= \sum_{\vec{X}} \varphi_2 \cond \prod_x \swform_2 : \nil$.
For every $l \geq 0$, we use $\cwform_1 \leq_l \cwform_2$ for
\newcommand{\ijnot}{\bigwedge_{ i \neq j }}
\begin{align*}
	&
	\bigwedge_{m=1}^l
	\forall \vec{X}^1 
	\vec{X}^2 
	\cdots \vec{X}^m.~ \exists \vec{Y}^1 
	\vec{Y}^2 
	\cdots \vec{Y}^m.\hfill  \\
	&\left[\!\!\!\!
		\begin{array}{c} \displaystyle
			\left(
				\ijnot \vec{X}^i \neq \vec{X}^j \land \varphi_1(\vec{X}^i) 
				\rightarrow
				\ijnot \vec{Y}^i \neq \vec{Y}^j \land \varphi_2(\vec{Y}^i)
			\right)	
			\\[3ex]
			{\mathlarger{\mathlarger{\land} } } 
			\\[2ex]
			\displaystyle 
			\left[\!\!\!\!
			\begin{array}{c} \displaystyle
					\ijnot 
					\!\left(\!
					\varphi_1(\vec{X}^i)  
					\land 
					\forall x. {\bigwedge_{
						r {\in} \swform_1 
				 	}} 
					\varphi(\swform_1(\vec{X}^i),r) {\leftrightarrow} \varphi(\swform_1(\vec{X}^j),r)
					\!\right)\!
					\\[4ex]
			{\mathlarger{\mathlarger{\rightarrow}}}
					\\[2ex]
					\displaystyle
					\left[\!\!\!
					\begin{array}{c} \displaystyle
						\ijnot 
						\!\left(\!
						\varphi_2(\vec{Y}^i)  
						\land 
						\forall x. {\bigwedge_{
							 r {\in}\swform_2
						}} 
						\varphi(\swform_2(\vec{Y}^i),r) {\leftrightarrow} \varphi(\swform_2(\vec{Y}^j) ,r)
						\!\right)\!
						\\[4ex]
						{\mathlarger{\mathlarger{\land}}}
						\\[2ex]
						\displaystyle
						\forall x. {\bigwedge_{
								r {\in}\swform_1
							}} \varphi(\swform_1(\vec{X}^1),r) \leftrightarrow \varphi(\swform_2(\vec{Y}^1),r) 			
					\end{array}
					\!\!\!
					\right]			
			\end{array}
			\!\!\!\!
			\right]
		\end{array}
	\!\!\!\!
	\right]
\end{align*}

Intuitively, the formula describes that if there are $m$ \emph{distinct} sequences of sets of positions, described by the $X$'s that give the same value for $\varphi_1 \cond \prod_x \swform_1 : \nil$, then there are $m$ distinct sequences of sets, assigned to the $Y$'s that give that same value for $\varphi_2 \cond \prod_x \swform_2 : \nil$.

As Lemma \ref{lem:cw-less-than} demonstrates, formula $\cwform_1 \leq_l \cwform_2$ describes that, if the multisets returned by the formulas have size at most $l$, then $\cwform_2$ has all the elements of $\cwform_1$. Therefore, if both $\cwform_1 \leq_l \cwform_2$ and $\cwform_2 \leq_l \cwform_1$ are true for a string, then either the values of $\cwform_1$ and $ \cwform_2$ are too large, or they are the same.

\begin{lemma}\label{lem:cw-less-than}
	Let $l>0$, $\cwform_1= \sum_{\vec{X}} \varphi_1 
	\cond \prod_x \swform_1 
	$, and 
	$\cwform_2= \sum_{\vec{X}} \varphi_2 \cond \prod_x \swform_2 
	$.
	Then, $\Gamma \vdash \cwform_1 \leq_l \cwform_2$ if and only if 
	for every $(w,\sigma) \in \sat{\Gamma}$ and $\gamma \in R^{|w|}$,
	\[
		\sat{\cwform_2}(w,\sigma)(\gamma) \geq \min \{ l,~ \sat{\cwform_1}(w,\sigma)(\gamma) \}
		.
	\]
\end{lemma}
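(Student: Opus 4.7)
The plan is to use the completeness of MSO (Corollary \ref{cor:completeness-mso}) to replace $\Gamma \vdash \cwform_1 \leq_l \cwform_2$ by $\Gamma \models \cwform_1 \leq_l \cwform_2$, so that it suffices to prove, for each $(w,\sigma)$ individually, that $(w,\sigma) \models \cwform_1 \leq_l \cwform_2$ if and only if $\sat{\cwform_2}(w,\sigma)(\gamma) \geq \min\{l,\sat{\cwform_1}(w,\sigma)(\gamma)\}$ for every $\gamma \in R^{|w|}$. By Lemma \ref{lem:WtoMSO}, the inner subformula $\forall x.\bigwedge_r \varphi(\swform_k(\vec{X}^i),r) \leftrightarrow \varphi(\swform_k(\vec{X}^j),r)$ asserts exactly that $\prod_x \swform_k$ yields the same value sequence under $\vec{X}^i$ and $\vec{X}^j$, and the trailing conjunct equates the common $\swform_1$-sequence of the $\vec{X}$'s with the common $\swform_2$-sequence of the $\vec{Y}$'s. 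Under this reading, the body of $\cwform_1 \leq_l \cwform_2$ is the conjunction of two implications: (A) distinct $\vec{X}^i$ satisfying $\varphi_1$ force distinct $\vec{Y}^i$ satisfying $\varphi_2$; and (B) $\vec{X}^i$ that all satisfy $\varphi_1$ and share a common $\swform_1$-sequence $\gamma$ force $\vec{Y}^i$ that all satisfy $\varphi_2$ and share the same $\gamma$ under $\swform_2$.

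For the ($\Rightarrow$) direction, fix $\gamma$ and set $m = \min\{l,\sat{\cwform_1}(w,\sigma)(\gamma)\}$; if $m = 0$ the claim is immediate, and otherwise choosing $\vec{X}^1,\ldots,\vec{X}^m$ to be $m$ distinct witnesses contributing $\gamma$ to $\sat{\cwform_1}(w,\sigma)$ simultaneously triggers the premises of (A) and (B), so the formula's $\vec{Y}^i$ furnish $m$ distinct witnesses for $\gamma$ in $\sat{\cwform_2}(w,\sigma)$. For the ($\Leftarrow$) direction, fix $m \leq l$ and $\vec{X}^1,\ldots,\vec{X}^m$ and construct $\vec{Y}^i$ by case analysis on which premise is active. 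When both (A) and (B) fire, the per-$\gamma$ semantic bound directly supplies the required $m$ distinct witnesses of $\gamma$ under $\cwform_2$. When only (A) fires (distinct $\vec{X}^i$ with heterogeneous $\swform_1$-sequences), first upgrade the per-$\gamma$ bound to the global bound $|\sat{\cwform_2}(w,\sigma)| \geq \min\{l,|\sat{\cwform_1}(w,\sigma)|\}$ and then pick $m$ distinct $\vec{Y}^i$ satisfying $\varphi_2$. When only (B) fires, a single $\vec{J}$ with $\varphi_2(\vec{J})$ producing $\gamma$ suffices, so take all $\vec{Y}^i = \vec{J}$. When neither premise fires, any choice of $\vec{Y}^i$ works.

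The main obstacle is the bridging step that extracts the total-cardinality bound needed by (A) from the per-$\gamma$ bound in the lemma's statement. The argument should split on whether some $\gamma'$ already attains multiplicity at least $l$ in $\sat{\cwform_1}(w,\sigma)$: if so, then $\sat{\cwform_2}(w,\sigma)(\gamma') \geq l$ alone forces $|\sat{\cwform_2}(w,\sigma)| \geq l$; otherwise every $\min\{l,\sat{\cwform_1}(w,\sigma)(\gamma')\}$ equals $\sat{\cwform_1}(w,\sigma)(\gamma')$, and summing the per-$\gamma'$ inequality across the support yields $|\sat{\cwform_2}(w,\sigma)| \geq |\sat{\cwform_1}(w,\sigma)|$. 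Once this observation is in place, the remaining case work reduces to straightforward witness-picking using the semantic hypothesis.
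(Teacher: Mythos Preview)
Your approach matches the paper's: invoke MSO completeness (Corollary~\ref{cor:completeness-mso}) to reduce $\Gamma \vdash \cwform_1 \leq_l \cwform_2$ to the pointwise claim, then unwind the definition of $\cwform_1 \leq_l \cwform_2$ via Lemma~\ref{lem:WtoMSO}. The paper's proof is much terser and simply asserts that the formula encodes the multiplicity condition without engaging with the (A)/(B) decomposition or the total-cardinality bridging step you isolate, so your proposal is effectively a more detailed execution of the same argument.
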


\begin{proof}
	We first observe, by Lemma \ref{lem:WtoMSO}, that  
	\[  
		\forall x. \bigwedge_{ r \in R(\swform)} \varphi(\swform,r) \leftrightarrow \varphi(\swform',r)
	\]
	is true at $(w,\sigma)$ if and only if $\sat{\swform}(w,\sigma) = \sat{\swform'}(w,\sigma)$.
	From the definition of $\cwform_1 \leq_l \cwform_2$ above, for every $(w,\sigma) \in \sat{\Gamma}$, $(w,\sigma) \in \sat{\cwform_1 \leq_l \cwform_2}$ exactly when, if there are $m \leq l$ (distinct) assignments to variables $\vec{X}$ for which $\varphi_1$ evaluates to true and $\swform_1$ returns a fixed value, then there are $m$ (respectively, distinct) assignments to variables $\vec{Y}$ for which $\varphi_2$ also evaluates to true and $\swform_2$ returns that same fixed value.
	By the completeness of \MSO, $\Gamma \vdash \cwform_1 \leq_l \cwform_2$ if and only if for every $(w,\sigma) \in \sat{\Gamma}$, $(w,\sigma) \in \sat{\cwform_1 \leq_l \cwform_2}$, and, by the above observation, the lemma follows. 
\end{proof}

The axioms for full \cwMSO\ include all the axioms for $\cwMSO(\cond,+)$, and, additionally, the ones in Table \ref{tab:cwmsofull-axioms}.
\begin{table}
	\centering
	\begin{tabular}{l l}
		\hline
    ($C11$): & $\begin{aligned}&\Gamma \vdash \cwform_1 \approx \cwform_2 \text{ implies } \Gamma \vdash \textstyle \sum_X \cwform_1 \approx \sum_X \cwform_2 \\[-1ex]
      &\text{if } X \text{ is not free in } \Gamma\end{aligned}$ \\[2ex]
		($C12$): & $\Gamma \vdash \sum_X \cwform \approx \sum_Y \cwform[Y / X]$ if $Y \notin \var(\cwform)$ \\[1ex]
		($C13$): & $\Gamma \vdash \sum_X \sum_Y \cwform \approx \sum_Y \sum_X \cwform$ \\[1ex]
		($C14$): & $\Gamma \vdash \sum_X (\cwform_1 {+} \cwform_2) \approx \sum_X \cwform_1 + \sum_X \cwform_2$ \\[1ex]
		($C15$): & $\Gamma \vdash \varphi \cond \sum_X \cwform_1 : \sum_X \cwform_2 \approx \sum_X \varphi \cond \cwform_2 : \cwform_1$ \\[1ex]
		($C16$): & $\Gamma \vdash \cwform \approx \sum_X \varphi \cond \cwform 
		$ if $\Gamma \vdash \exists ! X.~ \varphi(X)$ and $X \notin \var(\cwform)$ 
		\\[1ex]
    ($C17$): & $\Gamma \vdash \sum_{\vec{X}} \varphi_1 {\cond} {\prod_x} \swform_1 
    \approx \sum_{\vec{Y}} \varphi_2 {\cond} {\prod_x} \swform_2$ \\
    &if $\Gamma \vdash \cwform_1 \leq _l \cwform_2$ and $\Gamma \vdash \cwform_2 \leq _l \cwform_1$, \\
    &for $l = 2^{\ell( |\cwform_1| {+} |\cwform_2| {+} |\Gamma | ) \cdot \max \{ |\vec{X}|,|\vec{Y}| \}}$, \\
    &where $\cwform_1 = \sum_{\vec{X}} \varphi_1 \cond \prod_x \swform_1$ and $\cwform_2 = \sum_{\vec{X}} \varphi_2 \cond \prod_x \swform_2$. \\
		\hline \\
	\end{tabular}
	\caption{Axioms for $\cwMSO$.}
	\label{tab:cwmsofull-axioms}
\end{table}
The most interesting case is the one of Axiom (C17).
This axiom reduces proving the equivalence of the two sides to a bounded proof of their equivalence through \MSO.

\subsection{Soundness and Completeness}

We now prove that the axioms of Table \ref{tab:cwmsofull-axioms} are both sound and complete for \cwMSO.

\begin{lemma}\label{lem:merge-sums}
	For every $\Gamma$ and pair of \cwMSO\ formulas $\sum_{\vec{X}} \varphi_1 \cond \prod_x \swform_1$ and $\sum_{\vec{Y}} \varphi_2 \cond \prod_x \swform_2$, there is a \cwMSO\ formula $\sum_{\vec{Z}} \varphi_3 \cond \prod_x \swform_3$, such that 
	\begin{align*}
		\Gamma \vdash \sum_{\vec{X}} \varphi_1 \cond \prod_x \swform_1 + \sum_{\vec{Y}} \varphi_2 \cond \prod_x \swform_2 \approx 
		\sum_{\vec{Z}} \varphi_3 \cond \prod_x \swform_3.
	\end{align*}
\end{lemma}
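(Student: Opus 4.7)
The plan is to combine the two sums by introducing a fresh second-order ``flag'' variable $Z_0$ that selects which summand is active, so that the resulting formula takes the desired shape of a single sum over a single conditional over a single product. I will proceed in four stages, using only axioms of \cwMSO\ already established.

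In the first stage, by (C5) and (C12) I rename bound variables so that $\vec{X}$ and $\vec{Y}$ are disjoint, fresh for $\Gamma$, and both products use the same fresh first-order variable $x$. I then apply (C16) repeatedly to pad each of the two formulas: the first formula is padded with each $Y_i \in \vec{Y}$ constrained by the uniquely-determining MSO formula $\neg \exists x.\, Y_i(x)$, and symmetrically the second is padded with $\vec{X}$. Finally, using (C16) once more, I introduce $Z_0$ with $\psi_A(Z_0) := \neg \exists x.\, Z_0(x)$ on the first formula and $\psi_B(Z_0) := \forall x.\, Z_0(x)$ on the second. Since $w \in \Sigma^+$ is non-empty, $\psi_A(Z_0) \land \psi_B(Z_0)$ is MSO-inconsistent, so $\vdash \neg(\psi_A(Z_0) \land \psi_B(Z_0))$ by Corollary \ref{cor:completeness-mso}. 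Throughout, (C13), (C15), and the conditional axioms are used to rearrange and absorb the accumulating sums and conditionals. After this stage, the two original formulas are $\Gamma$-equivalent to formulas of the form $\sum_{Z_0,\vec{X},\vec{Y}} \chi_k \cond \prod_x \swform_k$ for $k \in \{1,2\}$, with $\chi_1, \chi_2$ provably mutually exclusive.

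In the second stage, (C14) applied in reverse produces a single sum $\sum_{Z_0,\vec{X},\vec{Y}} (\chi_1 \cond \prod_x \swform_1 + \chi_2 \cond \prod_x \swform_2)$. In the third stage, I collapse the inner addition: (C10) distributes the outer $+$ through the $\chi_1$-conditional, and the mutual exclusivity of $\chi_1, \chi_2$ combined with Proposition \ref{prop:cwmso-theorems}(6) reduces $\chi_2 \cond \prod_x \swform_2$ to $\nil$ under the assumption $\chi_1$; together with (C1) and (C2), the inner term becomes $\chi_1 \cond \prod_x \swform_1 : (\chi_2 \cond \prod_x \swform_2)$. In the fourth stage, I set $\varphi_3 := \chi_1 \lor \chi_2$ and $\swform_3 := \chi_1 \cond \swform_1 : \swform_2$. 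Under $\chi_1$, (S3) gives $\swform_3 \approx \swform_1$, and then (C4) (valid since $x$ is fresh for $\Gamma$) yields $\prod_x \swform_3 \approx \prod_x \swform_1$; the case $\neg\chi_1 \land \chi_2$ is symmetric; and under $\neg\chi_1 \land \neg\chi_2$ both sides reduce to $\nil$. A three-way case split via (C9) and Proposition \ref{prop:cwmso-theorems}(8) then yields equivalence, and setting $\vec{Z} := Z_0,\vec{X},\vec{Y}$ finishes the construction.

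The main obstacle is the bookkeeping of Stages 1 and 4. In Stage 1, pushing sums past conditionals via (C15) and folding the nested conditionals into a single conjunctive condition requires repeated applications of the congruence and conditional axioms; in Stage 4, the freshness side-condition of (C4) and the three-way case split must be orchestrated with care. Conceptually, however, once the flag variable $Z_0$ forces provable mutual exclusivity of $\chi_1$ and $\chi_2$, each step is local: distribute, reduce, case-split. The resulting single-sum-single-conditional-single-product formula is exactly the shape required by axiom (C17), so this lemma paves the way for the completeness proof of the full \cwMSO.
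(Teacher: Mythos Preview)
Your proposal is correct and follows essentially the same approach as the paper: both introduce a fresh second-order ``flag'' variable $Z_0$ with the mutually exclusive uniquely-determining conditions $\forall x.\,\neg Z_0(x)$ and $\forall x.\,Z_0(x)$ via (C16), equalize the variable lists (you pad to the disjoint union $\vec{X},\vec{Y}$, the paper renames to make $\vec{X}=\vec{Y}$), merge the sums with (C14), and then collapse the two guarded products into a single conditional over $\prod_x(\chi_1\cond\swform_1:\swform_2)$. The only notable difference is that the paper justifies the final collapse by appealing to the already-established completeness of $\cwMSO(\cond,+)$ as a black box, whereas you derive it explicitly from (C10), (C4), and the conditional axioms; your more explicit route is fine, and your care about the freshness side-condition on $x$ for (C4) is warranted.
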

\begin{proof}
	We first observe that $\Gamma \vdash \sum_{X} \nil \approx \nil$ --- a simple application of axiom (C16) and due to the completeness of \swMSO.
	We can assume, due to Axioms (C12), (C13), and (C16) that $\vec{X} = \vec{Y}$.
	Let $Z$ be a second-order variable that does not appear in any of the two given formulas, nor in $\Gamma$.
	We can see that 
	there are $\varphi'_1$ and $\varphi'_2$ that only have $Z$ as a free variable, such that 
	$\Gamma \vdash \exists ! Z. \varphi'_1 \land \exists ! Z. \varphi'_2 \land \forall Z. \neg (\varphi'_1 \land \varphi'_2)$ --- for instance, let $\varphi'_1 = \forall x.\neg Z(x)$ and $\varphi'_2 = \forall x. Z(x)$.
	
	We now observe that 
	\begin{align}
		 \Gamma &\vdash 
			\varphi_1' \cond \varphi_1 \cond \prod_x \swform_1 + 	 
			\varphi_2' \cond \varphi_2 \cond \prod_x \swform_2
			\approx \notag \\
			&\phantom{{}\vdash{}}
			(\varphi_1 \lor \varphi_2) \land (\varphi_1' \lor \varphi_2')
			\cond 
			\prod_x \varphi_1 \land \varphi_1' \cond \swform_1 : \swform_2.
			\label{eq:together}
	\end{align}
	Using the fact that $\varphi_1'$ and $\varphi_2'$ are mutually exclusive,
	by taking cases, we can see that the equation above is valid.
	Then, \eqref{eq:together} follows from the completeness of \swMSO.
	
	\begin{align*}
		\Gamma \vdash &\sum_{\vec{X}} \varphi_1 {\cond} {\prod_x} \swform_1 + \sum_{\vec{X}} \varphi_2 {\cond} {\prod_x} \swform_2 \approx 
		\tag{from (C16)}
		\\
		&
		\sum_{Z}\varphi_1' \cond \sum_{\vec{X}} \varphi_1 {\cond} {\prod_x} \swform_1 + \sum_{Z}\varphi_2'\cond \sum_{\vec{X}} \varphi_2 {\cond} {\prod_x} \swform_2 \approx
		\tag{from (C15) and (C14)}
		\\
		&
		\sum_{Z\vec{X}} 
		\varphi_1' \cond \varphi_1 \cond \prod_x \swform_1 + 	 
		\varphi_2' \cond \varphi_2 \cond \prod_x \swform_2
		\approx
		\tag{from \eqref{eq:together}}
		\\
%
%
		&	\sum_{Z\vec{X}} 
		(\varphi_1 {\lor} \varphi_2) {\land} (\varphi_1' {\lor} \varphi_2')
		\cond 
		\prod_x \varphi_1 {\land} \varphi_1' \cond \swform_1 : \swform_2.
		\tag*{\qedhere}
	\end{align*}
\end{proof}

\begin{definition}
	A $\cwMSO$ formula $\cwform$ is in first normal form if $\cwform$ is generated by the following grammar:
	\begin{align*}
		Q &::= \varphi \cond Q : Q \mid R \mid \nil;
		&R &::= R + R \mid S; \ \ \text{and}
		\\
		S &::= \sum_X S \mid \varphi \cond \prod_x \swform.
	\end{align*}
	It is in second normal form if 
	$+$ does not occur in $\cwform$.
\end{definition}

\begin{lemma}\label{lem:2nfs}
	For every $\Gamma$ and \cwMSO~ formula $\cwform$, there exists a \cwMSO~formula $\cwform'$ in second normal form, such that $\Gamma \vdash \cwform \approx \cwform'$.
\end{lemma}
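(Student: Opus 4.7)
The plan is to proceed in two stages: first reduce $\cwform$ to an equivalent formula in first normal form, then eliminate all occurrences of $+$ from that formula by iterated application of Lemma \ref{lem:merge-sums}.

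For the first stage, I would prove by structural induction on $\cwform$ that every \cwMSO\ formula is provably equivalent to one in first normal form. The base cases $\cwform = \nil$ and $\cwform = \prod_x \swform$ are immediate (the latter rewritten as $\top \cond \prod_x \swform$, which is an $S$ and hence an $R$ and a $Q$). For $\cwform = \varphi \cond \cwform_1 : \cwform_2$, the inductive hypothesis together with congruence (or axiom $(C9)$) suffices. For $\cwform = \cwform_1 + \cwform_2$, after applying the inductive hypothesis to the summands, I would push the $+$ underneath every outer conditional via axiom $(C10)$, in exactly the pattern used in the proof of Lemma \ref{lem:conditional}, so that the $+$ ends up combining $R$-level expressions only. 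For $\cwform = \sum_X \cwform_0$ (and analogously for $\sum_x$), after applying the inductive hypothesis I would use $(C15)$ to move $\sum_X$ past outer conditionals (renaming via $(C12)$ so $X$ is not free in the guard), $(C14)$ to distribute $\sum_X$ over $+$, and the fact $\sum_X \nil \approx \nil$ established in the proof of Lemma \ref{lem:merge-sums}; since $\sum_X$ applied to an $S$-expression of shape $\sum_{\vec{Y}} \varphi \cond \prod_x \swform$ yields $\sum_{X\vec{Y}} \varphi \cond \prod_x \swform$, another $S$-expression, the result stays in first normal form.

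For the second stage, I would do an induction on the first normal form formula. The $\nil$ case is trivial; for $\varphi \cond \cwform_1 : \cwform_2$ I would apply the inductive hypothesis to both branches and use \textsf{cong?}. The key case is when $\cwform$ is an $R$. By axioms $(C2)$ and $(C3)$ we may write it as $R_1 + R_2 + \cdots + R_k$ where each $R_i$ is an $S$ of the form $\sum_{\vec{X}} \varphi_i \cond \prod_x \swform_i$. Iterating Lemma \ref{lem:merge-sums}, I would pairwise merge these $S$-expressions, shrinking the outer sum by one summand at each step, until a single $S$ remains; this $S$ is already in second normal form. Combining the two stages via transitivity yields the desired $\cwform'$.

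The main obstacle is the bookkeeping in the first stage, specifically the cases where the inductive hypothesis returns formulas whose outer structure is an arbitrary tree of conditionals above an $R$, and a newly introduced $+$ or $\sum_X$ must be driven down through that tree. This requires repeated application of the distribution axioms $(C10)$, $(C14)$, and $(C15)$, together with the freshness-ensuring renamings from $(C12)$, to keep the resulting formula within the strict layered shape demanded by the first normal form grammar. Once first normal form is achieved, the elimination of $+$ in the second stage is a clean iteration of Lemma \ref{lem:merge-sums}.
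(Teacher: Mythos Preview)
Your two-stage plan (reach first normal form, then iterate Lemma~\ref{lem:merge-sums} to kill $+$) agrees with the paper's structure, and your second stage is fine. The gap is in the $\sum_X$ case of the first stage.

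You write that you would ``use $(C15)$ to move $\sum_X$ past outer conditionals (renaming via $(C12)$ so $X$ is not free in the guard)''. But $(C12)$ cannot achieve this. If the inductive hypothesis gives you $\cwform_0'$ in first normal form with an outer guard $\varphi$ in which $X$ occurs free (which it will, whenever $X$ was free in $\cwform_0$), then $(C12)$ applied to $\sum_X \cwform_0'$ renames the bound variable \emph{throughout the body}, so the guard becomes $\varphi[Y/X]$ and now contains $Y$. The dependence of the guard on the summation variable is preserved under any $\alpha$-renaming; you cannot rename it away. And $(C15)$ is only sound when the guard does not depend on the summation variable (otherwise the two sides are semantically different: take $\sum_X((\forall y.\,\neg y\in X)\cond \prod_x r_1 : \prod_x r_2)$ and compare). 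So the step you describe simply fails in the interesting case.

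The paper avoids this by pushing in the \emph{opposite} direction: rather than driving $\sum_X$ down through conditionals, it pushes all $\sum$'s \emph{outward}, past $+$ via $(C14)$ right-to-left and past conditionals via $(C15)$ left-to-right, using $(C12)$ and $(C16)$ to equalise the sum prefixes on the two branches and to choose those prefixes fresh for the guard. In that direction the freshness side-condition is trivially arrangeable, because you are selecting the bound variables rather than inheriting them. This yields $\sum_{\vec X}\cwform''$ with $\cwform''$ a $\cwMSO(?,+)$ formula; the paper then uses the already-proven completeness of that fragment (together with the decomposition $\varphi\cond\cwform_1:\cwform_2 \approx (\varphi\cond\cwform_1)+(\neg\varphi\cond\cwform_2)$) to flatten the inside, and finally $(C14)$ to pull the $+$'s back out. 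Your argument can be repaired along the same lines, but not by the step you wrote.
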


\begin{proof}
	By Lemma \ref{lem:merge-sums}, it suffices to prove the lemma for 
	$\cwform'$ in first normal form.
	 We can see that axioms (C14) and (C12) allow us to push $+$ inside any sum operator, and (C15) and (C16) allow to do the same for conditionals. 
	 This gives us that $\Gamma \vdash \cwform \approx \sum_{\vec{X}}\cwform''$, where 
	 $\cwform''$ is a $\cwMSO(+,\cond)$ formula.
	 But then,
	 if  $\cwform_1, \cwform_2, \cwform_3$ are $\cwMSO(+,\cond)$ formulas, then 
	 \[ \varphi \cond (\cwform_1 + \cwform_2) : \cwform_3 \sim (\varphi \cond \cwform_1 : \cwform_3) + \varphi \cond\cwform_2, \]
	 gives that 
	 \[\Gamma \vdash \varphi \cond (\cwform_1 + \cwform_2) : \cwform_3 \approx (\varphi \cond \cwform_1 : \cwform_3) + \varphi \cond\cwform_2, \]
	 from the completeness of $\cwMSO(+,\cond)$.
	 Furthermore, it is not hard to see that, from the $\cwMSO(\cond,+)$ axioms,
	 \begin{align*}
	 	\Gamma \vdash &
	 	\varphi \cond \cwform_1 : \cwform_2 \approx \varphi \cond \cwform_1 + \neg \varphi \cond \cwform_2.
	 \end{align*}
	 Therefore, inside the sum operators of $\sum_{\vec{X}}\cwform''$, we can
	 bring all conditionals in the form $\varphi \cond \cwform_1$,
	 and then use axiom (C14) to 
	  eliminate all occurrences of $+$ inside the sum operators.
	 The remaining proof is similar to the proof of Lemma \ref{lem:normal-form}.
\end{proof}

\begin{theorem}[Completeness for $\cwMSO$]\label{thm:completeness-full}
	For every finite $\Gamma$ and $\cwMSO$ formulas $\cwform_1$ and $\cwform_1$, we have $\cwform_1 \sim_\Gamma \cwform_2$ if and only if $\Gamma \vdash \cwform_1 \approx \cwform_2$.
\end{theorem}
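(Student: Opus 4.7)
The plan is to verify soundness by checking each axiom individually; those inherited from $\cwMSO(\cond,+)$ together with (C11)--(C16) follow directly from the semantics of $\sum_X$. The interesting verification is (C17). Assuming both $\Gamma \vdash \cwform_1 \leq_l \cwform_2$ and $\Gamma \vdash \cwform_2 \leq_l \cwform_1$, by Lemma \ref{lem:cw-less-than} together with the soundness of the \MSO\ axioms, on every $(w,\sigma) \in \sat{\Gamma}$ and every $\gamma \in R^{|w|}$ the multiplicities $\sat{\cwform_i}(w,\sigma)(\gamma)$ either agree exactly or both exceed $l$. But the multiplicity of any weight sequence in $\sat{\cwform_i}(w,\sigma)$ is bounded by the number of assignments to the outer quantifier block, hence by $2^{|w|\cdot \max\{|\vec{X}|,|\vec{Y}|\}}$; for $|w| \leq \ell(|\cwform_1|+|\cwform_2|+|\Gamma|)$ this does not exceed $l$, so the multisets must coincide on all such short $(w,\sigma)$, and Corollary \ref{cor:finite-sat-Form} lifts agreement on short strings to full semantic equivalence.

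For completeness, assume $\cwform_1 \sim_\Gamma \cwform_2$. By Lemma \ref{lem:2nfs} there exist second-normal-form formulas $\cwform_i'$ with $\Gamma \vdash \cwform_i \approx \cwform_i'$, and by soundness $\cwform_1' \sim_\Gamma \cwform_2'$; it therefore suffices to derive $\Gamma \vdash \cwform_1' \approx \cwform_2'$. I would proceed by induction on the total number of top-level conditional operators in $\cwform_1'$ and $\cwform_2'$, following the template of Lemma \ref{lem:nf-completeness}. When either side has a top-level conditional $\varphi \cond \cdot : \cdot$, Lemma \ref{lem:completeness-swmso} (whose statement transfers verbatim to this setting) splits the semantic equivalence into components under $\Gamma \cup \{\varphi\}$ and $\Gamma \cup \{\neg \varphi\}$; the inductive hypothesis supplies both derivations, and axiom (C9) combines them into $\Gamma \vdash \cwform_1' \approx \cwform_2'$.

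In the base case each $\cwform_i'$ is either $\nil$ or of the shape $\sum_{\vec{X}} \varphi \cond \prod_x \swform$. The $\nil \approx \nil$ subcase is immediate, while $\nil$ versus a nontrivial sum is handled by first rewriting $\nil$ as a vacuous sum of the required shape using (C16) together with the derivable theorems of Proposition \ref{prop:cwmso-theorems}, then appealing to (C17). For two nontrivial sums, the assumed semantic equivalence gives pointwise equality of multisets on $\sat{\Gamma}$, so Lemma \ref{lem:cw-less-than} yields $\Gamma \models \cwform_1' \leq_l \cwform_2'$ and $\Gamma \models \cwform_2' \leq_l \cwform_1'$ for every $l$, in particular for the $l$ prescribed by (C17); Corollary \ref{cor:completeness-mso} converts these into \MSO\ derivations inside $\Gamma$, and a single application of (C17) concludes. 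The main obstacle I foresee is aligning the syntactic shape of the two sides so that (C17) applies: matching the quantifier prefixes $\vec{X}$ and $\vec{Y}$ may require preliminary renaming via (C12) and reordering via (C13), and absorbing $\nil$ into the required sum shape requires careful use of (C16) together with the derivable equation $\sum_X \nil \approx \nil$. Once this plumbing is arranged, the bulk of the work is the bounded \MSO\ analysis that (C17) and Corollary \ref{cor:finite-sat-Form} were designed to support.
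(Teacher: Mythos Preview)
Your proposal is correct and mirrors the paper's proof almost exactly: soundness of (C17) via Lemma~\ref{lem:cw-less-than}, the multiplicity bound $2^{|w|\cdot|\vec{X}|}$, and Corollary~\ref{cor:finite-sat-Form}; completeness via second normal form (Lemma~\ref{lem:2nfs}), induction on top-level conditionals with (C9) in the inductive step, and (C17) together with Lemma~\ref{lem:cw-less-than} and Corollary~\ref{cor:completeness-mso} in the base case, after aligning the quantifier prefixes through (C12), (C13), and (C16). The only minor deviation is your treatment of the $\nil$-versus-sum base case: the paper instead observes that $\sum_{\vec{X}}\varphi\cond\prod_x\swform \sim_\Gamma \nil$ forces $\varphi\cond\prod_x\swform \sim_\Gamma \nil$, invokes completeness of $\cwMSO(\cond,+)$ to derive that equation, and then collapses the sum via (C11) and the derivable $\sum_X\nil\approx\nil$---but your route through (C17) works equally well.
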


\begin{proof}
	\emph{The soundness} of the axioms is straightforward.
	The most interesting case is (C17), which we now prove sound.
	We assume that 
	$\Gamma \vdash \cwform_1 \leq _l \cwform_2$
	and
	$\Gamma \vdash \cwform_2 \leq _l \cwform_1$, 
	for 
$ l = 2^{\ell( |\cwform_1| {+} |\cwform_2| {+} |\Gamma | ) \cdot \max \{ |\vec{X}|,|\vec{Y}| \}}$, 
where
\[
\cwform_1 = 
\sum_{\vec{X}} \varphi_1 \cond \prod_x \swform_1 
\quad \text{and} \quad
\cwform_2 = 
\sum_{\vec{X}} \varphi_2 \cond \prod_x \swform_2 
.\]
Let $L = \ell( |\cwform_1| {+} |\cwform_2| {+} |\Gamma | ) $.
From Lemma \ref{lem:cw-less-than}, we get that
$\sat{\cwform_1}(w,\sigma)(\gamma) \geq \min \{ l,  \sat{\cwform_2}(w,\sigma)(\gamma)\}$
and 
$\sat{\cwform_2}(w,\sigma)(\gamma) \geq \min \{ l,  \sat{\cwform_1}(w,\sigma)(\gamma)\}$
for every $(w,\sigma)\in \sat{\Gamma}$ and $\gamma \in R^{|w|}$.

Let $(w,\sigma)\in \sat{\Gamma}$, where $|w| \leq L$.
By Corollary \ref{cor:finite-sat-Form}, it suffices to prove that 
$\sat{\cwform_1}(w,\sigma) = \sat{\cwform_2}(w,\sigma)$, and to do that, from the above discussion, it suffices to prove that 
for every $\gamma \in R^{|w|}$,
$l \geq \sat{\cwform_1}(w,\sigma)(\gamma)$ and $l \geq \sat{\cwform_2}(w,\sigma)(\gamma)$.
Specifically, we prove that 
$\sat{\cwform_1}(w,\sigma)(\gamma) \leq 2^{|w| \cdot |\vec{X}| }$ --- the case for $\cwform_2$ is symmetric.
The proof is by induction on $|\vec{X}|$: if $\cwform_1 = \varphi_1 \cond \prod_x \swform_1$, then it outputs at most one value, and therefore 
$\sat{\cwform_1}(w,\sigma)(\gamma) \leq 1 = 2^{L \cdot 0 }$;
and for the inductive step, 
$\sat{\sum_{Z}\cwform_1}(w,\sigma)(\gamma) \leq  
 2^{|w|}\sat{\cwform_1}(w,\sigma)(\gamma) \leq 
 2^{|w| + |w| \cdot |\vec{X}|} =  2^{|w| \cdot (|\vec{X}|+1)}$.
 
\emph{We now prove the completeness} of the axioms.
Let $\cwform_1 \sim_\Gamma \cwform_2$; we prove that
$\Gamma \vdash \cwform_1 \approx \cwform_2$.
By Lemma \ref{lem:2nfs}, we can assume that $\cwform_1$ and $\cwform_2$ are in second normal form. 
The proof is by induction on the total number of the top-level conditionals in these formulas. The inductive step is similar to the one in the proof of Theorem \ref{thm:completeness-cwmso}, so we only deal with the base cases.
If one of the formulas is $\nil$, then the other one is either $\nil$ or $\sum_{\vec{X}} \cwform$, where $\cwform \sim_\Gamma \nil$. By the completeness of $\cwMSO(\cond,+)$ and Axiom (C16), $\Gamma \vdash \nil \approx \sum_{\vec{X}} \cwform$ and we are done.
Finally, let $ \sum_{\vec{X}} \varphi_1 \cond \cwform_1 \sim_\Gamma \sum_{\vec{Y}} \varphi_2 \cond \cwform_2 $.
From axioms (C12), (C13), and (C16), we can assume that $\vec{X} = \vec{Y}$.
From
Lemma \ref{lem:cw-less-than}, 
\begin{align*}
\Gamma \vdash& \sum_{\vec{X}} \varphi_1 \cond \cwform_1 \leq_l \sum_{\vec{Y}} \varphi_2 \cond \cwform_2, \text{ and} \\
\Gamma \vdash& \sum_{\vec{X}} \varphi_2 \cond \cwform_2 \leq_l \sum_{\vec{Y}} \varphi_1 \cond \cwform_1, 
\end{align*} 
for every $l$.
Therefore, by using axiom (C17), 
\[
\Gamma \vdash \sum_{\vec{X}} \varphi_1 \cond \cwform_1 \approx \sum_{\vec{Y}} \varphi_2 \cond \cwform_2.
\qedhere 
\]
\end{proof}



\section{Equational Satisfiability is Undecidable}\label{sec:undec}

In this section, we prove that equational satisfiability for the full \cwMSO\
is undecidable.
A similar, but more complex construction can be made for $\cwFO$.
We first observe that, if we assume an unbounded set of values, the language of equations is closed under conjunction with respect to satisfiablity, in the sense of Lemma \ref{lem:closed-under-and}.
 
\begin{lemma}\label{lem:closed-under-and}
	Let $\cwform_1, \cwform_2, \cwform'_1, \cwform'_2 \in \cwMSO$ be such that $\cwform_1, \cwform_2$ use values that are distinct from the ones that 
$\cwform'_1, \cwform'_2$ use.
For every $w,\sigma$, 
$\sat{\cwform_1}(w,\sigma) = \sat{\cwform_2}(w,\sigma) $ and 
$\sat{\cwform'_1}(w,\sigma) = \sat{\cwform'_2}(w,\sigma) $, if and only if 
$\sat{\cwform_1+\cwform'_1}(w,\sigma) = \sat{\cwform_2+\cwform'_2}(w,\sigma) $.
\end{lemma}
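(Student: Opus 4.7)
The plan is to reduce the lemma to a simple multiset bookkeeping argument based on the observation that, for any \cwMSO{} formula $\cwform$, every sequence in the support of $\sat{\cwform}(w,\sigma)$ uses only weights drawn from the (finite) set $R(\cwform)$ of weights syntactically appearing in $\cwform$. So the disjointness hypothesis on the values of $\cwform_1,\cwform_2$ versus $\cwform'_1,\cwform'_2$ means the multisets produced by the two ``sides'' live in disjoint pieces of $R^*$, and the union $\munion$ can be uniquely undone.

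First I would establish, by a straightforward structural induction on $\cwform$, the auxiliary claim: for every $(w,\sigma)$, every sequence $s$ with $\sat{\cwform}(w,\sigma)(s)>0$ satisfies $s \in R(\cwform)^*$, and moreover $|s| = |w| \geq 1$ (so the empty sequence never appears). The base cases are $\nil$ (empty multiset) and $\prod_x \swform$, where a similar induction on $\swform$ shows that $\sat{\swform}(w,\sigma) \in R(\swform)$. The inductive cases for $\cond$, $+$, $\sum_x$, and $\sum_X$ are immediate since each of these just takes one of its sub-multisets or a (big) multiset union of them at the same word length.

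The forward direction of the lemma is immediate from the definition $\sat{\cwform + \cwform''}(w,\sigma) = \sat{\cwform}(w,\sigma) \munion \sat{\cwform''}(w,\sigma)$: equal summands give equal sums. For the backward direction, let $V = R(\cwform_1) \cup R(\cwform_2)$ and $V' = R(\cwform'_1) \cup R(\cwform'_2)$, so by hypothesis $V \cap V' = \emptyset$. For any multiset $M \in \multi{R^*}$ and any $U \subseteq R$, define the restriction $M\!\upharpoonright\! U$ by $(M\!\upharpoonright\! U)(s) = M(s)$ if $s \in U^*$ and $0$ otherwise; restriction plainly distributes over $\munion$. By the auxiliary claim, $\sat{\cwform_i}(w,\sigma)\!\upharpoonright\! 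V = \sat{\cwform_i}(w,\sigma)$ and $\sat{\cwform_i}(w,\sigma)\!\upharpoonright\! V' = \emptyset$, and symmetrically $\sat{\cwform'_i}(w,\sigma)$ is supported in $V'^*$ (here the fact that sequences in the support are non-empty is what makes the restriction cleanly split, since $\varepsilon$ would otherwise lie in both $V^*$ and $V'^*$).

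Restricting the assumed equation $\sat{\cwform_1 + \cwform'_1}(w,\sigma) = \sat{\cwform_2 + \cwform'_2}(w,\sigma)$ to $V$ therefore yields $\sat{\cwform_1}(w,\sigma) = \sat{\cwform_2}(w,\sigma)$, and restricting to $V'$ yields $\sat{\cwform'_1}(w,\sigma) = \sat{\cwform'_2}(w,\sigma)$, as required. There is no real obstacle in this proof; the only point that deserves care is checking the auxiliary claim about supports and sequence lengths, after which the restriction argument is mechanical.
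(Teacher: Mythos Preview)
Your proposal is correct and follows essentially the same approach as the paper: the paper's proof is the one-line observation that the elements of $\sat{\cwform_i+\cwform'_i}(w,\sigma)$ can be partitioned into those using values from $\cwform_1,\cwform_2$ and those using values from $\cwform'_1,\cwform'_2$, which is exactly the support-separation argument you make precise via the auxiliary claim and the restriction operator. Your additional care about the empty sequence (handled via $|s|=|w|\geq 1$) is a detail the paper leaves implicit.
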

\begin{proof}
  The lemma results 
  by
  observing 
  that the
  elements of the multisets $\sat{\cwform_1+\cwform'_1}(w,\sigma) $ and $ \sat{\cwform_2+\cwform'_2}(w,\sigma) $
  can be partitioned into those that use values that appear in $\cwform_1$ and $\cwform_2$,
  and those that use values that appear in $\cwform_1'$ and $\cwform_2'$.
\end{proof}


Fix a pair $(w,\sigma)$.
We use a series of formulas and equations to express that a pair $(w,\sigma)$ encodes the computation of a Turing Machine that halts.
Therefore, the question of whether there is such a pair that satisfies the resulting set of equations is undecidable.
Let $T = (Q,\Sigma,\delta,q_0,H)$ be a Turing Machine, where $Q$ is a finite set of states, $\Sigma$ is the set of symbols that the machine uses, $\delta: Q \times \Sigma \to Q \times \Sigma \times \{L,R\}$ is the machine's transition function, $q_0$ is the starting state, and $H$ is the halting state of $T$. 
We give the construction of the \cwMSO\ formula equations.

Let $\tend$ be a special symbol not in $\Sigma$.
A configuration of $T$ is represented by a string of the form $s_1 q s_2 \tend$, where $q$ is the current state for the configuration,  $s_1s_2$ is the string of symbols in the tape of the machine, and the head is located at the first symbol of $s_2$; $\tend$ marks the end of the configuration.
Let $x_0 \in \Sigma^+$ be an input of $T$ (for convenience, we assume that all inputs are nonempty). 

%
We use every $s \in Q \cup \Sigma \cup \{\tend\}$
as a predicate, so that $s(x)$ is true
if and only if the symbol $s$ is in position $x$.
%
We want to describe that $(w,\sigma)$ encodes a halting run of $T$ on $x_0$.
In other words, we must ensure that $(w,\sigma)$
is a sequence $c_0 \cdots c_k$ of configurations of $T$, such that
$c_0$ is $q_0 x_0 \tend $ and  $c_k$ is $s_1Hs_2 \tend$, where $s_1,s_2 \in \Sigma^*$.
 
We must therefore ensure that the following conditions hold:

\begin{enumerate}
	\item 
	$(w,\sigma)$ is of the form $c_0c_1\cdots c_k$, where each $c_i$  has exactly one $\tend$, at the end;
	\item 
	each $c_i$ is of the form $s_1 q s_2 \tend$, where $q \in Q$, $s_1s_2 \in \Sigma^*$, and $s_2 \neq \varepsilon$;
	\item $c_0 = q_0 x_0 \tend$;
	\item $c_k = s_1 H s_2 \tend$ for some $s_1,s_2$; and
	\item for every $0 \leq i < k$, $c_{i+1}$ results from $c_i$ by applying the transition function $\delta$. This condition can be further refined into the following subconditions. For every $0 \leq i < k$, if 
	$c_i =  x_1 ~x_2 \cdots ~x_{r} ~q_i ~y_{1} ~y_{2} \cdots ~y_{r'} \tend$, then:
	\begin{enumerate}
		\item if $\delta(q_i,y_{1}) = (q,x,L)$ and $r >0$, then $c_{i+1} = x_1 ~x_2 \cdots ~x_{r-1} ~q ~x_{r} ~x ~y_{2} \cdots ~y_{r'} \tend,$
		\item if $\delta(q_i,y_{1}) = (q,x,L)$ and $r =0$, then $c_{i+1} = q ~x ~y_2 \cdots ~y_{r'} \tend$, 
		\item if $\delta(q_i,y_{1}) = (q,x,R)$ and $r' > 1$, then $c_{i+1} = x_1 ~x_2 \cdots ~x_{r} ~x ~q ~y_{2} \cdots ~y_{r'} \tend$, and 
		\item if $\delta(q_i,y_{1}) = (q,x,R)$ and $r' = 1$, then $c_{i+1} = x_1 ~x_2 \cdots ~x_{r} ~x ~q ~\blank \tend$, where $\blank \in \Sigma$ is the symbol used by $T$ for a blank space.
	\end{enumerate}
\end{enumerate}



We now explain how to represent each of the conditions above with a formula or equation.
We use the following macros, where $0,1 \in R$ are two distinct weights:
\begin{align*}
	\nxt{x,y} &\defeq  (\neg (y \leq x)) \land \forall z.~ ( z \leq x \lor y \leq z ) \\
  \last{x} &\defeq \forall y. y{\leq} x 
	~\text{ and }
	\first{x} \defeq \forall y. y{\geq} x \\
	\firstc{x} &\defeq \first{x} \lor \exists y.~ \tend(y) \land \nxt{y,x} \\
	\valone{x} &\defeq \prod_y (x{=}y) {\cond} 1 
	~\text{ and }
  \valv{x}{s} \defeq \prod_y (x{=}y) {\cond} s 
  \\
	\firstcx{x,y} &\defeq \firstc{y} \land y \leq x \land \\
	&\phantom{{}\defeq{}}~~~~~~~\forall z . \neg (\tend(z) \land y \leq z \leq x)
		\\
	\pos{x}{v} &\defeq 
%
\sum_{X} \exists y.~ 
\firstcx{x,y} \\
&\phantom{\defeq}~~\land \forall z. (\neg z \in X) \lor (y \leq z \leq x)
\cond 
v 
\end{align*}

Intuitively, formula $\pos{x}{v}$ counts 
$2^i$, where $i$ is the position of $x$ in its configuration.
We note that
for each 
set $S$ of positions 
in a configuration, $S$ is uniquely identified by $\sum_{i \in S} 2^i$.
Furthermore, for each configuration, $\pos{x}{v}$ constructs a map from each symbol $s$ that appears in the set $S$ of positions (represented by the returned value $v$) to  $\sum_{i \in S} 2^i$. Therefore, the way that we will use $\pos{x}{v}$ (see how we deal with condition 5, below) gives a complete description of each configuration.

We now proceed to describe, for each of the conditions 1-6,
a number of equations that ensure that this condition holds.
By an equation, we mean something of the form $\cwform \eqeq \cwform'$,
where $\cwform$ and $\cwform'$ are $\cwMSO$ formulas.
Notice that by Lemma \ref{lem:embedMSOtoEq},
any \MSO\ formula can be turned into an equation
(as long as we have at least two distinct weights),
so for some conditions we give an \MSO\ formula rather than an equation.

A number of equations $\cwform_i \eqeq \cwform_i'$ ensures that the condition holds
in the sense that for any $(w,\sigma)$,
$\sat{\cwform_i}(w,\sigma) = \sat{\cwform_i'}(w,\sigma)$ for each $i$
if and only if $(w,\sigma)$ satisfies the condition.
By Lemma \ref{lem:closed-under-and},
once we have a number of equations $\cwform_i \eqeq \cwform_i'$
that together ensure that all conditions are satisfied,
the equation $\sum_i \cwform_i \eqeq \sum_i \cwform_i'$ ensures that all conditions are satisfied,
so that $(w,\sigma)$ satisfies the conditions if and only if
$\sat{\sum_i \cwform_i}(w,\sigma)= \sat{\sum_i \cwform_i'}(w,\sigma)$.
We omit most conditions, as it is not hard to express them in \FO, and only
demonstrate how to treat case a of condition 5. The other cases are analogous.

  Fix a transition $(q,s,q',s',L) \in \delta$ and $d \in \Sigma$.
  We use the following shorthand.
  \begin{align*}
    \tr{x,y,z} &\defeq q(y) \land y \leq x \land \forall y'.~ \neg (\tend(y') \land y \leq y' \leq x) \\
    &\phantom{{}\defeq{}}\land s(z) \land \nxt{y,z} \quad \text{and} \\
    \trprime{x,y,z} &\defeq q'(y) \land y \leq x \land \forall y'.~ \neg (\tend(y') \land y \leq y' \leq x) \\
    &\phantom{{}\defeq{}}\land s'(z) \land \nxt{y,z}.
  \end{align*}
  Let $s_1,s_2,\ldots,s_m$ be a permutation of $\Sigma$.
  We use the following equation:
  \begin{align*}
  \sum_{x}& \tend(x)
  \land
  \exists y. (\tend(y) \land x < y) 
  \land \\
  &\exists y,z. \tr{x,y,z}
    \cond 
    \sum_{y} y {\leq} x \land \forall z. (x {\leq} z \lor z {<} y \lor \neg \tend(x)) 
      \cond \\
      &q(y)
        \cond
        \pos{y}{\valv{x}{q}}: s_1(y)
          \cond
          \pos{y}{\valv{x}{s_1}}: \cdots : s_m(y)
            \cond
            \pos{y}{\valv{x}{s_m}} \\  \eqeq&
  \\
  \sum_{x}& \tend(x)
  \land
  \exists y. (\tend(y) \land x < y)
  \land \\
  &\exists y,z. \tr{x,y,z}
  \cond
  \sum_{y} x {\leq} y \land \forall z. (z {\leq} x \lor y {<} z \lor \neg \tend(x)) 
    \cond 
             \end{align*}
\begin{align*}
~~    &q'(y) \land \exists z. \nxt{y,z} \land s_1(z)
      \cond
      \pos{y}{\valv{x}{s_1}}: \cdots q'(y) \land \\
      &\exists z. \nxt{y,z} \land s_m(z)
        \cond
        \pos{y}{\valv{x}{s_m}}: \\
        &\exists z. q'(z) \land \nxt{z,y}
          \cond
          \pos{y}{\valv{x}{q}}: \\ 
          &\exists z,z'. q'(z) \land \nxt{z,z'} \land \nxt{z',y}
            \cond
            \pos{y}{\valv{x}{s}}: \\
            &s_1(y)
              \cond
              \pos{y}{\valv{x}{s_1}}: \cdots : s_m(y)
                \cond
                \pos{y}{\valv{x}{s_m}}
  \end{align*}
  The rightmost part of the equation ensures that if the effects of the transition are reversed, then all symbols are in the same place as in the previous configuration.
  We can then make sure that the state has changed to $q'$ and the symbol to $s'$ with the following formula:
  \begin{align*}
    \forall x,y.
    &\neg (\tend(x) \land \tend(y) \land \neg y \leq x \land \exists x_q,x_s. \tr{x,x_q,x_s}
    ) 
    \lor \\ 
    &\exists y_q,y_s. \trprime{y,y_q,y_s}.
  \end{align*}

\begin{theorem}\label{thm:sat-is-undec}
  If the set of weights has at least two distinct weights,
	the equational satisfiability problem for 
	\cwMSO\ and \cwFO\ is undecidable.
\end{theorem}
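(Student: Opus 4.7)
The plan is to complete the reduction from the halting problem for Turing machines, which the preceding construction has set up. Given a Turing machine $T$ and input $x_0$, I will produce a single \cwMSO\ equation $\cwform \eqeq \cwform'$ (depending on $T$ and $x_0$) whose satisfying pairs $(w,\sigma)$ are precisely the encodings of halting runs of $T$ on $x_0$; undecidability of halting then yields undecidability of equational satisfiability.

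First, I verify that the macros really do enforce the intended encoding. The crucial point is that $\pos{x}{v}$ is designed so that for each configuration of the tape it records the position of $x$ within that configuration as a power of two (via the subset $X$ consisting of all positions from the start of $x$'s configuration up to $x$), and summing such contributions over all positions carrying a given symbol produces a sum of distinct powers of two, i.e.\ a faithful binary fingerprint. Consequently, equating the multisets on the two sides of a transition equation in condition~5 forces a bijection between same-symbol positions before and after the transition, which is exactly what is needed to guarantee that consecutive configurations agree outside the window affected by $\delta$.

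Second, I fill in the equations that the excerpt omits. Conditions 1--4 are \FO-definable properties of $(w,\sigma)$ and are turned into equations via Lemma~\ref{lem:embedMSOtoEq}. For condition~5, the case written out in detail (left move with $r>0$) is representative; the remaining sub-cases (left move at the left edge, right move with or without tape extension by $\blank$) are handled by analogous equations, with minor adjustments to $\tr{\cdot,\cdot,\cdot}$, $\trprime{\cdot,\cdot,\cdot}$, and the auxiliary \FO\ formula that fixes the new state and symbol.

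Finally, I combine the equations $\cwform_i \eqeq \cwform_i'$ obtained above into one equation via Lemma~\ref{lem:closed-under-and}: using disjoint pools of weight-labels for each component (available whenever $R$ has at least two distinct weights, either by taking further weights when $|R|>2$ or by encoding each ``copy'' as a distinct finite string of the two base weights in the abstract semantics), the equation $\sum_i \cwform_i \eqeq \sum_i \cwform_i'$ is equivalent to the simultaneous satisfaction of all components. The main obstacle is the faithfulness argument in both directions: easy from halting run to satisfying pair, but in the converse one must verify that no spurious $(w,\sigma)$ outside the intended encoding can accidentally satisfy the transition equations, which is precisely where the injectivity of the $\pos{\cdot}{\cdot}$ fingerprint is needed. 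For \cwFO, the only use of second-order quantification is inside $\pos{x}{v}$; replacing $\sum_X$ there by a first-order enumeration based on $\nxt{\cdot,\cdot}$ along the configuration yields the corresponding result, at the cost of a somewhat more elaborate formula for condition~5.
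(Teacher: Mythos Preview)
Your treatment of the \cwMSO\ case is correct and follows the paper: conditions 1--4 are \FO\ properties converted to equations via Lemma~\ref{lem:embedMSOtoEq}, condition~5 is handled by the displayed transition equation together with its analogues for the remaining subcases, and the pieces are combined into a single equation via Lemma~\ref{lem:closed-under-and}, using the observation (matching the paper's remark after the theorem) that two base weights already suffice to manufacture as many distinct value-tags as needed.

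The gap is your claim for \cwFO. The second-order $\sum_X$ in $\pos{x}{v}$ is doing essential work: it ranges over all subsets of the interval from the configuration start to $x$ and so yields exactly $2^{\mathrm{relpos}(x)}$ copies of $v$; this exponential multiplicity is what makes the total count $\sum_{y:\,s(y)}2^{\mathrm{relpos}(y)}$ an injective encoding of the set of positions carrying~$s$. A first-order sum over the same word can contribute at most $|w|$ terms per summand, so any direct replacement gives only linear multiplicities such as $\sum_{y:\,s(y)}\mathrm{relpos}(y)$, which do \emph{not} determine the position set (e.g.\ $\{1,4\}$ and $\{2,3\}$ collide). Your phrase ``first-order enumeration based on $\nxt{\cdot,\cdot}$'' does not explain how injectivity is recovered, and there is no evident way to do so without changing the input encoding. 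The paper's \cwFO\ argument takes a genuinely different route: it alters the encoding of runs so that the $i$-th symbol of each configuration is preceded by an explicit block of $2^i$ auxiliary counter symbols, and $\pos{x}{v}$ becomes a first-order sum over those symbols. Enforcing that the counter blocks really have sizes $1,2,4,8,\ldots$ then requires additional \cwFO\ equations (one equating the two halves of each block, another equating each block with half of the next), which is itself a substantive part of the construction that your proposal omits.
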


\begin{proof}
	For the case of \cwMSO, we use a reduction from the Halting Problem, as it is described above.
	It is not hard to see why conditions 1 to 5 suffice for the correctness of the reduction.
\end{proof}

\begin{remark}
	We note that Theorem \ref{thm:sat-is-undec} claims that the reduction works with at least two \emph{weights}, while our use of Lemma \ref{lem:closed-under-and} requires several different \emph{values}. 
	Weights are the elements of $R$, while values are possible outputs for the formulas, so they are multisets of strings of weights.
	If at least two weights are available, then it is not hard to see that any (finite) number of weights or values can be encoded in a string.
\end{remark}

\section{Conclusion}
We have given a sound and complete axiomatization for each of three
fragments of the weighted monadic second-order logic
\cwMSO.
Furthermore, we have investigated weighted versions of common decision problems for logics,
specifically model checking, satisfiability, and validity.
For the second layer of the logic, $\swMSO$,
these problems are all decidable,
although many of them have non-elementary complexity, inherited
from the corresponding problems for first- and second-order logic.
For the third layer, $\cwMSO$, 
we 
demonstrated 
that the problem of deciding
whether there exists an input that makes two given formulas return the same value is undecidable,
but 
deciding whether two formulas
return the same value for all inputs is 
decidable.

A natural open question of interest is to discover how different concrete semantics affect the decidability of equational satisfiability and validity.
As our results rely on the abstract semantics, one would hope to 
prove general requirements for the structure of concrete semantics, that 
would guarantee the decidability or undecidability of these problems.
Similarly, we hope for a modular way to give complete axiomatizations for concrete semantics, based on the axioms that we gave.
It would also be worthwhile to consider axiomatizing other useful relations of formulas, such as inequality. 
Finally, we observe that 
(C17) essentially reduces equational \cwMSO\ to MSO. Due to our layered axiomatizations, one can avoid using Axiom (C17), as long as the formulas do not have the sum operator.
However, 
we hope for an axiomatization that relies more on the syntax of the formulas.

\section*{Acknowledgments}
This work has been funded by the project ``Open Problems in the Equational Logic of Processes (OPEL)'' (grant no.~196050), the project ``Epistemic Logic for Distributed Runtime Monitoring'' (grant no.~184940), and the project ``MoVeMnt: Mode(l)s of Verification and Monitorability'' (grant no~217987) of the Icelandic Research Fund.

The authors are also thankful to the anonymous reviewers, whose comments have improved this paper.

\bibliographystyle{plain}
\bibliography{bibliography}

\appendix

\subsection{The Remaining Proof of Proposition \ref{prop:swmso-theorems}}

\begin{proof}
	\begin{enumerate}

		  \item[2) ] We have assumed $\Gamma \cup \{\varphi\} \vdash \swform_1 \approx \swform_2$,
		  and we get $\Gamma \cup \{\neg \varphi\} \vdash \swform_2 \approx \swform_2$ by reflexivity,
		  so $(S4)$ gives $\Gamma \vdash \varphi \cond \swform_1 : \swform_2 \approx \swform_2$.
		  Since we have assumed $\Gamma \vdash \varphi$, $(S3)$ gives
		  $\Gamma \vdash \varphi \cond \swform_1 : \swform_2 \approx \swform_1$.
		  Hence $\Gamma \vdash \swform_1 \approx \swform_2$ by symmetry and transitivity.
		  \item[3) ] By reflexivity, we have $\Gamma \vdash \swform \approx \swform$,
		  so ($S1$) gives both $\Gamma \cup \{\varphi\} \vdash \swform \approx \swform$
		  and $\Gamma \cup \{\neg \varphi\} \vdash \swform \approx \swform$,
		  so using ($S4$) we conclude $\Gamma \vdash \varphi \cond \swform : \swform \approx \swform$.
		
		
		    \item[6) ] Assume that $\Gamma \vdash \neg \varphi$.
		    By axiom ($S2$) we get $\Gamma \vdash \varphi \cond \swform_1 : \swform_2 \approx \neg \varphi \cond \swform_2 : \swform_1$,
		    and axiom ($S3$) gives $\Gamma \vdash \neg \varphi \cond \swform_2 : \swform_1 \approx \swform_2$,
		    so $\Gamma \vdash \varphi \cond \swform_1 : \swform_2 \approx \swform_2$.
		    \item[7) ] This is simply an instantiation of the fourth item of this proposition
		    where $\varphi_1 = \varphi_2 = \varphi$,
		    and the other two premises are guaranteed to hold because
		    $\{\varphi, \neg \varphi\}$ is inconsistent.
		    \item[8) ] Assume that $\Gamma \cup \{\varphi\} \vdash \swform_1 \approx \swform_2$
		    and $\Gamma \cup \{\neg \varphi\} \vdash \swform_1 \approx \swform_2$.
		    Then axiom (S4) gives
		    $\Gamma \vdash \varphi \cond \swform_1 : \swform_1 \approx \swform_2$,
		    and the third item of this proposition gives
		    $\Gamma \vdash \varphi \cond \swform_1 : \swform_1 \approx \swform_1$,
		    so $\Gamma \vdash \swform_1 \approx \swform_2$.
		    \item[9) ] Since $\Gamma \cup \{\varphi\} \vdash \varphi$,
		    we get $\Gamma \cup \{\varphi\} \vdash \varphi \cond \swform_1 : \swform_2 \approx \swform_1$
		    by ($S3$).
		\qedhere
	\end{enumerate}
\end{proof}

\subsection{The Proof of Theorem \ref{thm:completeness-swmso}}

\begin{proof}
	We show the soundness of each axiom in turn.
	\begin{description}
		\item[($S1$):]
		Assume that $\swform_1 \sim_\Gamma \swform_2$.
		Since $\sat{\Gamma \cup \{\varphi\}} = \sat{\Gamma} \cap \sat{\varphi}$,
		for any $(w,\sigma) \in \sat{\Gamma \cup \{\varphi\}}$
		we have $(w,\sigma) \in \sat{\Gamma}$,
		and hence $\sat{\swform_1}(w,\sigma) = \sat{\swform_2}(w,\sigma)$ by assumption.
		We conclude that $\swform_1 \sim_{\Gamma \cup \{\varphi\}} \swform_2$.
		\item[($S2$):]
		$\sat{\varphi \cond \swform_1 : \swform_2}(w,\sigma) = \sat{\swform_1}(w,\sigma)$ if and only if
		\[\sat{\neg \varphi \cond \swform_2 : \swform_1}(w,\sigma) = \sat{\swform_1}(w,\sigma),\]
		and likewise
		$\sat{\varphi \cond \swform_1 : \swform_2}(w,\sigma) = \sat{\swform_2}(w,\Sigma)$ if and only if
		$
		\sat{\neg \varphi \cond \swform_2 : \swform_1}(w,\sigma) = \sat{\swform_2}(w,\sigma).
		$
		It follows that
		$\varphi \cond \swform_1 : \swform_2 \sim_\Gamma \neg \varphi \cond \swform_2 : \swform_1$.
		\item[($S3$):]
		Assume $\Gamma \vdash \varphi$.
		By Corollary \ref{cor:completeness-mso},
		this means that $\Gamma \models \varphi$.
		Hence, for any $(w,\sigma) \in \sat{\Gamma}$
		we have $(w,\sigma) \models \varphi$,
		so $\varphi \cond \swform_1 : \swform_2 \sim_\Gamma \swform_1$.
		\item[($S4$):]
		Assume that $\swform \sim_{\Gamma \cup \{\varphi\}} \swform_1$ and 
		$\swform \sim_{\Gamma \cup \{\neg \varphi\}} \swform_2$.
		For every  $(w,\sigma) \in \sat{\Gamma}$, either $(w,\sigma) \in \sat{\varphi}$ or $(w,\sigma) \in \sat{ \neg \varphi}$.
		Therefore, for both cases,
		\[\sat{\varphi \cond \swform_1 : \swform_2}(w,\sigma) = \sat{\swform}(w,\sigma),\]
		and we conclude that 
		$\varphi \cond \swform_1 : \swform_2 \sim_\Gamma  \swform$.
		\qedhere
	\end{description}
\end{proof}

\subsection{The Proof of Theorem \ref{thm:completeness-cwmso}}

\begin{proof}
	We show the soundness of each axiom in turn.
	
	Axiom ($C1$):
	\begin{align*}
		\sat{\cwform + \nil}(w,\sigma) &= \sat{\cwform}(w,\sigma) \munion \sat{\nil}(w,\sigma) \\
		&= \sat{\cwform}(w,\sigma) \munion \emptyset = \sat{\cwform}(w,\sigma).
	\end{align*}
	
	Axiom ($C2$):
	\begin{align*}
		\sat{\cwform_1 + \cwform_2}(w,\sigma) &= \sat{\cwform_1}(w,\sigma) \munion \sat{\cwform_2}(w,\sigma) \\
		&= \sat{\cwform_2}(w,\sigma) \munion \sat{\cwform_1}(w,\sigma) \\
		&= \sat{\cwform_2 + \cwform_1}(w,\sigma).
	\end{align*}
	
	Axiom ($C3$):
	\begin{align*}
		\phantom{{}={}} &\sat{(\cwform_1 + \cwform_2) + \cwform_3}(w,\sigma) \\
		= &\sat{(\cwform_1 + \cwform_2)}(w,\sigma) \munion \sat{\cwform_3}(w,\sigma) \\
		= &(\sat{\cwform_1}(w,\sigma) \munion \sat{\cwform_2(w,\sigma)}) \munion \sat{\cwform_3}(w,\sigma) \\
		= &\sat{\cwform_1}(w,\sigma) \munion (\sat{\cwform_2}(w,\sigma) \munion \sat{\cwform_3}(w,\sigma)) \\
		= &\sat{\cwform_1}(w,\sigma) \munion \sat{\cwform_2 + \cwform_3}(w,\sigma) \\
		= &\sat{\cwform_1 + (\cwform_2 + \cwform_3)}(w,\sigma).
	\end{align*}
	
	Axiom ($C4$): This follows from soundness of \swMSO\ and Lemma \ref{lem:forall}.
	
	Axiom ($C5$): If $y \notin \var(\swform)$, then
	\begin{align*}
		\sat{{\textstyle\prod_x} \swform}(w,\sigma) &= \multiset{\sat{\swform}(w,\sigma[x \mapsto 1] \dots \sat{\swform}(w,\sigma[x \mapsto |w|)} \\
		&= \multiset{\sat{\swform[y/x]}(w,\sigma[y \mapsto 1]) \\ &\phantom{{}={}}\cdots \sat{\swform[y/x]}(w,\sigma[y \mapsto |w|])} \\
		&= \sat{{\textstyle \prod_y} \swform[y/x]}(w,\sigma).
	\end{align*}
	
	Axioms ($C6$)--($C9$): The proof of these is similar to
	the corresponding proofs in Theorem \ref{thm:completeness-swmso}.
	
	Axiom ($C10$): We evaluate by cases. If $(w,\sigma) \models \varphi$, then
	\begin{align*}
		\phantom{{}={}} &\sat{(\varphi \cond \cwform' : \cwform'') + \cwform}(w,\sigma) \\ = &\sat{(\varphi \cond \cwform' : \cwform'')}(w,\sigma) \munion \sat{\cwform}(w,\sigma) \\
		= &\sat{\cwform'}(w,\sigma) \munion \sat{\cwform}(w,\sigma) \\
		= &\sat{\cwform' + \cwform}(w,\sigma).
	\end{align*}
	and
	\begin{align*}
		\sat{\varphi \cond (\cwform' + \cwform) : (\cwform'' + \cwform)}(w,\sigma) &= \sat{\cwform' + \cwform}(w,\sigma)
	\end{align*}
	Likewise, if $(w,\sigma) \models \neg \varphi$, then
	\begin{align*}
		\phantom{{}={}} &\sat{(\varphi \cond \cwform' : \cwform'') + \cwform}(w,\sigma) \\ = &\sat{(\varphi \cond \cwform' : \cwform'')}(w,\sigma) \munion \sat{\cwform}(w,\sigma) \\
		= &\sat{\cwform''}(w,\sigma) \munion \sat{\cwform}(w,\sigma) \\
		= &\sat{\cwform'' + \cwform}(w,\sigma)
	\end{align*}
	and
	\begin{align*}
		\sat{\varphi \cond (\cwform' + \cwform) : (\cwform'' + \cwform)}(w,\sigma) &= \sat{\cwform'' + \cwform}(w,\sigma).
	\end{align*}
	
	For completeness,
	assume $\cwform_1 \sim_\Gamma \cwform_2$.
	By Lemma \ref{lem:normal-form}, there exist formulas $\cwform_1'$ and $\cwform_2'$,
	both in normal form, such that $\Gamma \vdash \cwform_1 \approx \cwform_1'$
	and $\Gamma \vdash \cwform_2 \approx \cwform_2'$.
	By soundness,
	this implies $\cwform_1 \sim_\Gamma \cwform_1'$ and $\cwform_2 \sim_\Gamma \cwform_2'$,
	so $\cwform_1' \sim_\Gamma \cwform_2'$.
	Since these are in normal form, Lemma \ref{lem:nf-completeness}
	gives $\Gamma \vdash \cwform_1' \approx \cwform_2'$,
	and by symmetry and transitivity, this implies $\Gamma \vdash \cwform_1 \approx \cwform_2$.
\end{proof}

\subsection{The Full Proof for the Undecidability of Equational Satisfiability}

We now present the full construction of the reduction that proves that equational satisfiability of \cwFO\ (and therefore also of \cwMSO) is undecidable.
We take special care to only use \cwFO\ formulas, and therefore we use a special construction for recording the positions where each symbol appears in a configuration.

Fix a pair $(w,\sigma)$.
We use a series of formulas and equations to express that a $(w,\sigma)$ encodes the computation of a Turing Machine that halts.
Therefore, the question of whether there is such a pair that satisfies the resulting set of equations is undecidable.
Let $T = (Q,\Sigma,\delta,q_0,H)$ be a Turing Machine, where $Q$ is a finite set of states, $\Sigma$ is the set of symbols that the machine uses, $\delta: Q \times \Sigma \to Q \times \Sigma \times \{L,R\}$ is the machine's transition function, $q_0$ is the starting state, and $H$ is the halting state of $T$. 
Let $\tend, \mi, \cnt$ be special symbols not in $\Sigma$.
A configuration of $T$ is represented by a string of the form $s_1 q s_2 \tend$, where $q$ is the current state for the configuration,  $s_1s_2$ is the string of symbols in the tape of the machine, and the head is located at the first symbol of $s_2$; $\tend$ marks the end of the configuration.
Let $x_0 \in \Sigma^*$ be an input of $T$. 


We use every $s \in Q \cup \Sigma \cup \{\tend, \cnt, \mi \}$
as a predicate, so that $s(x)$ is true
if and only if the symbol $s$ is in position $x$.
Let $[0] = \cnt$, and for every $i \geq 1$, let $[i] = \cnt^{2^{i-1}} \mi \cnt^{2^{i-1}}$, so that in $[i]$, $\cnt$ appears exactly $2^{i}$ times.
Then, for every string $y_0y_1\cdots y_j \in (Q \cup \Sigma \cup \{\tend \})^j$, let $[y_0y_1\cdots y_j] = [0]y_0[1]y_1\cdots [j]y_j$.
%
We want to describe that $(w,\sigma)$ encodes a halting run of $T$ on $x_0$.
In other words, we must ensure that $(w,\sigma)$
is $[c_0] \cdots [c_k]$, where $c_0 \cdots c_k$ is a sequence of configurations of $T$, such that
$c_0$ is $q_0 x_0 \tend $ and  $c_k$ is $s_1Hs_2 \tend$, where $s_1,s_2 \in \Sigma^*$.

We must therefore ensure that the following conditions hold:

\begin{enumerate}
	\item 
	$(w,\sigma)$ is of the form $[c_0][c_1]\cdots [c_k]$, where each $c_i$  has exactly one $\tend$, at the end;
	\item 
	each $c_i$ is of the form $s_1 q s_2 \tend$, where $q \in Q$, $s_1s_2 \in \Sigma^*$, and $s_2 \neq \varepsilon$;
	\item $c_0 = q_0 x_0 \tend$;
	\item $c_k = s_1 H s_2 \tend$ for some $s_1,s_2$; and
	\item for every $0 \leq i < k$, $c_{i+1}$ results from $c_i$ by applying the transition function $\delta$. This condition can be further refined into the following subconditions. For every $0 \leq i < k$, if 
	$c_i =  x_1 ~x_2 \cdots ~x_{r} ~q_i ~y_{1} ~y_{2} \cdots ~y_{r'} \tend$, then:
	\begin{enumerate}
		\item if $\delta(q_i,y_{1}) = (q,x,L)$ and $r >0$, then $c_{i+1} = x_1 ~x_2 \cdots ~x_{r-1} ~q ~x_{r} ~x ~y_{2} \cdots ~y_{r'} \tend,$
		\item if $\delta(q_i,y_{1}) = (q,x,L)$ and $r =0$, then $c_{i+1} = q ~x ~y_2 \cdots ~y_{r'} \tend$, 
		\item if $\delta(q_i,y_{1}) = (q,x,R)$ and $r' > 1$, then $c_{i+1} = x_1 ~x_2 \cdots ~x_{r} ~x ~q ~y_{2} \cdots ~y_{r'} \tend$, and 
		\item if $\delta(q_i,y_{1}) = (q,x,R)$ and $r' = 1$, then $c_{i+1} = x_1 ~x_2 \cdots ~x_{r} ~x ~q ~\blank \tend$, where $\blank \in \Sigma$ is the symbol used by $T$ for a blank space.
	\end{enumerate}
\end{enumerate}



We now explain how to represent each of the conditions above with a formula or equation.
We use the following macros, where $0,1 \in R$ are two distinct weights:
\begin{align*}
\sym{x} &\defeq \bigvee_{s \in Q \cup \Sigma \cup \{\tend \}} s(x) \\
\nxt{x,y} &\defeq  (\neg (y \leq x)) \land \forall z.~ ( z \leq x \lor y \leq z ) \\
\first{x} &\defeq \forall y.~ y\geq x \\
\firstc{x} &\defeq \first{x} \lor \exists y.~ \tend(y) \land \nxt{y,x} \\
\last{x} &\defeq \forall y.~ y\leq x \\
\nxtsymb{x,y} &\defeq  (\neg y \leq x) \land \forall z.~ (\neg \sym{z} \lor z \leq x \lor y \leq z ) \\
\valone{x} &\defeq \prod_y (x=y) \cond 1 : 0 \\
\valv{x}{s} &\defeq \prod_y (x=y) \cond s : 0 \\
\pos{x}{v} &\defeq \forall y.~ \neg \sym{y} \lor x \leq y \cond v : \\
&\phantom{{}\defeq{}}\sum_{y} \exists z.~ \sym{z} \land \nxtsymb{z,x} \land \\
&\phantom{{}\defeq{}}z \leq y \leq x \land \cnt(y) \cond v : v_0
\end{align*}

Intuitively, formula $\pos{x}{v}$ counts how many $\cnt$s appear right before position $x$.
We note that, as long as condition 1 is satisfied, for each symbol $s$ that appears in the set $S$ of positions in a configuration, $S$ is uniquely identified by $\sum_{i \in S} 2^i$.
Furthermore, for each configuration, $\pos{x}{v}$ constructs a map from each such $s$ (represented by the returned value $v$) to  $\sum_{i \in S} 2^i$. Therefore, the way that we will use $\pos{x}{v}$ (see how we deal with condition 5, below) gives a complete description of each configuration.

We will use \valzer\ as the default (negative) value in conditionals,
and as such $\varphi \cond v$ is used as shorthand for $\varphi \cond v : \valzer$.
Furthermore, we assume that $:$ binds to the nearest $\cond$, and therefore, $\varphi_1 \cond \varphi_2 \cond \cwform_1 : \cwform_2$ means $\varphi_1 \cond \varphi_2 \cond \cwform_1 : \cwform_2 : \valzer$, which can be uniquely parsed as $\varphi_1 \cond (\varphi_2 \cond \cwform_1 : \cwform_2) : \valzer$.

We now proceed to describe, for each of the conditions 1-6,
a number of equations that ensure that this condition holds.
By an equation, we mean something of the form $\cwform \eqeq \cwform'$,
where $\cwform$ and $\cwform'$ are $\cwFO$ formulas.
Notice that by Lemma \ref{lem:embedMSOtoEq},
any first-order formula can be turned into an equation
(as long as we have at least three distinct weights),
so for some conditions we give a first-order formula rather than an equation.

A number of equations $\cwform_i \eqeq \cwform_i'$ ensures that the condition holds
in the sense that for any $(w,\sigma)$,
$\sat{\cwform_i}(w,\sigma) = \sat{\cwform_i'}(w,\sigma)$ for each $i$
if and only if $(w,\sigma)$ satisfies the condition.
By Lemma \ref{lem:closed-under-and},
once we have a number of equations $\cwform_i \eqeq \cwform_i'$
that together ensure that all conditions are satisfied,
the equation $\sum_i \cwform_i \eqeq \sum_i \cwform_i'$ ensures that all conditions are satisfied,
so that $(w,\sigma)$ satisfies the conditions if and only if
$\sat{\sum_i \cwform_i}(w,\sigma)= \sat{\sum_i \cwform_i'}(w,\sigma)$.

\begin{enumerate}
	\item 
	We describe this condition using a first order formula and two equations.
	The formula makes sure that the word is of the form 
	$d_0d_1\cdots d_k$, where each $d_i$ is of the form 
	$\cnt y_0\cnt^{n_1}\mi\cnt^{n_2}y_1\cdots \cnt^{n_{K-1}} \mi \cnt^{n_K}y_K$, where $n_1 \cdots n_K$ is a sequence of non-negative integers and $y_K = \tend$:
	\begin{align*}
  	& (\forall x.~ \neg \firstc{x} \lor (\cnt(x) \land \exists y.~  \nxt{x,y} \land \sym{y} )) \\
  	&\land (\exists x.~ \last{x} \land \tend(x)) \\
  	&\land \forall x.~ \neg \sym{x} \lor \last{x} \\
    &\lor \exists y,z.~ \nxtsymb{x,z} \land x \leq y \leq z \land \mi(y) \\ 
  	&\land \forall i.~ i \leq x \lor z \leq i \lor i = y \lor \cnt(i) .
	\end{align*}
	The following equation ensures that the same number of $\cnt$'s appear before and after $\mi$:
	\begin{align*}
  	&\sum_{x} \mi(x) \cond \sum_{y} \exists x',y'.~ x' \leq y \leq x \leq y' \\
    &\land \nxtsymb{x',y'} \land \cnt(y) \cond \valone{x} \\
  	&\eqeq \\
  	&\sum_{x} \mi(x) \cond \sum_{y} \exists x',y'.~ x' \leq x \leq y \leq y' \\
    &\land \nxtsymb{x',y'} \land \cnt(y) \cond  \valone{x} 
	\end{align*}
	Finally, in the context of the formula and equation above, the following equation ensures that for every $1 \leq i \leq K$, $n_i = 2^{i}$:
	\begin{align*}
  	&\sum_{x} \sym{x} \land \neg \last{x} \cond \\
  	&(\forall y.~ \neg \sym{y} \lor x \leq y \cond \valone{x}) : \\ 
  	&\sum_{y} \exists z.~ \sym{z} \land \nxtsymb{z,x} \\
    &\land z \leq y \leq x \land \cnt(y) \cond \valone{x}) \\
  	&\eqeq \\
    &\sum_{x} \sym{x} \land \neg \last{x} \cond \\
    &\sum_{y} \exists z.~ \mi(z) \land \nxtsymb{x,z} \\
    &\land x \leq y \leq z \land \cnt(y) \cond \valone{x}
	\end{align*}
	\item 
	For this condition, it suffices to require that between each pair of state symbols, there is a $\tend$ symbol, and between two occurrences of $\tend$, there is a state symbol, and right after each state symbol, there is a symbol from the alphabet. The following first-order formula expresses this:
	\begin{align*}
  	&\forall x,y.~ \neg \tend(x) \lor \neg \tend(y) \lor \neg x \leq y \\
    &~~~~\lor \exists z.~ x \leq z \leq y \land \bigvee_{q \in Q} q(z) \\
  	&\land \forall x,y.~ \neg \bigvee_{q \in Q} q(x) \lor \neg \bigvee_{q \in Q} q(y) \\
    &~~~~\lor \neg x \leq y  \lor \exists z.~ x \leq z \leq y \land \tend(z) \\
  	& \land \forall x.~ \neg  \bigvee_{q \in Q} q(x) \lor \exists y.~ \sym{y} \\
    &~~~~\land \nxtsymb{x,y} \land \neg \tend(y)
	\end{align*}
	\item This condition can be imposed by a first order formula that explicitly describes $c_0$.
	\item By the first-order formula 
	$\exists x.~ H(x) \land \forall y.~ \neg \tend(y) \lor \neg y \geq x \lor \last{y}$.
	\item 
	We demonstrate how to treat case a. The other cases are analogous. 
	Fix a transition $(q,s,q's',L) \in \delta$ and $d \in \Sigma$.
	We use the following shorthand.
	\begin{align*}
    \tr{x,y,z} &\defeq q(y) \land y \leq x \\
    &\phantom{{}\defeq{}}\land \forall y'.~ \neg (\tend(y') \land y \leq y' \leq x) \\
    &\phantom{{}\defeq{}}\land s(z) \land \nxtsymb{y,z} \quad \text{and} \\
    \trprime{x,y,z} &\defeq q'(y) \land y \leq x \\
    &\phantom{{}\defeq{}}\land \forall y'.~ \neg (\tend(y') \land y \leq y' \leq x) \\ &\phantom{{}\defeq{}}\land s'(z) \land \nxtsymb{y,z}
  \end{align*}
	Let $s_1,s_2,\ldots,s_m$ be a permutation of $\Sigma$.
	We use the following equation:
	\begin{align*}
	&\sum_{x} \tend(x)
  {\land}
  \exists y. (\tend(y) {\land} \neg y {\leq} x )
	{\land} 
	\exists y,z. \tr{x,y,z}
	{\cond} \\
    &~~\sum_{y} \sym{y} {\land} y {\leq} x {\land} \forall z. (x {\leq} z \lor \neg y {\leq} z \lor \neg \tend(x)) 
    {\cond} \\
      &~~~~q(y)
      {\cond}
        \pos{y}{\valv{x}{q}}:	s_1(y)
        {\cond}
          \pos{y}{\valv{x}{s_1}}: s_2(y)
          {\cond}
            \pos{y}{\valv{x}{s_2}}: \\
            &~~~~\cdots : s_m(y)
            {\cond}
              \pos{y}{\valv{x}{s_m}} \\
  &\eqeq \\
	&\sum_{x} \tend(x)
  {\land}
  \exists y. (\tend(y) {\land} \neg y {\leq} x)
  {\land}
  \exists y,z. \tr{x,y,z}
	{\cond} \\
    &~\sum_{y} \sym{y} {\land} x {\leq} y {\land} \forall z. (z {\leq} x \lor \neg z {\leq} y \lor \neg \tend(x)) 
    {\cond} \\
    &~q'(y) {\land} \exists z. \sym{z} {\land} \nxtsymb{y,z} {\land} s_1(z)
      {\cond}
      \pos{y}{\valv{x}{s_1}} {:} \\
      &~q'(y) {\land} \exists z. \sym{z} {\land} \nxtsymb{y,z} {\land} s_2(z)
        {\cond}
        {\pos{y}{\valv{x}{s_2}}} {:} {\cdots} \\
        &~q'(y) {\land} \exists z. \sym{z} {\land} \nxtsymb{y,z} {\land} s_m(z)
          {\cond}
          \pos{y}{\valv{x}{s_m}} {:} \\
          &~\exists z. q'(z) {\land} \nxtsymb{z,y}
            {\cond}
            \pos{y}{\valv{x}{q}} {:} \\
            &~\exists z,z'. q'(z) {\land} \nxtsymb{z,z'} {\land} \nxtsymb{z',y} 
              {\cond}
              \pos{y}{\valv{x}{s}} {:} \\
              &~s_1(y)
                {\cond}
                \pos{y}{\valv{x}{s_1}} {:} s_2(y)
                  {\cond}
                  \pos{y}{\valv{x}{s_2}} {:} \cdots {:} s_m(y) {\cond} \pos{y}{\valv{x}{s_m}}
	\end{align*}
	The rightmost part of the equation ensures that if the effects of the transition are reversed, then all symbols are in the same place as in the previous configuration.
	We can then make sure that the state has changed to $q'$ and the symbol to $s'$ with the following formula:
	\begin{align*}
  	\forall x,y. &\neg (\tend(x) \land \tend(y) \land \neg y {\leq} x \land \exists x_q,x_s. \tr{x,x_q,x_s}) \\
    &\lor \exists y_q,y_s.~ \trprime{y,y_q,y_s}.
	\end{align*}
\end{enumerate}


\begin{proof}[Proof of Theorem \ref{thm:sat-is-undec}]
	We use a reduction from the Halting Problem, as it is described above.
	It is not hard to see why conditions 1 to 5 suffice for the correctness of the reduction, and it is not hard to see that the formulas we construct ensure the corresponding conditions.
	Furthermore, notice that all formulas are \cwFO\ formulas, and therefore the problem is undecidable for \cwFO, but also for \cwMSO, which is a more general case.
\end{proof}

\end{document}